\let\save@mathaccent\mathaccent
\newcommand*\if@single[3]{%
  \setbox0\hbox{${\mathaccent"0362{#1}}^H$}%
  \setbox2\hbox{${\mathaccent"0362{\kern0pt#1}}^H$}%
  \ifdim\ht0=\ht2 #3\else #2\fi
  }
\newcommand*\rel@kern[1]{\kern#1\dimexpr\macc@kerna}
\newcommand*\widebar[1]{\@ifnextchar^{{\wide@bar{#1}{0}}}{\wide@bar{#1}{1}}}
\newcommand*\wide@bar[2]{\if@single{#1}{\wide@bar@{#1}{#2}{1}}{\wide@bar@{#1}{#2}{2}}}
\newcommand*\wide@bar@[3]{%
  \begingroup
  \def\mathaccent##1##2{%
    \let\mathaccent\save@mathaccent
    \if#32 \let\macc@nucleus\first@char \fi
    \setbox\z@\hbox{$\macc@style{\macc@nucleus}_{}$}%
    \setbox\tw@\hbox{$\macc@style{\macc@nucleus}{}_{}$}%
    \dimen@\wd\tw@
    \advance\dimen@-\wd\z@
    \divide\dimen@ 3
    \@tempdima\wd\tw@
    \advance\@tempdima-\scriptspace
    \divide\@tempdima 10
    \advance\dimen@-\@tempdima
    \ifdim\dimen@>\z@ \dimen@0pt\fi
    \rel@kern{0.6}\kern-\dimen@
    \if#31
      \overline{\rel@kern{-0.6}\kern\dimen@\macc@nucleus\rel@kern{0.4}\kern\dimen@}%
      \advance\dimen@0.4\dimexpr\macc@kerna
      \let\final@kern#2%
      \ifdim\dimen@<\z@ \let\final@kern1\fi
      \if\final@kern1 \kern-\dimen@\fi
    \else
      \overline{\rel@kern{-0.6}\kern\dimen@#1}%
    \fi
  }%
  \macc@depth\@ne
  \let\math@bgroup\@empty \let\math@egroup\macc@set@skewchar
  \mathsurround\z@ \frozen@everymath{\mathgroup\macc@group\relax}%
  \macc@set@skewchar\relax
  \let\mathaccentV\macc@nested@a
  \if#31
    \macc@nested@a\relax111{#1}%
  \else
    \def\gobble@till@marker##1\endmarker{}%
    \futurelet\first@char\gobble@till@marker#1\endmarker
    \ifcat\noexpand\first@char A\else
      \def\first@char{}%
    \fi
    \macc@nested@a\relax111{\first@char}%
  \fi
  \endgroup
}
\newtheorem{lemma}{Lemma}
\newtheorem{proposition}{Proposition}
\newtheorem{remark}{Remark}
\begin{document}



\title{Linear and nonlinear dynamics of isospectral granular chains}

\author{R. Chaunsali}
\affiliation{Aeronautics and Astronautics, University of Washington, Seattle, WA 98195-2400}
\author{H. Xu \footnote{The first two authors contributed equally.}}
\affiliation{Department of Mathematics and Statistics, University of Massachusetts, Amherst, MA 01003-4515}
\author{J. Yang}
\affiliation{Aeronautics and Astronautics, University of Washington, Seattle, WA 98195-2400}
\author{P. G. Kevrekidis \footnote{kevrekid@gmail.com}}
\affiliation{Department of Mathematics and Statistics, University of Massachusetts, Amherst, MA 01003-4515}




\date{\today}

\begin{abstract}
We study the dynamics of isospectral 
granular chains that are highly tunable due to the nonlinear Hertz contact law interaction between the granular particles. The system dynamics can thus be tuned easily from being linear to strongly nonlinear by adjusting the initial compression applied to the chain. In particular, we introduce both discrete and continuous spectral transformation schemes to generate a family of granular chains that are isospectral in their linear limit. Inspired by the principle of supersymmetry in quantum systems, we also introduce a methodology to add or remove certain eigenfrequencies, and we demonstrate numerically that the corresponding physical system can be constructed in the setting of one-dimensional granular crystals. In the linear regime, we highlight the similarities in the elastic wave transmission characteristics of such isospectral systems, and emphasize that the presented mathematical framework allows one to suitably tailor the wave transmission through a general class of granular chains, both ordered and disordered. Moreover, we show how the dynamic response of these structures deviates from its linear limit as we introduce Hertzian nonlinearity in the chain and how nonlinearity 
breaks the notion of linear isospectrality. 

\end{abstract}
\pacs{45.70.-n 05.45.-a 46.40.Cd}

\keywords{isospectral systems, supersymmetry, granular chains}
\maketitle
\section{Introduction}
The science of manipulating stress waves propagating through discrete media has attracted significant research interest in recent years. A granular crystal can be such a medium, which is a periodic structure composed of discrete particles that interact elastically as per Hertz's contact law \cite{physicstoday, Nesterenko, Sen}. At a fundamental level, the periodicity in granular crystals causes Bragg scattering, which allows them to filter a range of elastic waves, thereby forming frequency band structures \cite{Brillouin}. Furthermore, such structures are special in the sense that the system dynamics can be tuned from linear to highly nonlinear regimes, simply by changing the initial static compression applied externally~\cite{Nesterenko}. Thus, due to the tunable interplay of nonlinearity and discreteness, these systems have become popular testbeds for demonstrating a wide range of nonlinear wave phenomena, for example, solitary waves \cite{physicstoday,Nesterenko, Coste, Sen, Jayaprakash}, shock waves \cite{Hong, Daraio, Doney}, discrete breathers \cite{physicstoday,Boechler, Chong}, energy cascades \cite{Kim1}, and nanoptera \cite{Kim2,atanas}.

By leveraging this remarkable tunability, we direct our efforts to study isospectral systems based on granular media. Isospectral systems refer to a family of systems with identical spectra. In mechanical settings, these isospectral systems would represent a group of structures with identical eigenfrequencies. The existence of isospectral systems is closely linked to a non-uniqueness problem, which was articulately posed as ``Can we hear the shape of a drum?'' in the famous paper of M. Kac~\cite{Kac}. The non-uniqueness lies in the possibility of constructing multiple structures (shapes) from a given set of eigenfrequencies (hearing) \cite{Gordon}. In the past, this avenue has been primarily investigated as an inverse problem \cite{Gladwell, Gottlieb, Giraud}.  It is only recent that such systems, in conjunction with the systems inspired by a closely related principle of supersymmetry \cite{Cooper}, have gained renewed interest in a broad spectrum of research communities, especially in the context of designing strategic architectures to control wave transmission characteristics. For example, transparent interfaces \cite{Longhi1}, mode converters \cite{Heinrich}, reflection-less bent waveguides \cite{Campo}, and supersymmetric optical structures~\cite{extras1} have been proposed. These concepts have also driven fundamental studies on Bloch waves in defective \cite{Longhi2} and disordered \cite{Yu} lattices. However, wave transmission characteristics in mechanical isospectral systems have not been
systematically explored. Moreover, the effect of nonlinearity in the dynamics of initially isospectral systems is also a fundamental question to be answered.

Owing to the fact that a tunable granular system can be used as a fertile testbed to explore both of these issues, our aim in the present work is to harness the wealth of mathematical schemes, such as discrete and continuous spectral transformations, to generate a family of isospectral granular chains. In the linear regime, we evaluate and compare elastic wave transmission characteristics of these systems. Inspired by the principle of supersymmetry, we extend our mathematical framework to construct a class of granular chains that have specific eigenfrequencies added or removed.  In this way, we show that the entire framework gives us remarkable freedom in constructing a general class of granular chains with desired wave transmission characteristics in the linear regime. 
In the nonlinear regime, we focus on studying the versatile dynamics of isospectral granular chains, and discuss how the dynamics deviates from the linear expectation of isospectrality with the extent of nonlinearity in the medium.

This manuscript is structured as follows: in Section II, we present a mathematical model of granular systems and formulate their equations of motion. We then linearize the system, which we use for constructing isospectral chains. In Section III, we explain spectral transformation schemes to get isospectral spring-mass systems. In Section IV, we discuss the possibility of constructing a family of isospectral systems in the setting of granular chains, such that their linearized systems share the same eigenfrequencies. Additionally, we introduce a strategy for removing or adding eigenfrequencies from the linearized system of a granular chain. In Section V,  both linear and nonlinear dynamics of the family of such granular chains is studied and compared, while in section VI we summarize our findings and present our conclusions, as well as a number of directions for future study.

\section{Theoretical setup}
In this work we are interested in granular chains consisting of spherical beads that interact through the Hertzian contact law (i.e., $F \sim d^{3/2}$, where $F$ and $d$ are the beads' compressive force and displacement, respectively)~\cite{Johnson}. In particular, we consider a granular chain, denoted by (NS-1), that has $n$ spherical beads. Suppose the chain is bounded by half-spaces from both ends, and the chain starts with precompression between contacting beads and between the end bead and the half-space as well. For simplicity, we assume that all beads and the half-spaces are made from the same material, so that they share the same density $\rho$, Young's modulus $E$, and Poisson's ratio $\nu$. However, the beads can have different radii $(r_1,r_2,...,r_n)$, which will lead to different masses $(m_1,m_2,...,m_n)$ and different inter-particle contact stiffness coefficients $(k_1,k_2,...,k_{n+1})$, as shown below:
\begin{eqnarray}
m_j &=& \frac{4}{3}\pi\rho r_j^3, \quad 1\leq j \leq n \\
k_1 &=& \frac{4}{3}E_*\sqrt{r_1}, \\
k_j &=& \frac{4}{3}E_*\sqrt{\frac{r_{j-1}r_j}{r_{j-1}+r_j}}, \quad 2\leq j \leq n \\
k_{n+1} &=& \frac{4}{3}E_*\sqrt{r_n},
\end{eqnarray}

where $E_*$ is $\frac{2E}{3(1-\nu^2)}$. Note that $m_j$'s and $k_j$'s are solely functions of the radii, provided that the material of the beads do not change.

Then the equations of motion for the granular chain can be written as:
\begin{equation}
\label{eqn0_gran_1}
m_j \ddot{z}_j=k_j(d_j+z_{j-1}-z_j)_+^p-k_{j+1}(d_{j+1}+z_j-z_{j+1})_+^p, \quad 1\leq j \leq n\\
\end{equation}

where $z_0=z_{n+1}=0$, $p=\frac{3}{2}$ due to the Hertzian contact, $d_j$ denotes the precompression between $(j-1)$-th and $j$-th beads, $z_j$ is the displacement of the $j$-th bead, and $(a)_+:=\max(a,0)$.



If $z_1=z_2=...=z_n=0$ is a steady-state solution to the system (i.e., $z_j$ is measured with respect to its equilibrium position), the precompression should satisfy
$$k_1 d_1^p=k_2 d_2^p=...=k_{n+1} d_{n+1}^p=C,$$
where $C$ is a constant controlling the static precompression applied to the granular chain. To study the linear limit of the system,
we assume the displacements $z_j \ll d_j$ and linearize the equations~(\ref{eqn0_gran_1}) as:
\begin{equation}
\label{eqn0_ln_1}
m_j\ddot{z}_j=K_{j}(z_{j-1}-z_j)-K_{j+1}(z_j-z_{j+1}), \quad 1\leq j\leq n
\end{equation}
where $z_0=z_{n+1}=0$ and $K_j=p d_j^{p-1} k_j=p k_j^{1/p} C^{(p-1)/p}$ for $1\leq j\leq n+1$.

We notice that this set of equations actually describes a (linear) spring-mass system, which we denote as (S-1), with masses $(m_1,m_2,...,m_n)$ and springs $(K_1,K_2,...,K_{n+1})$. The equations~(\ref{eqn0_ln_1}) can also be written using matrices as
\begin{equation}
\label{eqn0_ln_matrix}
M \ddot{\boldsymbol{z}}+B \boldsymbol{z}=0
\end{equation}
where $M=\left(
    \begin{array}{ccccc}
        m_1 & 0 & 0 & ... & 0\\
        0 & m_2 & 0 & ... & 0\\
        ... & ... & ... & ... & ...\\
        0 & 0 & ... & 0 & m_n
    \end{array}
\right)$ and
$B=\left(
    \begin{array}{ccccc}
        K_1+K_2 & -K_2 & 0 & ... & 0\\
        -K_2 & K_2+K_3 & -K_3 & ... & 0\\
        ... & ... & ... & ... & ...\\
        0 & 0 & ... & -K_n & K_n+K_{n+1}
    \end{array}
\right)$.\\


\noindent It is known that isospectral spring-mass systems can be constructed such that they bear the same natural frequencies as those in (S-1)~\cite{Gladwell}. In the coming Section III, we briefly review this approach and upgrade it by introducing a continuous isospectral transformation. Then we take a further step to tailor and apply the enhanced method for constructing isospectral granular chains, which is described in Section IV.

\section{Isospectral spring-mass system}
\label{section_isospectral_linear}

To find the eigenfrequencies of the spring-mass system (S-1), which is described by~(\ref{eqn0_ln_1}),  we set $z_j=Z_j e^{i\omega t}$ and $\lambda=\omega^2$ in equation~(\ref{eqn0_ln_matrix}) and solve the eigenvalue system $$(B-\lambda M)\boldsymbol{Z}=0.$$
Let $G=M^{1/2}$ and $H=G^{-1}B G^{-1}$, then the eigenvalue problem yields $$(H-\lambda I)(G\boldsymbol{Z})=0,$$
where the eigenvalues $\lambda$ of $H$ give the eigenfrequencies ($\omega=\pm\sqrt{\lambda}$) of the spring-mass system. Here we call $H$ the associated matrix of the spring-mass system (S-1), and the following remarks explain the relationships between a spring-mass system and its associated matrix.

\begin{remark}
\label{remark_mass_spring_1}
If a spring-mass system has positive masses and spring constants, i.e., $m_i>0$ for $1\leq i\leq n$ and $K_j>0$ for $1\leq j\leq n+1$, then its associated matrix $H$ satisfies the following conditions:
\begin{itemize}
\item[(A1)] $H$ is a real symmetric tridiagonal matrix;
\item[(A2)] all entries on the main diagonal are real, and all off-diagonal entries are negative;
\item[(A3)] $H$ is positive definite. (See Appendix for proof)
\end{itemize}
In particular, for the linearized system of a granular chain system, its associated matrix $H$ satisfies all aforementioned conditions.
\end{remark}

\begin{remark}
\label{remark_mass_spring_2}
If $\tilde{H}$ is a $n\times n$ matrix and it satisfies conditions (A1)--(A3), then there exists a family of spring-mass systems with positive parameters $(\alpha,\beta)$ such that
\begin{itemize}
\item each spring-mass system in this family has positive masses $(\tilde{m}_1,\tilde{m}_2,...,\tilde{m}_n)$ and spring constants $(\tilde{K}_1,\tilde{K}_2,...,\tilde{K}_{n+1})$;
\item the associated matrix of each spring-mass system in this family is $\tilde{H}$.
\end{itemize}
In particular, if we additionally require that $\frac{\tilde{K}_1}{\sqrt{\tilde{m}_1}}$ and $\frac{\tilde{K}_{n+1}}{\sqrt{\tilde{m}_n}}$ are fixed in the spring-mass system, then there is only one spring-mass system in the family and it satisfies $(\alpha,\beta)={{(\frac{\tilde{K}_1}{\sqrt{\tilde{m}_1}}},} \frac{\tilde{K}_{n+1}}{\sqrt{\tilde{m}_n}})$. (See Appendix for proof)
\end{remark}

\begin{remark}
\label{remark_mass_spring_3}
Let $T_{A,n}$ be the space of $n\times n$ matrices that satisfy conditions (A1)--(A3). Let $L_{\alpha,\beta,n}$ denote the space of spring-mass systems with masses and springs satisfying $\frac{K_1}{\sqrt{m_1}}=\alpha$ and $\frac{K_{n+1}}{\sqrt{m_n}}=\beta$.
If we define a function by mapping a spring-mass system in $L_{\alpha,\beta,n}$ to its associated matrix, then it will be a smooth bijective function connecting $L_{\alpha,\beta,n}$ and $T_{A,n}$.

\end{remark}

As these remarks suggest, it is sufficient to consider the matrices in $T_{A,n}$ for investigating the eigenfrequencies of spring-mass systems. That is to say, we will be able to obtain isospectral spring-mass systems if we can find some transformation that maps within $T_{A,n}$ (i.e., retains the conditions (A1)--(A3)) and keeps the eigenvalues invariant. For isospectral flow that maps the matrix $H$ to $\tilde{H}$, here we consider the following approaches:
\begin{itemize}
\item QR decomposition: $QR=H-\mu I_n$, $\tilde{H}=RQ+\mu I_n$.

\item Cholesky decomposition: $L_1 L_1^T=H-\mu I_n$, $H_1=L_1^T L_1+\mu I_n$, $L_2 L_2^T=H_1-\mu I_n$, $\tilde{H}=L_2^T L_2+\mu I_n$. \\
We perform the process twice here since a single application may not yield
an ``acceptable'' $\tilde{H}$ for some choices of $\mu$. Nevertheless, 
repeating the process a second time can be effectively considered as
a QR decomposition with $Q=L_1 L_2^{-T}$ and $R=L_2^T L_1^T$ and hence
should work for generic values of $\mu$.

%
%


\item Continuous isospectral flow such as Toda flow~\cite{Moser,Deift1}. 
\end{itemize}
While any of these approaches will give us an isospectral transformation (see e.g.~\cite{Gladwell} for details about the approaches using QR decomposition or Cholesky decomposition), we will place special emphasis on continuous isospectral flow in this study.
Despite the generality that the continuous approach offers, there are also practical reasons for this choice:
The first two transformation schemes are discrete in the sense that a family of isospectral systems is not evolved smoothly by varying a parameter in space. That is, though these schemes have been used for constructing isospectral or supersymmetric systems in the past \cite{Gladwell, Heinrich}, in mechanical settings, we observe that these approaches often result in systems with extremely high contrast (variations) in system parameters. This makes it very difficult to practically realize such systems. Therefore, we adopt the continuous isospectral flow, in which we can control one (or more) parameter(s) and generate the isospectral systems without incurring sharp contrasts in system parameters. This gives us extra degrees of freedom in terms of \textit{smoothly} controlling the parameters of the resulting mechanical systems.

Let $T_{\Lambda}$ be the space of all $n\times n$, real, symmetric, tridiagonal matrices with negative subdiagonals and spectrum $\Lambda=\{\lambda_1>\lambda_2>...>\lambda_n>0\}$. Since all eigenvalues in the spectrum are positive, all the matrices in this space are positive definite thus have positive main diagonals, i.e., $T_{\Lambda}\subseteq T_{A,n}$. On the other hand, it has been shown in~\cite{Deift1} that the spectrum of a real, symmetric, tridiagonal matrix with negative subdiagonals must be simple (this implies that any spring-mass system should have distinct eigenfrequencies) thus $T_{A,n}\subseteq \bigcup_{\Lambda}T_{\Lambda}$.
That is to say, $T_{\Lambda}$ is exactly the collection of matrices that bear the spectrum $\Lambda$ in $T_{A,n}$.
In the following Lemma, we review the topology of $T_{\Lambda}$ described in~\cite{Deift1,Deift} which naturally defines a continuous isospectral flow.

\begin{lemma}
\label{lemma_continuous}
The space $T_{\Lambda}$ is diffeomorphic to $S^{n-1}_{>0}:=\{\boldsymbol{x}\in\mathbb{R}^n | \sum_{i=1}^n x_i^2=1, ~x_i>0, 1\leq i\leq n \}$, i.e., there exists a bijective and smooth map $\phi: T_{\Lambda}\to S^{n-1}_{>0}$.
\end{lemma}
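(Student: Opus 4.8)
The plan is to realize the diffeomorphism $\phi$ through the classical inverse spectral theory of Jacobi matrices, mapping a matrix to the first components of its normalized eigenvectors and reconstructing it from those components together with the fixed spectrum $\Lambda$. Concretely, for $H\in T_{\Lambda}$ let $\boldsymbol{u}_k$ be the $\ell^2$-normalized eigenvector associated with $\lambda_k$, and set $\phi(H)=(x_1,\dots,x_n)$ with $x_k:=(\boldsymbol{u}_k)_1$, the first entry of $\boldsymbol{u}_k$, where the sign of each $\boldsymbol{u}_k$ is fixed by requiring $x_k>0$. The matrices in $T_{\Lambda}$ have negative subdiagonals, whereas the orthogonal-polynomial machinery is usually phrased with positive off-diagonals; this mismatch is harmless, since conjugation by a fixed diagonal sign matrix $\mathrm{diag}(\pm1,\dots)$ is an isospectral diffeomorphism converting one convention to the other, so I may freely pass between them.

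First I would check that $\phi$ is well defined and lands in $S^{n-1}_{>0}$. The key observation is that no eigenvector of a tridiagonal matrix with nonvanishing off-diagonal entries can have a zero first component: reading the three-term recurrence row by row, $x_k=0$ forces every entry of $\boldsymbol{u}_k$ to vanish, a contradiction. Hence each $x_k\neq 0$ and the normalization $x_k>0$ is legitimate. Because the normalized eigenvectors form an orthonormal basis, the matrix $U=[\boldsymbol{u}_1\,\cdots\,\boldsymbol{u}_n]$ is orthogonal, so its first row $(x_1,\dots,x_n)$ is a unit vector and $\sum_k x_k^2=1$, giving $\phi(H)\in S^{n-1}_{>0}$. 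Smoothness of $\phi$ follows because every $H\in T_{\Lambda}$ has simple spectrum (the $\lambda_k$ are distinct, as recalled above), so analytic perturbation theory makes the eigenprojections, hence the eigenvectors up to sign, depend real-analytically on the entries of $H$; the sign ambiguity is removed smoothly by the rule $x_k>0$, which is consistent precisely because $x_k$ never vanishes.

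The heart of the argument is the inverse map. Given $\boldsymbol{x}\in S^{n-1}_{>0}$, I would form the spectral measure $d\mu=\sum_k x_k^2\,\delta_{\lambda_k}$, a positive measure with exactly $n$ atoms, and apply the Gram--Schmidt (Lanczos/Stieltjes) orthogonalization of $\{\boldsymbol{e}_1,H\boldsymbol{e}_1,H^2\boldsymbol{e}_1,\dots\}$ to reconstruct the Jacobi matrix $\psi(\boldsymbol{x})$ having $\boldsymbol{e}_1$ as cyclic vector generating $\mu$. The moments $\langle\boldsymbol{e}_1,H^m\boldsymbol{e}_1\rangle=\sum_k x_k^2\lambda_k^m$ are polynomial in the data; since all weights $x_k^2$ are strictly positive and the $\lambda_k$ distinct, these power vectors are independent and the resulting off-diagonal Jacobi parameters are strictly positive (hence negative after the $T_{\Lambda}$ sign convention), so $\psi(\boldsymbol{x})\in T_{\Lambda}$ and $\psi$ depends smoothly on $\boldsymbol{x}$. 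The classical uniqueness statement (a Jacobi matrix is determined by its spectral measure) then shows $\phi$ and $\psi$ are mutually inverse, so $\phi$ is a smooth bijection with smooth inverse, i.e.\ a diffeomorphism.

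I would expect the main obstacle to lie in the inverse direction: verifying that the moment/Gram--Schmidt reconstruction is globally well defined and smooth on all of $S^{n-1}_{>0}$ — in particular that the off-diagonal parameters remain strictly nonzero so $\psi(\boldsymbol{x})$ never leaves $T_{\Lambda}$ — and pinning down that $\phi\circ\psi$ and $\psi\circ\phi$ are the identity, which is exactly the uniqueness content of inverse spectral theory. The forward properties (well-definedness, the unit-norm identity, and smoothness via simplicity of the spectrum) are comparatively routine, and the dimension count is reassuring: a symmetric tridiagonal matrix has $2n-1$ free entries, fixing the $n$ eigenvalues leaves $n-1$ degrees of freedom, matching $\dim S^{n-1}_{>0}=n-1$.
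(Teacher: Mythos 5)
Your proposal is correct and takes essentially the same route as the paper: the paper's $\phi$ is exactly your map $H \mapsto (U_{11},\dots,U_{1n})$ (first components of the normalized eigenvectors, signs fixed positive), and the paper's explicit recursion for the $a_i$, $b_i$, and $U_{i+1,j}$ is precisely the Lanczos/Stieltjes reconstruction from the spectral measure $\sum_k x_k^2\,\delta_{\lambda_k}$ that you invoke, with the negative-subdiagonal convention absorbed into the sign of $b_i$. The only difference is presentational: the paper writes the three-term recurrence explicitly and asserts $b_i\neq 0$ without elaboration, whereas you justify the same nonvanishing (and the smoothness and mutual-inverse claims) by appealing to the classical inverse spectral theory of Jacobi matrices.
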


\begin{proof}
For $H\in T_{\Lambda}$, there exists a unique $U$ such that $H=U D U^T$ and $U U^T=U^T U=I_n$ where $D:=\mathrm{diag}(\lambda_1,\lambda_2,...,\lambda_n)$ and $U_{1j}>0$ for $1\leq j\leq n$. Define $\phi(H):=(U_{11},U_{12},...,U_{1n})$, then it maps a matrix in $T_{\Lambda}$ to a point on the sphere $S^{n-1}_{>0}$.

On the other hand, given a point $\boldsymbol{x}=(x_1,x_2,...,x_n)$ on the sphere $S^{n-1}_{>0}$, we can find its corresponding matrix $H=\phi^{-1}(\boldsymbol{x})$ in the space $T_{\Lambda}$. To be more specific, suppose $U$ is an orthogonal matrix satisfying $U_{1j}=x_j$ for $1\leq j\leq n$ and there exists some $H$ such that $H=U D U^T$, i.e., $H U=U D$. Then we have $H {\rm col}_j(U)= \lambda_j {\rm col}_j(U)$ for $1\leq j\leq n$ and the orthogonality of columns of $U$ leads to the following equations:
\begin{eqnarray}
a_i&=&\sum_{j=1}^n \lambda_j U_{ij}^2, \quad 1\leq i\leq n \\
b_i&=&-\sqrt{ \sum_{j=1}^n[ (\lambda_j-a_i)U_{ij}-b_{i-1}U_{i-1,j} ]^2 }, \quad 1\leq j\leq n-1 \\
U_{i+1,j}&=& \frac{1}{b_i}[ (\lambda_j-a_i)U_{ij}-b_{i-1}U_{i-1,j} ]  , \quad 1\leq j\leq n-1
\end{eqnarray}
where $b_0=b_n=U_{0j}=0$ for $1\leq j\leq n$. By evaluating the equations in the order
\[
a_1\to b_1\to U_{2,j} \to a_2 \to b_2 \to U_{3,j} \to a_3 \to ... \to U_{n,j}\to a_n ,
\]
we can obtain a unique matrix $H=\phi^{-1}(\boldsymbol{x})$. It can be easily shown that $b_i\neq 0$ for $1\leq i\leq n-1$ thus such $H$ will always exist.

\end{proof}

By Lemma~\ref{lemma_continuous}, the matrices in $T_{\Lambda}$ are parametrized by the coordinates in $S^{n-1}_{>0}$ thus every matrix in such a space will be accessible to us through its image point on the sphere. That is to say, by continuously moving the image point $\boldsymbol{x}$ in $S^{n-1}_{>0}$ we can accordingly obtain a continuous isospectral flow of matrices in $T_{\Lambda}$ by $H=\phi^{-1}(\boldsymbol{x})$. Also utilizing Remark~\ref{remark_mass_spring_3}, we have shown that every spring-mass system will have a whole family of isospectral spring-mass systems with parameters $(x_1,x_2,...,x_n) \in S^{n-1}_{>0}$.

\section{Isospectral granular chains}

\subsection{Calculation of isospectral systems}
Given a granular chain (NS-1) whose linearized system is (S-1), we now consider constructing another granular chain (NS-2) such that its linearized system (S-2) has the same eigenfrequencies as those in (S-1). In general, this is a challenging task to achieve, because it is not always the case that a spring-mass system is the linearization of some granular chain (although we have seen that the reverse is true) and our isospectral transformations for spring-mass systems do not necessarily map a linearized granular chain to another linearized granular chain. To be more specific, a granular chain is determined by $n$ free variables $(r_1,r_2,...,r_n)$ while a spring-mass systems has $(2n+1)$ free variables, namely $(m_1,m_2,...,m_n)$ and $(K_1,K_2,...,K_{n+1})$. Since the symmetric tridiagonal matrices in the isospectral transformations have $(2n-1)$ non-zero elements, only $(2n-1)$ degrees of freedom in isospectral spring-mass systems can be kept, as also seen in Remark~\ref{remark_mass_spring_3}.

To enable the construction of a granular chain from a tridiagonal matrix, we would like to extend our consideration to generalized forms of granular systems (Fig.~\ref{fig1}). In particular, motivated in part by the proposals
of~\cite{guillaume}, 
we attach a linear spring to each bead in the granular chain and let the other end of the spring be fixed. 
Note that in Fig.~\ref{fig1}, these linear springs are represented by cylindrical beams mounted on the ground, while the Hertzian contacts among particles are denoted by inter-particle springs. We assume that the wall-mounted springs (i.e., cylindrical beams) impose a potential to our system only in the longitudinal direction of the chain. If the stiffness of the $j$-th wall-mounted spring is $\gamma_j$, and $\gamma_j$ is allowed to vary for $2\leq j\leq n$, then the granular chain system now has $(2n-1)$ degrees of freedom, and it matches the dimension of a $n\times n$ symmetric tridiagonal matrix. Note again that we only vary particles' radii and wall-mounted stiffness values, while keeping 
the elastic modulus and density of beads (i.e., the material) fixed.

\begin{figure}
\includegraphics[width=12cm]{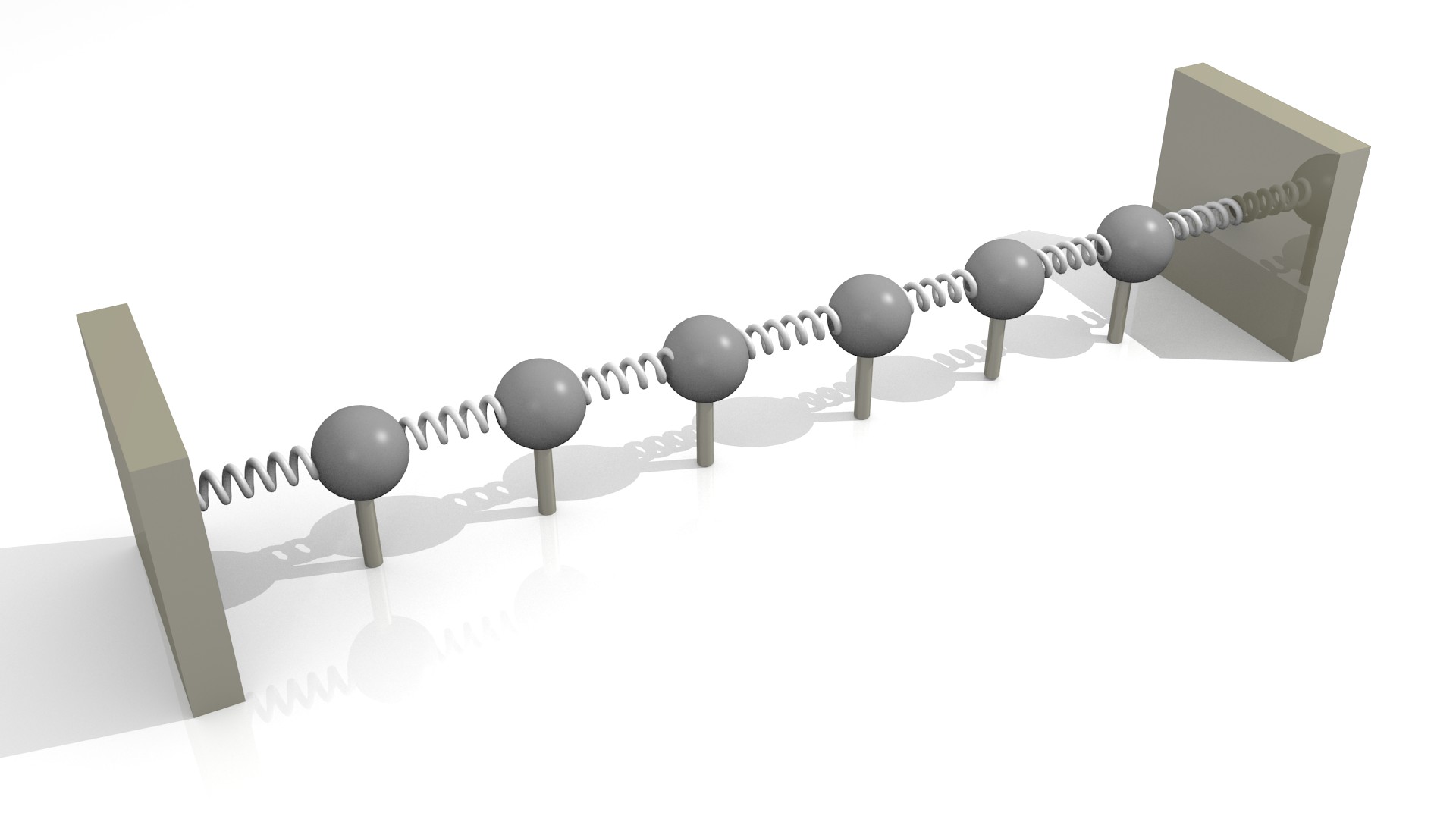}
\caption{General design of the proposed isospectral granular chain. Springs represent the nonlinear Hertz contact interaction between granular beads. Grounded cantilever beam attached to each bead provides a linear stiffness for the longitudinal direction of bead vibration.}
\label{fig1}
\end{figure}

Then, the new equations of motion become:
\begin{eqnarray}
\label{eqn1_gamma}
m_j \ddot{z}_j=k_j(d_j+z_{j-1}-z_j)_+^p-k_{j+1}(d_{j+1}+z_j-z_{j+1})_+^p -\gamma_j z_j, \quad 1\leq j \leq n
\end{eqnarray}
and the linearized system around the steady-state solution is:
\begin{equation}
\label{eqn1_ln_matrix}
M \ddot{\boldsymbol{z}}+B_{\gamma} \boldsymbol{z}=0
\end{equation}
where the mass matrix $M$ remains the same while the stiffness matrix $B_{\gamma}$ has changed its diagonal entries to $K_j+K_{j+1}+\gamma_j$ instead of $K_j+K_{j+1}$ for $1\leq j\leq n$ in $B$.

Although the system in (\ref{eqn1_ln_matrix}) is no longer identical to that in (\ref{eqn0_ln_matrix}) (i.e., now the beads are wall-mounted), its associated matrix $H_{\gamma}=M^{-1/2}B_{\gamma} M^{-1/2}=G^{-1}B_{\gamma} G^{-1}$ will again satisfy (A1)--(A3), provided $\gamma_j$'s are all nonnegative. Then for a matrix $\tilde{H}_{\gamma}$ in $T_{A,n}$ (or $T_{\Lambda}$), we seek a granular chain such that its linearization near the zero solution has $\tilde{H}_{\gamma}$ as its associated matrix. This is possible because the radii and springs in the granular chain give $(2n-1)$ unknowns (note one of the springs, say the first one, is fixed), and the main diagonal and subdiagonals of $\tilde{H}_{\gamma}$ also give $(2n-1)$ equations.

Suppose a granular chain system (NS-2) has radii $(\tilde{r}_1,\tilde{r}_2,...,\tilde{r}_n)$, springs $(\gamma_1,\tilde{\gamma}_2,\tilde{\gamma}_3,...,\tilde{\gamma}_n)$ and precompressions $(\tilde{d}_1,\tilde{d}_2,...,\tilde{d}_{n+1})$. If (NS-2) has a zero solution, then $\tilde{k}_1 \tilde{d}_1^p=\tilde{k}_2 \tilde{d}_2^p=...=\tilde{k}_{n+1} \tilde{d}_{n+1}^p=C$ where $\tilde{k}_j$ represents the stiffness of elastic force between $(j-1)$-th and $j$-th beads. This is under the assumption that the mounted springs are unstretched at the initial compressive state of the chain. In the linearized system of (NS-2) near the zero solution, we have:
\begin{eqnarray}
\tilde{m}_j &=& \frac{4}{3}\pi\rho \tilde{r}_j^3, \quad 1\leq j \leq n \\
\tilde{K}_1 &=& W (\tilde{r}_1)^{1/3}, \\
\tilde{K}_j &=& W (\frac{\tilde{r}_{j-1}\tilde{r}_j}{\tilde{r}_{j-1}+\tilde{r}_j})^{1/3}, \quad 2\leq j\leq n\\
\tilde{K}_{n+1} &=& W (\tilde{r}_n)^{1/3}
\end{eqnarray}
where $W=p C^{1-1/p} (\frac{4}{3}E_*)^{1/p}=\frac{3}{2} C^{\frac{1}{3}} (\frac{4}{3}E_*)^{\frac{2}{3}}$.

Substituting these expressions back into $\tilde{H}_{\gamma}=\tilde{M}^{-1/2}\tilde{B}_{\gamma} \tilde{M}^{-1/2}$ where $\tilde{H}_{\gamma}$ is given, we obtain the equations as follows for $\tilde{\boldsymbol{r}}=(\tilde{r}_1,\tilde{r}_2,...,\tilde{r}_n)$ and $\tilde{\boldsymbol{\gamma}}=(\tilde{\gamma}_2,\tilde{\gamma}_3,...,\tilde{\gamma}_n)$:
\begin{eqnarray}
\label{eqn_fg_1}
f_1 &:=& \tilde{H}_{\gamma}(1,1)- \frac{W( (\tilde{r}_1)^{1/3}+ (\frac{\tilde{r}_{1}\tilde{r}_2}{\tilde{r}_{1}+\tilde{r}_2})^{1/3} ) + \gamma_1 }{\frac{4}{3}\pi\rho \tilde{r}_1^3} \\
\label{eqn_fg_2}
f_j &:=& \tilde{H}_{\gamma}(j,j)- \frac{W( (\frac{\tilde{r}_{j-1}\tilde{r}_{j}}{\tilde{r}_{j-1}+\tilde{r}_{j}})^{1/3}+ (\frac{\tilde{r}_{j}\tilde{r}_{j+1}}{\tilde{r}_{j}+\tilde{r}_{j+1}})^{1/3} ) + \tilde{\gamma}_j}{\frac{4}{3}\pi\rho \tilde{r}_j^3}, \quad 2\leq j \leq n-1 \\
\label{eqn_fg_3}
f_n &:=& \tilde{H}_{\gamma}(n,n)- \frac{W( (\tilde{r}_n)^{1/3}+ (\frac{\tilde{r}_{n-1}\tilde{r}_n}{\tilde{r}_{n-1}+\tilde{r}_n})^{1/3} ) + \tilde{\gamma}_n }{\frac{4}{3}\pi\rho \tilde{r}_n^3} \\
\label{eqn_fg_4}
g_j &:=& \tilde{H}_{\gamma}(j,j+1)+ \frac{W (\frac{\tilde{r}_{j}\tilde{r}_{j+1}}{\tilde{r}_{j}+\tilde{r}_{j+1}})^{1/3} }{\frac{4}{3}\pi\rho (\tilde{r}_j)^{3/2}(\tilde{r}_{j+1})^{3/2} }, \quad 1\leq j\leq n-1
\end{eqnarray}

\begin{lemma}
\label{lemma_det}
If $(\tilde{r}_1,\tilde{r}_2,...,\tilde{r}_n)$ are positive and $\gamma_1\geq 0$ in equations (\ref{eqn_fg_1} - \ref{eqn_fg_4}), then $|\frac{\partial (\boldsymbol{f}, \boldsymbol{g})}{\partial (\tilde{\boldsymbol{r}}, \tilde{\boldsymbol{\gamma}})}|\neq 0$. (See Appendix for proof)
\end{lemma}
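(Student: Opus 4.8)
The plan is to exploit the sparsity of the defining system (\ref{eqn_fg_1})--(\ref{eqn_fg_4}): after a suitable reordering of the rows (functions) and columns (variables), the $(2n-1)\times(2n-1)$ Jacobian $\frac{\partial(\boldsymbol{f},\boldsymbol{g})}{\partial(\tilde{\boldsymbol{r}},\tilde{\boldsymbol{\gamma}})}$ becomes essentially triangular, so its determinant factors into a product of one-variable partial derivatives whose signs I can read off directly. The hypotheses $\tilde{r}_j>0$ and $\gamma_1\geq 0$ then guarantee that no factor vanishes.

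First I would isolate the $\tilde{\gamma}$-dependence. Each variable $\tilde{\gamma}_j$ enters the system only through the single diagonal function $f_j$ (for $2\leq j\leq n$), linearly, so that $\partial f_j/\partial\tilde{\gamma}_k$ equals $-1/\tilde{m}_j$ when $k=j$ and $0$ otherwise, while $\partial f_1/\partial\tilde{\gamma}_k=\partial g_j/\partial\tilde{\gamma}_k=0$ for all $k$ (the off-diagonal entries $\tilde{H}_{\gamma}(j,j+1)$ and the Hertzian terms carry no wall-spring dependence). Ordering the functions as $(f_2,\dots,f_n\mid f_1,g_1,\dots,g_{n-1})$ and the variables as $(\tilde{\gamma}_2,\dots,\tilde{\gamma}_n\mid\tilde{r}_1,\dots,\tilde{r}_n)$ puts the Jacobian in block upper-triangular form $\begin{pmatrix}D & B' \\ 0 & A\end{pmatrix}$, where $D=\mathrm{diag}(-1/\tilde{m}_2,\dots,-1/\tilde{m}_n)$, the block $B'$ is irrelevant, and $A=\frac{\partial(f_1,g_1,\dots,g_{n-1})}{\partial(\tilde{r}_1,\dots,\tilde{r}_n)}$. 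Since $\det D=\prod_{j=2}^{n}(-1/\tilde{m}_j)\neq 0$ for positive radii, the problem reduces to showing $\det A\neq 0$.

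Next I would compute the $n\times n$ determinant $\det A$ by iterated cofactor expansion. Because $f_1$ and $g_1$ depend only on $(\tilde{r}_1,\tilde{r}_2)$ while each $g_j$ depends only on $(\tilde{r}_j,\tilde{r}_{j+1})$, the column of $A$ belonging to $\tilde{r}_n$ has a single nonzero entry, $\partial g_{n-1}/\partial\tilde{r}_n$, lying on the diagonal. Expanding along it deletes the last row and column and leaves a matrix of the same shape of size $n-1$; iterating peels off the diagonal pivots $\partial g_{n-1}/\partial\tilde{r}_n,\,\partial g_{n-2}/\partial\tilde{r}_{n-1},\dots,\partial g_1/\partial\tilde{r}_2$ and finally $\partial f_1/\partial\tilde{r}_1$, so that
\[
\det A=\frac{\partial f_1}{\partial\tilde{r}_1}\prod_{j=1}^{n-1}\frac{\partial g_j}{\partial\tilde{r}_{j+1}}.
\]

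Finally I would check that none of these factors vanishes, which is where the hypotheses enter. Writing $g_j$ in the consolidated form $\tilde{H}_{\gamma}(j,j+1)+c\,(\tilde{r}_j+\tilde{r}_{j+1})^{-1/3}\tilde{r}_j^{-7/6}\tilde{r}_{j+1}^{-7/6}$ with $c=W/(\frac{4}{3}\pi\rho)>0$, differentiation in $\tilde{r}_{j+1}$ yields $c\,\tilde{r}_j^{-7/6}$ times a sum of two strictly negative terms, so $\partial g_j/\partial\tilde{r}_{j+1}<0$. Likewise $\partial f_1/\partial\tilde{r}_1$ splits into a Hertzian part, equal to $-c$ times a sum of two strictly negative terms and hence strictly positive, plus a grounded-spring part $3\gamma_1/(\frac{4}{3}\pi\rho\,\tilde{r}_1^{4})\geq 0$; since $\gamma_1\geq 0$ the two contributions share the same sign and cannot cancel, giving $\partial f_1/\partial\tilde{r}_1>0$. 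Thus every factor in $\det A$ is nonzero, and together with $\det D\neq 0$ this yields $\big|\frac{\partial(\boldsymbol{f},\boldsymbol{g})}{\partial(\tilde{\boldsymbol{r}},\tilde{\boldsymbol{\gamma}})}\big|\neq 0$. I expect the main obstacle to be exactly this last step: ensuring that the sign contributions in $\partial f_1/\partial\tilde{r}_1$ reinforce rather than cancel, which is precisely what the assumption $\gamma_1\geq 0$ secures.
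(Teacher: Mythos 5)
Your global strategy coincides with the paper's: use the fact that each $\tilde{\gamma}_j$ ($2\le j\le n$) enters only $f_j$, linearly with coefficient $-1/\tilde{m}_j$, to reduce the $(2n-1)\times(2n-1)$ determinant to the $n\times n$ block $A=\partial(f_1,g_1,\dots,g_{n-1})/\partial(\tilde{r}_1,\dots,\tilde{r}_n)$, and then exploit the sparsity of the $g$-rows. Your signs are also the correct ones: $\partial f_1/\partial\tilde{r}_1>0$ (using $\gamma_1\ge 0$) and $\partial g_j/\partial\tilde{r}_j,\ \partial g_j/\partial\tilde{r}_{j+1}<0$; the paper's appendix lists these entries with flipped signs, an overall row-sign discrepancy that is immaterial to whether the determinant vanishes.

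However, your cofactor expansion contains a genuine error. The peeling argument ("the last column has a single nonzero entry") is valid only while the surviving matrix has size at least $3$. Once you reach the $2\times2$ block
\[
\frac{\partial(f_1,g_1)}{\partial(\tilde{r}_1,\tilde{r}_2)}
=\begin{pmatrix}\partial f_1/\partial\tilde{r}_1 & \partial f_1/\partial\tilde{r}_2\\ \partial g_1/\partial\tilde{r}_1 & \partial g_1/\partial\tilde{r}_2\end{pmatrix},
\]
its second column has \emph{two} nonzero entries, because $f_1$ and $g_1$ both depend on $\tilde{r}_2$. The correct factorization is therefore
\[
\det A=\det\frac{\partial(f_1,g_1)}{\partial(\tilde{r}_1,\tilde{r}_2)}\cdot\prod_{j=2}^{n-1}\frac{\partial g_j}{\partial\tilde{r}_{j+1}},
\]
and your claimed identity $\det A=\frac{\partial f_1}{\partial\tilde{r}_1}\prod_{j=1}^{n-1}\frac{\partial g_j}{\partial\tilde{r}_{j+1}}$ is false (already for $n=2$). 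What your formula silently discards is the cross term $\frac{\partial f_1}{\partial\tilde{r}_2}\,\frac{\partial g_1}{\partial\tilde{r}_1}$, and ruling out cancellation between the two terms of that $2\times2$ determinant is precisely the crux of the lemma — it is the step the paper performs by checking $\det\partial(f_1,g_1)/\partial(\tilde{r}_1,\tilde{r}_2)<0$. The gap is fixable with ingredients you largely have: $\partial f_1/\partial\tilde{r}_1>0$ while $\partial g_1/\partial\tilde{r}_2<0$, so their product is strictly negative; in addition $\partial f_1/\partial\tilde{r}_2<0$ (the Hertzian term of $f_1$ is increasing in $\tilde{r}_2$ and carries a minus sign) and $\partial g_1/\partial\tilde{r}_1<0$ (by the same monotonicity you used in $\tilde{r}_2$), so the subtracted product is strictly positive; hence the $2\times2$ determinant is strictly negative and $\det A\neq0$. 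But as written, your proof rests on an incorrect determinant identity and never examines the two off-corner entries whose sign interplay the conclusion actually hinges on; you identified the wrong step as the "main obstacle."
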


\begin{lemma}
\label{lemma_granular}
Suppose $\tilde{H}_{\gamma}\in T_{\Lambda}$ corresponds to $\tilde{\boldsymbol{x}}$ in $S^{n-1}_{>0}$ and $\gamma_1\geq 0$ in $(\boldsymbol{f}(\tilde{\boldsymbol{r}},\tilde{\boldsymbol{\gamma}},\tilde{H}_{\gamma}), \boldsymbol{g}(\tilde{\boldsymbol{r}},\tilde{\boldsymbol{\gamma}},\tilde{H}_{\gamma}))=0$, i.e., equations (\ref{eqn_fg_1} - \ref{eqn_fg_4}). If  $(\boldsymbol{f}({\boldsymbol{r}},{\boldsymbol{\gamma}},{H}_{\gamma}), \boldsymbol{g}({\boldsymbol{r}},{\boldsymbol{\gamma}},{H}_{\gamma}))=0$ where ${\boldsymbol{r}}$ is a positive vector and ${H}_{\gamma}$ corresponds to ${\boldsymbol{x}}$ in $S^{n-1}_{>0}$, then for some open set $O$ in $S^{n-1}_{>0}$ containing ${\boldsymbol{x}}$, there exists a unique $(\tilde{\boldsymbol{r}}(\tilde{\boldsymbol{x}}),\tilde{\boldsymbol{\gamma}}(\tilde{\boldsymbol{x}}))$ satisfying $(\boldsymbol{f}(\tilde{\boldsymbol{r}}(\tilde{\boldsymbol{x}}),\tilde{\boldsymbol{\gamma}}(\tilde{\boldsymbol{x}}),\tilde{H}_{\gamma}(\tilde{\boldsymbol{x}})), \boldsymbol{g}(\tilde{\boldsymbol{r}}(\tilde{\boldsymbol{x}}),\tilde{\boldsymbol{\gamma}}(\tilde{\boldsymbol{x}}),\tilde{H}_{\gamma}(\tilde{\boldsymbol{x}})))=0$ for $\tilde{\boldsymbol{x}}\in O$.
\end{lemma}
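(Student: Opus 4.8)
The plan is to recognize the statement as a direct application of the Implicit Function Theorem (IFT), with the essential nondegeneracy already supplied by Lemma~\ref{lemma_det}. First I would assemble the defining equations (\ref{eqn_fg_1}--\ref{eqn_fg_4}) into a single map
\begin{equation}
\Phi(\tilde{\boldsymbol{r}},\tilde{\boldsymbol{\gamma}},\tilde{\boldsymbol{x}}) := \big(\boldsymbol{f}(\tilde{\boldsymbol{r}},\tilde{\boldsymbol{\gamma}},\phi^{-1}(\tilde{\boldsymbol{x}})),\; \boldsymbol{g}(\tilde{\boldsymbol{r}},\tilde{\boldsymbol{\gamma}},\phi^{-1}(\tilde{\boldsymbol{x}}))\big),
\end{equation}
where I have substituted $\tilde{H}_{\gamma}=\phi^{-1}(\tilde{\boldsymbol{x}})$ from Lemma~\ref{lemma_continuous}. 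A count shows that $\Phi$ has $2n-1$ components (the $n$ functions $f_1,\dots,f_n$ together with the $n-1$ functions $g_1,\dots,g_{n-1}$) and that the unknowns $(\tilde{\boldsymbol{r}},\tilde{\boldsymbol{\gamma}})=(\tilde{r}_1,\dots,\tilde{r}_n,\tilde{\gamma}_2,\dots,\tilde{\gamma}_n)$ number $2n-1$ as well, with $\tilde{\boldsymbol{x}}\in S^{n-1}_{>0}$ playing the role of the parameter. By hypothesis $\Phi(\boldsymbol{r},\boldsymbol{\gamma},\boldsymbol{x})=0$, so $(\boldsymbol{r},\boldsymbol{\gamma},\boldsymbol{x})$ is the base point about which I intend to solve.

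Next I would verify the two IFT hypotheses. For smoothness, note that $\boldsymbol{f}$ and $\boldsymbol{g}$ are explicit algebraic combinations of fractional powers of the radii and of the entries of $\tilde{H}_{\gamma}$; since the base radii $\boldsymbol{r}$ are positive, these expressions are smooth on any neighborhood in which all radii stay positive, and Lemma~\ref{lemma_continuous} guarantees that $\tilde{\boldsymbol{x}}\mapsto\phi^{-1}(\tilde{\boldsymbol{x}})$ is smooth, so the composite $\Phi$ is smooth there. For nondegeneracy, the square block $\partial(\boldsymbol{f},\boldsymbol{g})/\partial(\tilde{\boldsymbol{r}},\tilde{\boldsymbol{\gamma}})$ evaluated at the base point is precisely the $(2n-1)\times(2n-1)$ Jacobian whose determinant Lemma~\ref{lemma_det} shows to be nonzero, using positivity of the radii and $\gamma_1\ge 0$; hence this block is invertible.

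With both hypotheses in hand, the IFT yields an open neighborhood of $\boldsymbol{x}$ together with unique smooth functions $\tilde{\boldsymbol{r}}(\tilde{\boldsymbol{x}}),\tilde{\boldsymbol{\gamma}}(\tilde{\boldsymbol{x}})$ satisfying $\Phi(\tilde{\boldsymbol{r}}(\tilde{\boldsymbol{x}}),\tilde{\boldsymbol{\gamma}}(\tilde{\boldsymbol{x}}),\tilde{\boldsymbol{x}})=0$ and $(\tilde{\boldsymbol{r}}(\boldsymbol{x}),\tilde{\boldsymbol{\gamma}}(\boldsymbol{x}))=(\boldsymbol{r},\boldsymbol{\gamma})$. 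Finally I would shrink this neighborhood to an open set $O\subset S^{n-1}_{>0}$ on which $\tilde{\boldsymbol{r}}(\tilde{\boldsymbol{x}})$ remains positive (possible by continuity, since $\boldsymbol{r}>0$), which both makes the constructed chain physical and keeps Lemma~\ref{lemma_det} applicable throughout $O$.

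The genuine care here — rather than a deep obstacle — lies in two bookkeeping points, and I expect the first to be the one most easily overlooked: $\tilde{\boldsymbol{x}}$ lives on the manifold $S^{n-1}_{>0}$ rather than in an open subset of $\mathbb{R}^{2n-1}$, so the IFT should be applied in a local coordinate chart of the sphere (equivalently, one parametrizes $O$ by an open set of $\mathbb{R}^{n-1}$); and one must confirm that the parameter $\tilde{\boldsymbol{x}}$ enters $\Phi$ smoothly, which is exactly the content invoked from Lemma~\ref{lemma_continuous}. All of the substantive analytic work has already been carried out in Lemma~\ref{lemma_det}, so it only remains to package it through the IFT.
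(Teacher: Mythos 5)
Your proposal is correct and is essentially the paper's own argument: the paper's proof likewise uses Lemma~\ref{lemma_continuous} to rewrite the equations with $\tilde{H}_{\gamma}=\phi^{-1}(\tilde{\boldsymbol{x}})$ so that $\tilde{\boldsymbol{x}}$ becomes the parameter, and then applies Lemma~\ref{lemma_det} together with the Implicit Function Theorem. Your additional bookkeeping (working in a chart of $S^{n-1}_{>0}$, checking smoothness, and shrinking the neighborhood to keep the radii positive) simply fills in details the paper leaves implicit.
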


\begin{proof}
By Lemma \ref{lemma_continuous}, we write $(\boldsymbol{f}(\tilde{\boldsymbol{r}},\tilde{\boldsymbol{\gamma}},\tilde{H}_{\gamma}), \boldsymbol{g}(\tilde{\boldsymbol{r}},\tilde{\boldsymbol{\gamma}},\tilde{H}_{\gamma}))=0$ as $(\boldsymbol{f}(\tilde{\boldsymbol{r}},\tilde{\boldsymbol{\gamma}},\tilde{\boldsymbol{x}}), \boldsymbol{g}(\tilde{\boldsymbol{r}},\tilde{\boldsymbol{\gamma}},\tilde{\boldsymbol{x}}))=0$. Then, we apply Lemma \ref{lemma_det} and the Implicit Function Theorem.
\end{proof}

If a granular chain (NS-1) with $(r_1,r_2,...,r_n)$ and $(\gamma_1,\gamma_2,...,\gamma_n)$ gives matrix $H_{\gamma}$ in $T_{\Lambda}$ and $\phi(H_{\gamma})=\boldsymbol{x}$, then by Lemma \ref{lemma_granular}, there exists a family of granular chains with the same eigenfrequencies in the linear limit as (NS-1). We note that $\tilde{\boldsymbol{r}}$ obtained here will always be positive, however $\boldsymbol{\tilde{\gamma}}$ may contain negative components, which will violate the positivity
of spring constants. That is to say, though we are in general free to vary $\boldsymbol{\tilde{x}}\in S_{>0}^{n-1}$ to construct isospectral granular chains, perhaps only a part of the values for $\boldsymbol{\tilde{x}}$ can lead to systems with positive $\boldsymbol{\tilde{\gamma}}$ such that they are physically realizable. In this work, we have chosen our
examples carefully so that they all have positive $\tilde{\boldsymbol{\gamma}}$. However, it is certainly interesting to
identify the precise conditions for positive $\tilde{\boldsymbol{\gamma}}$.
This is an important open problem that will be  
deferred to future studies.


\subsubsection*{Example 1: isospectral granular chains}

In Fig.~\ref{fig2}, we show an example of finding granular chains whose linearized models have the same eigenfrequencies. Suppose we start with a generalized granular chain system with $\rho=1$, $E_*=1$ and
\begin{itemize}
\item radii $r_j=1$ for all $1\leq j\leq 10$,
\item springs $\gamma_1=0$ and $\gamma_j=1$ for all $2\leq j\leq 10$,
\item precompressions satisfying $k_1 d_1^p=k_2 d_2^p=...=k_{11} d_{11}^p=C$ where $C=1$,
\end{itemize}
then by Lemma~\ref{lemma_continuous} its corresponding matrix $H$ has the image point
\[
\boldsymbol{x}=\phi(H)=(0.08415, 0.1667, 0.2451, 0.3158, 0.3743, 0.4148, 0.4291, 0.4057, 0.3295, 0.1895) \in S^{9}_{>0}.
\]
Then we move $\boldsymbol{x}$ along the homotopic path (ensuring that
we stay on the unit sphere)
\[
\Gamma_1(t)=\frac{\boldsymbol{x}\cdot(1-t)+\boldsymbol{y}\cdot t}{\| \boldsymbol{x}\cdot(1-t)+\boldsymbol{y}\cdot t \|_{l_2}}
\]
where
\[
\quad y_j=\frac{\sin(\frac{j\pi}{11})}{\sqrt{\sum_{j=1}^{10}\sin^2(\frac{j\pi}{11})}}, \quad 1\leq j \leq n
\]
to obtain a sequence of points $\{\boldsymbol{x}^{(l)}\}$ with $\boldsymbol{x}^{(0)}=\boldsymbol{x}$ and compute the corresponding tridiagonal matrices $\{H^{(l)}\}$ in $T_{\Lambda}$. By construction, all these tridiagonal matrices are isospectral. Assuming $\rho$, $\gamma_1$, and $C$ remain the same for all systems, we use Newton's method to solve~(\ref{eqn_fg_1})--(\ref{eqn_fg_4}), aiming to find the granular chain with $\boldsymbol{r}^{(l)}$ and $\boldsymbol{\gamma}^{(l)}$ for given $H^{(l)}=\phi^{-1}(\boldsymbol{x}^{(l)})$. If the steps in $\{\boldsymbol{x}^{(l)}\}$ are chosen small enough, Lemmas~\ref{lemma_det} and~\ref{lemma_granular} tell us that $\boldsymbol{r}^{(l-1)}$ and $\boldsymbol{\gamma}^{(l-1)}$ can serve as an initial guess for the $l$-th iteration in Newton's method. In particular,
the corresponding granular chain system we obtained for $\boldsymbol{x}^{(20)}=\boldsymbol{y}$ has:
\begin{itemize}
\item radii $\boldsymbol{r}^{(20)}=(0.9421, 1.0523, 0.9493, 1.0530, 0.9493, 1.0530, 0.9493, 1.0530, 0.9491, 1.0193)$
\item springs $\boldsymbol{\gamma}^{(20)}=(0, 1.6389, 0.4390, 1.6523, 0.4404, 1.6526, 0.4404, 1.6523, 0.4440, 0.6306)$.
\end{itemize}

In Fig.~\ref{fig3}, we arrive at the same image point $\boldsymbol{x}^{(20)}$ via a different path
\[
\Gamma_2(t)=\frac{\boldsymbol{x}+(\boldsymbol{y}-\boldsymbol{x})(3 t-2t^2)}{\| \boldsymbol{x}+(\boldsymbol{y}-\boldsymbol{x})(3 t-2t^2) \|_{l_2}}
\]
in $S^{9}_{>0}$, finding that the corresponding granular chain system at the destination is the same as what we obtained using the first path.

\begin{figure}[t]
\centering
\begin{tabular}{cc}
\includegraphics[width=6cm]{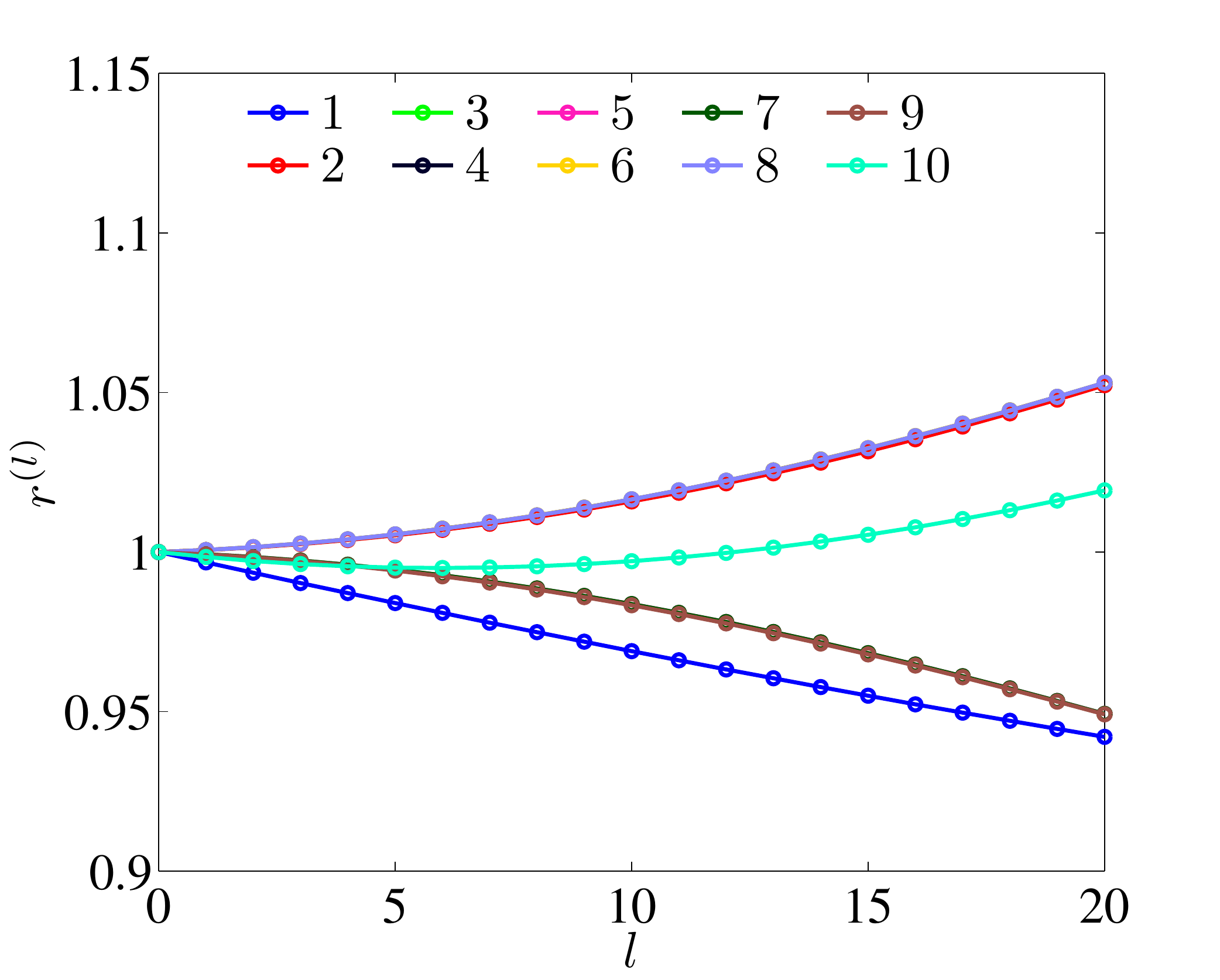}
\includegraphics[width=6cm]{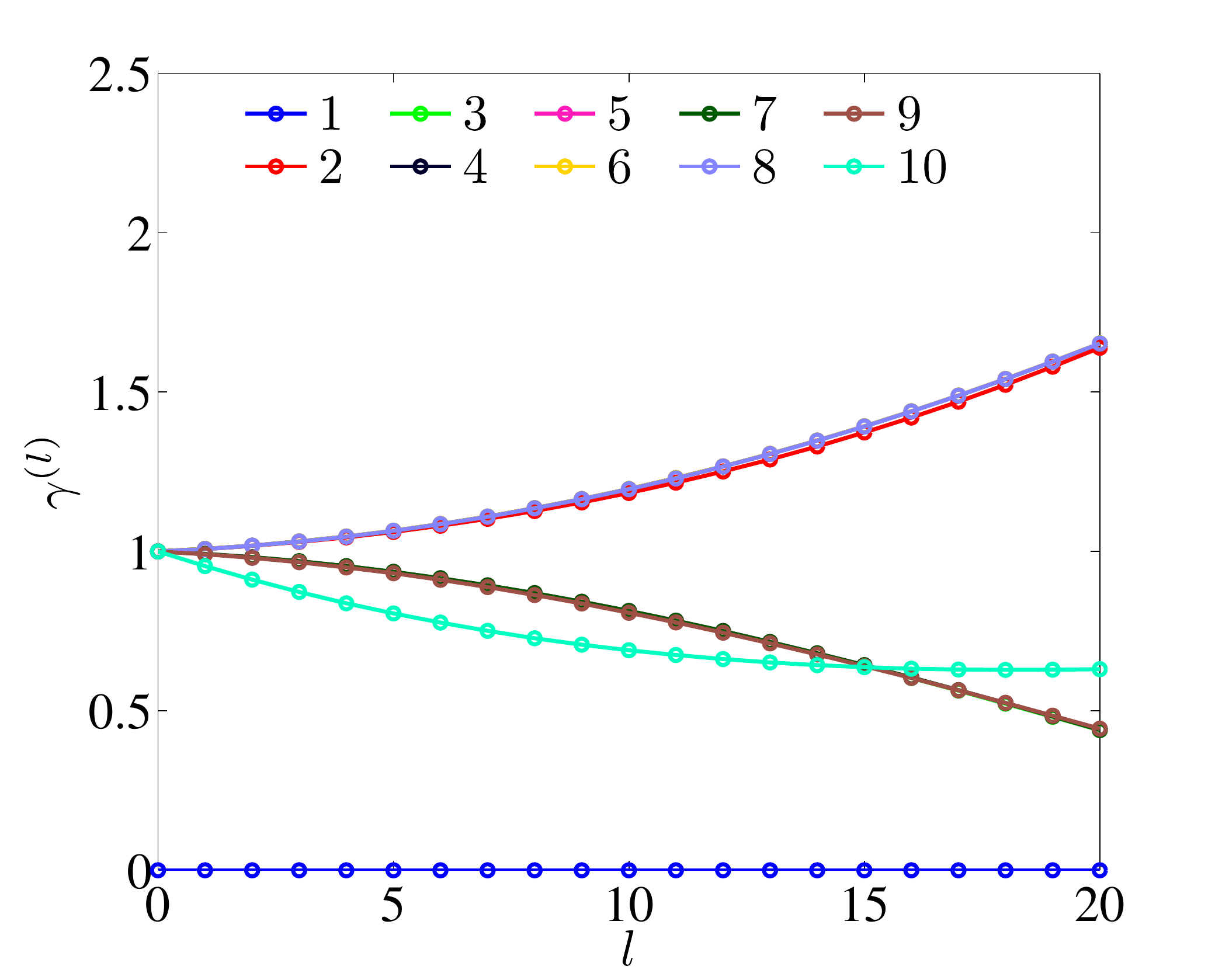}\\
\includegraphics[width=6cm]{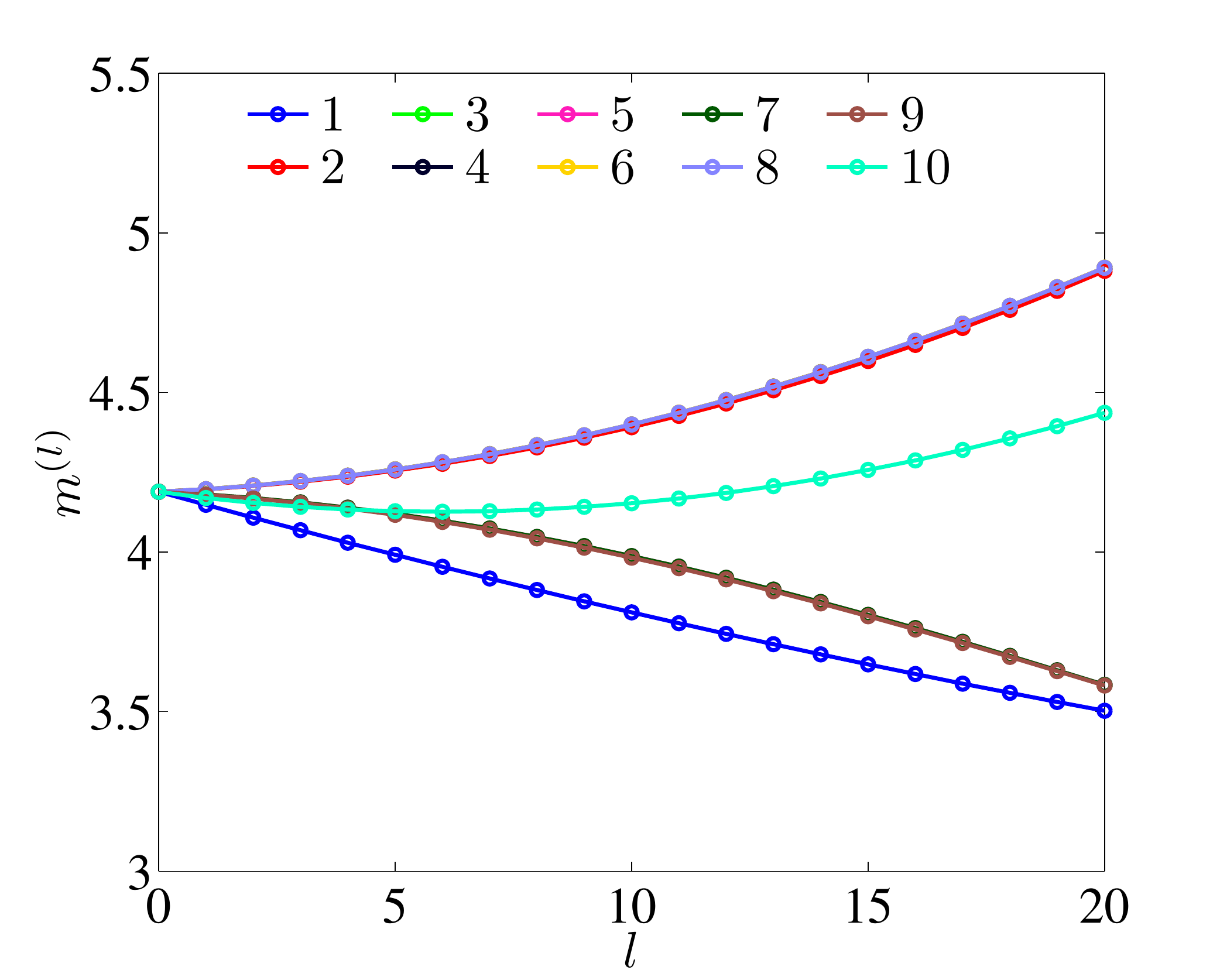}
\includegraphics[width=6cm]{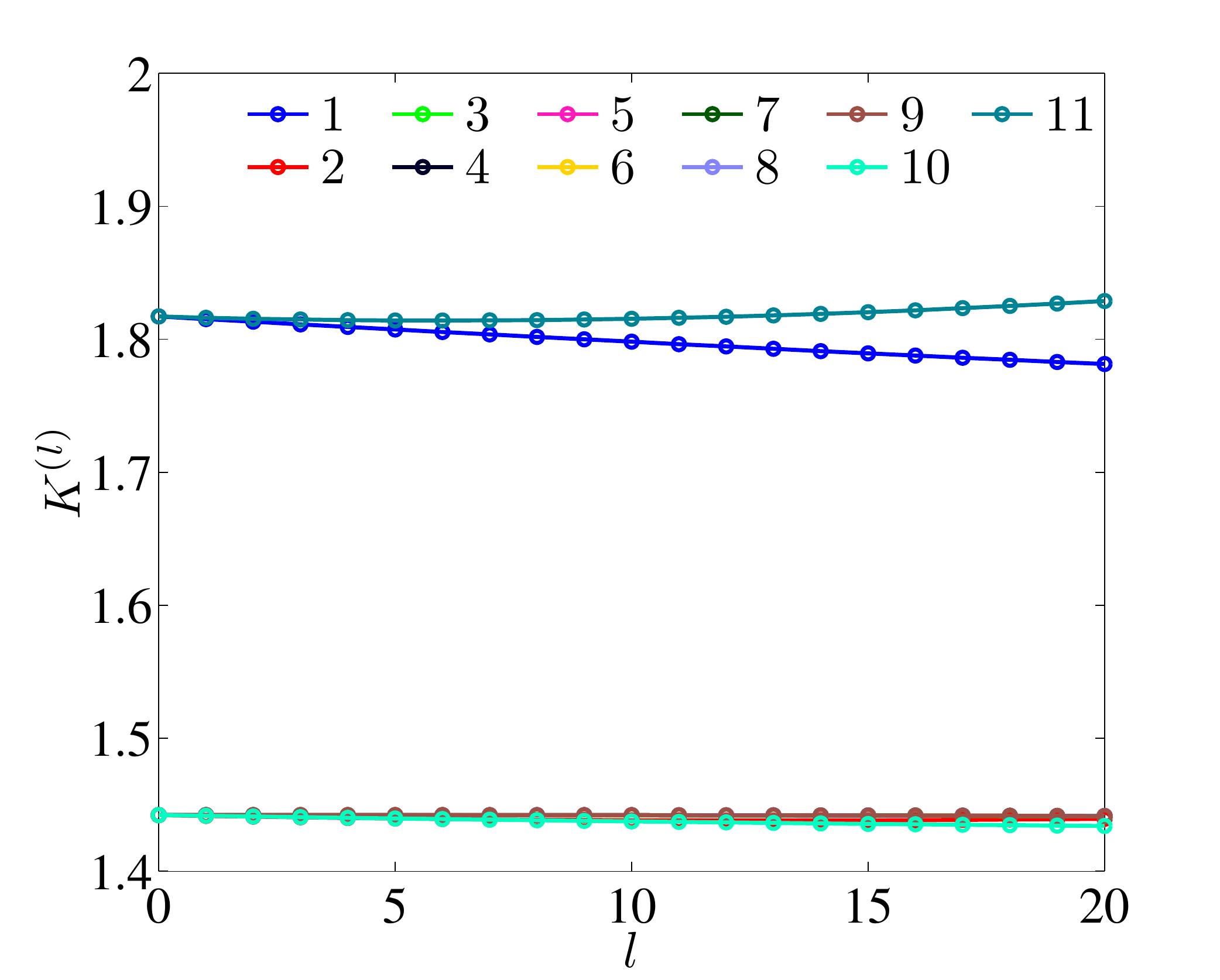}\\
\includegraphics[width=6cm]{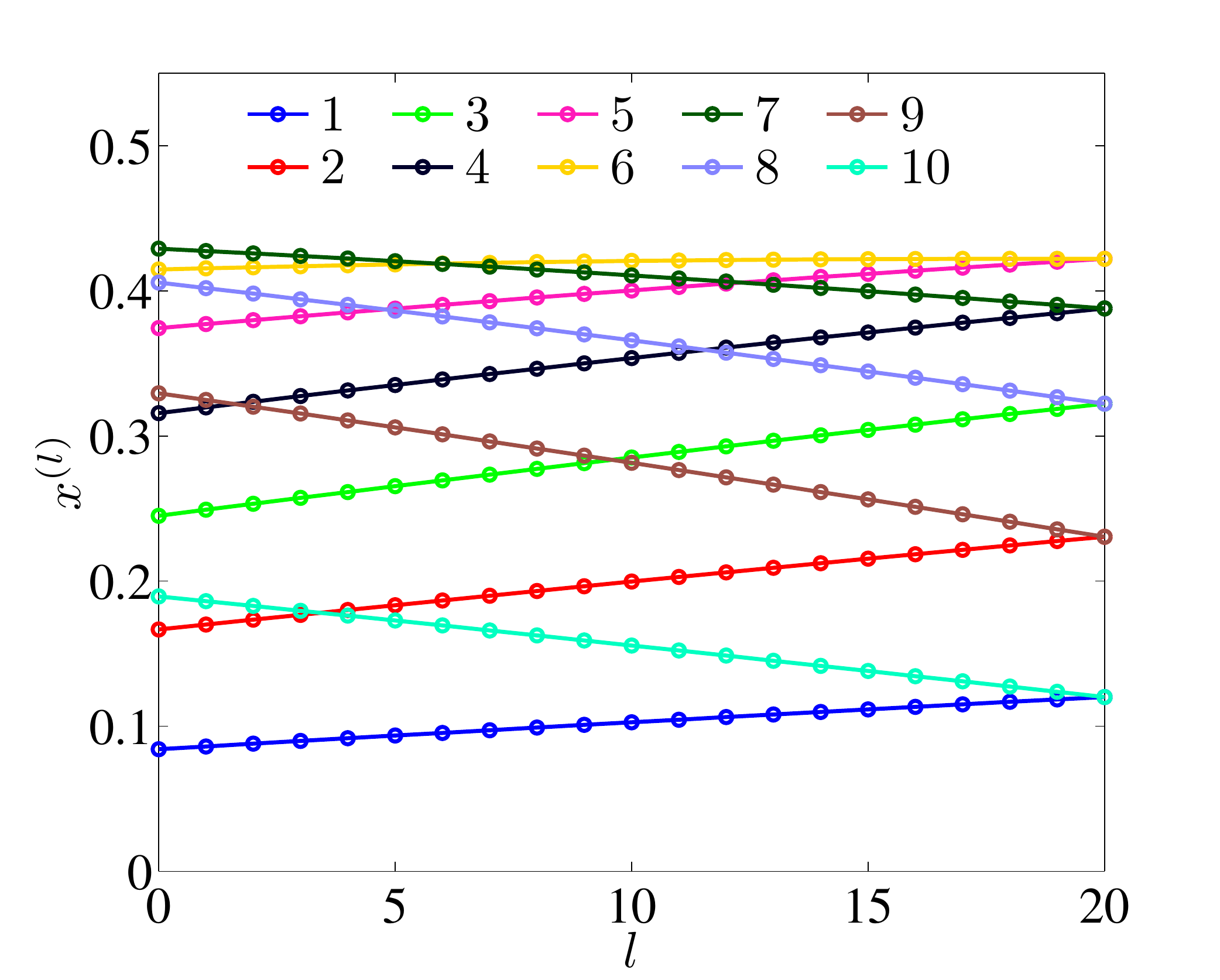}
\end{tabular}
\caption{Top row: radii (left panel) and springs (right panel) in granular chain systems; middle row: masses (left panel) and spring coefficients (right panel) in the linearized systems; bottom: corresponding image points in $S^{9}_{>0}$. In each of the panels, each line stands for a variable in a sequence of isospectral systems. The circles in each column correspond to different variables in the same system. As $\boldsymbol{x}$ goes to $\boldsymbol{y}$ in this example, we particularly observe that $r_{3}^{(l)}$, $r_{5}^{(l)}$, $r_{7}^{(l)}$ and $r_{9}^{(l)}$ are almost identical while $r_{2}^{(l)}$, $r_{4}^{(l)}$, $r_{6}^{(l)}$, $r_{8}^{(l)}$ are quantitatively identical. Similarly, $\gamma_{3}^{(l)}$, $\gamma_{5}^{(l)}$, $\gamma_{7}^{(l)}$ and $\gamma_{9}^{(l)}$ are close while the lines of $\gamma_{2}^{(l)}$, $\gamma_{4}^{(l)}$, $\gamma_{6}^{(l)}$ and $\gamma_{8}^{(l)}$ almost collide. $\boldsymbol{m^{(l)}}$ and $\boldsymbol{K^{(l)}}$ bear similar features since they are functions of $\boldsymbol{r^{(l)}}$. 
}
\label{fig2}
\end{figure}

\begin{figure}[t]
\centering
\begin{tabular}{cc}
\includegraphics[width=6cm]{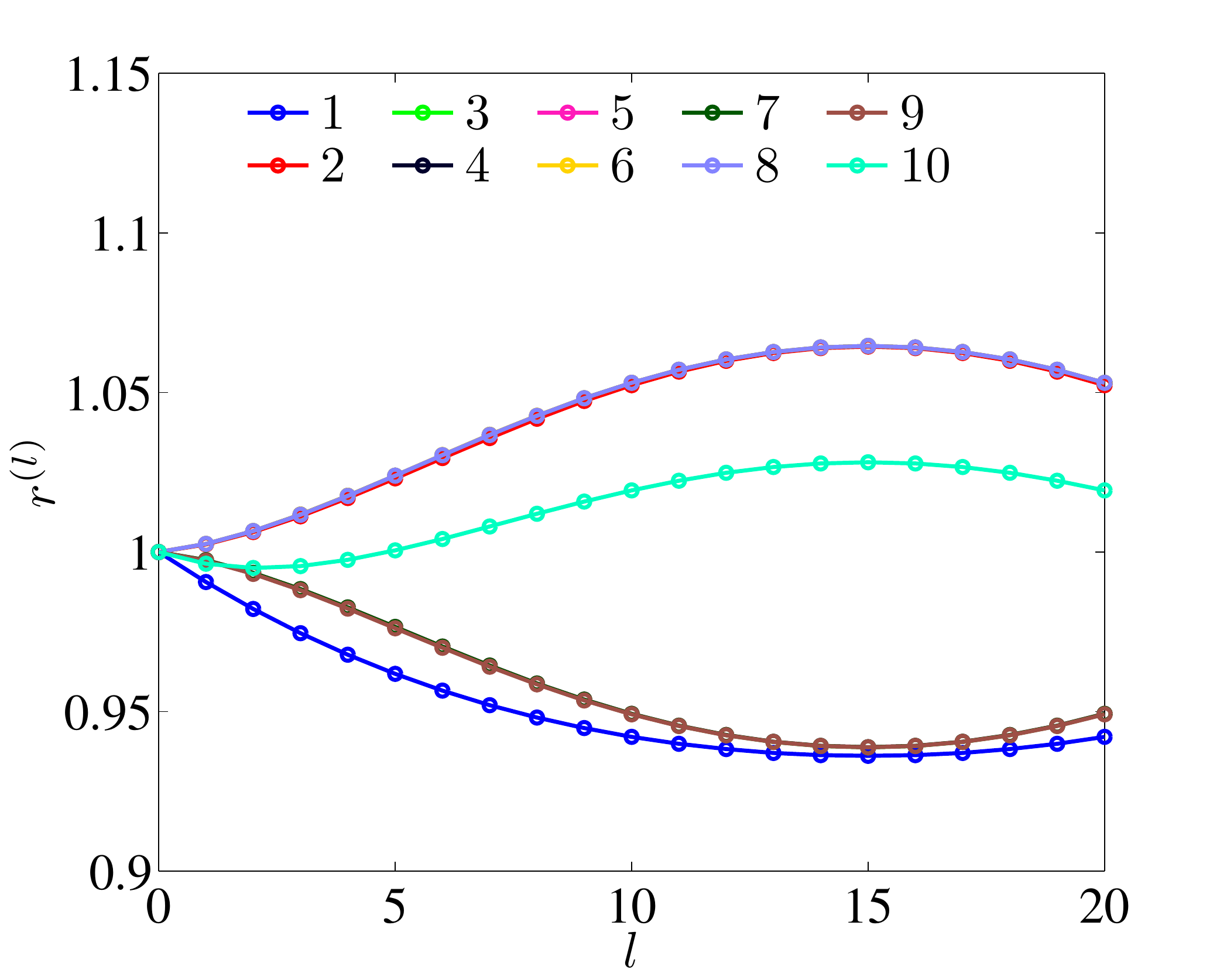}
\includegraphics[width=6cm]{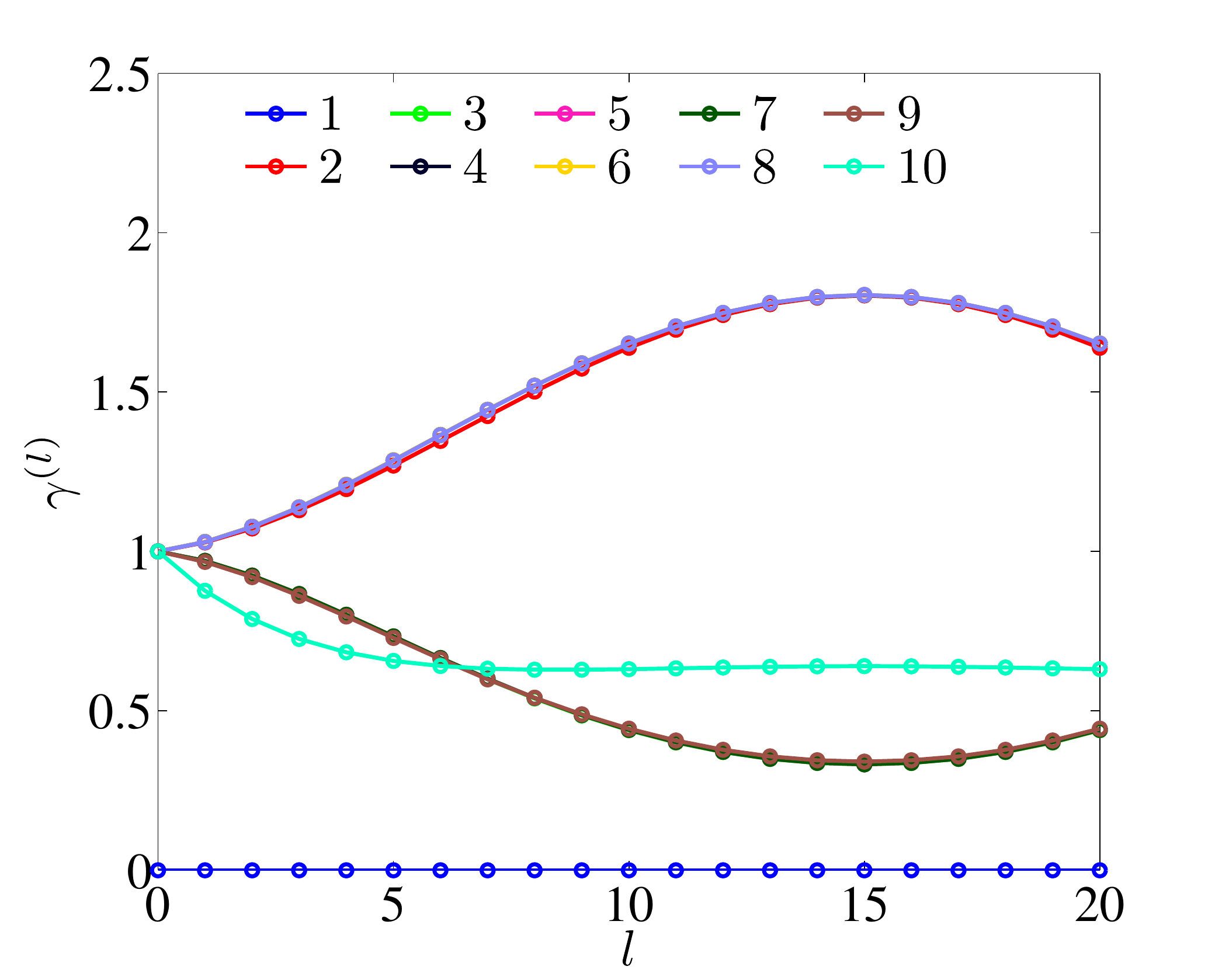}\\
\includegraphics[width=6cm]{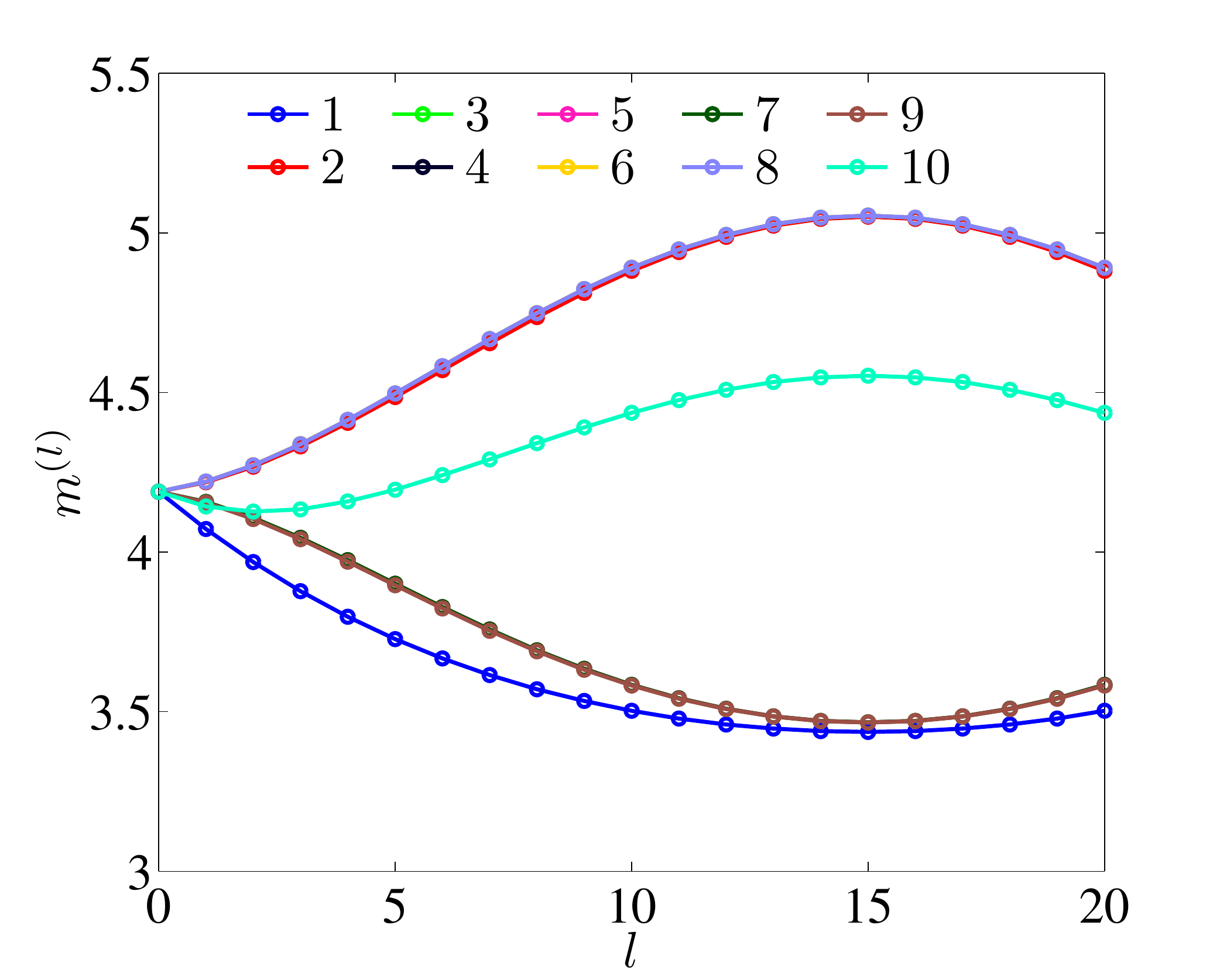}
\includegraphics[width=6cm]{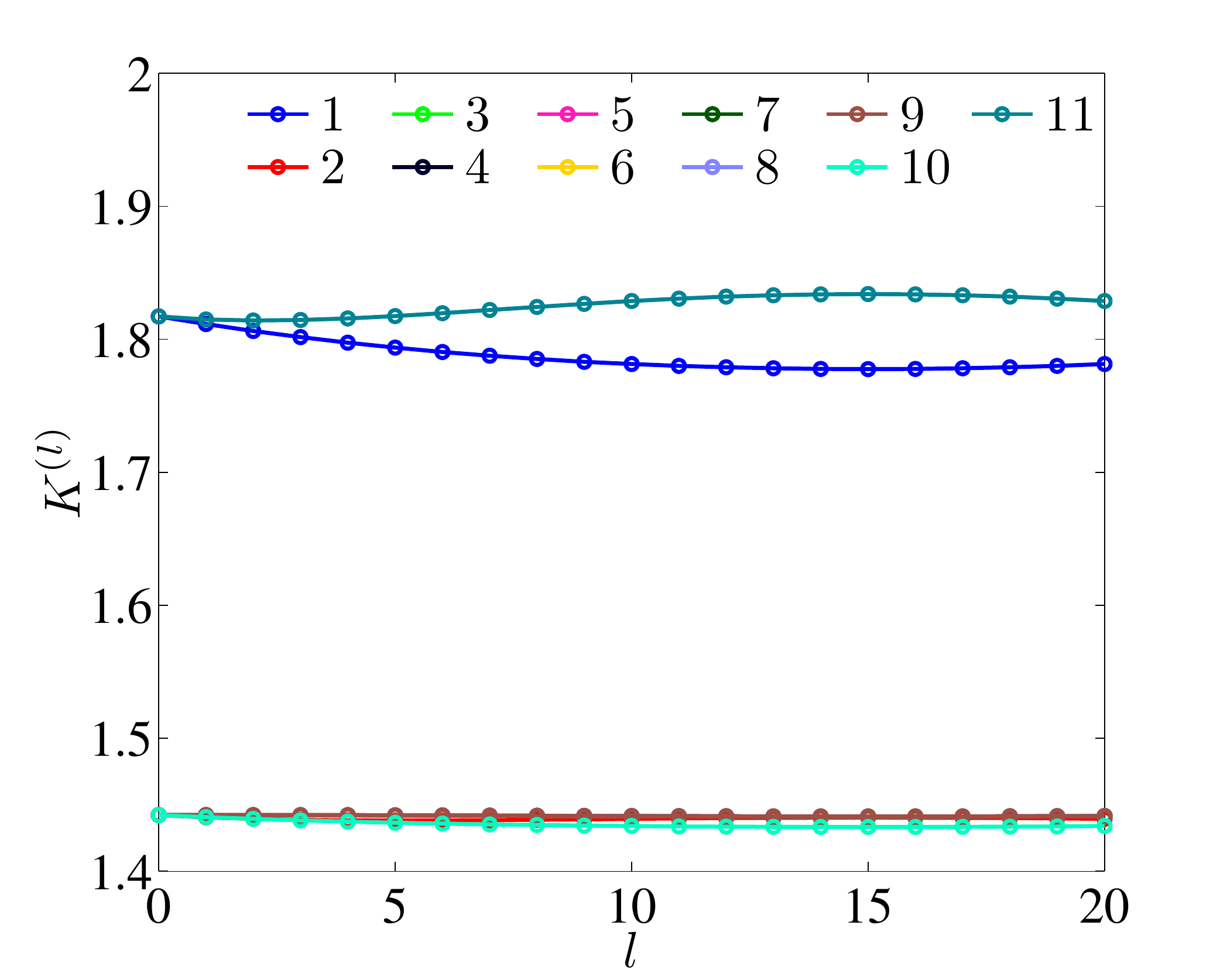}\\
\includegraphics[width=6cm]{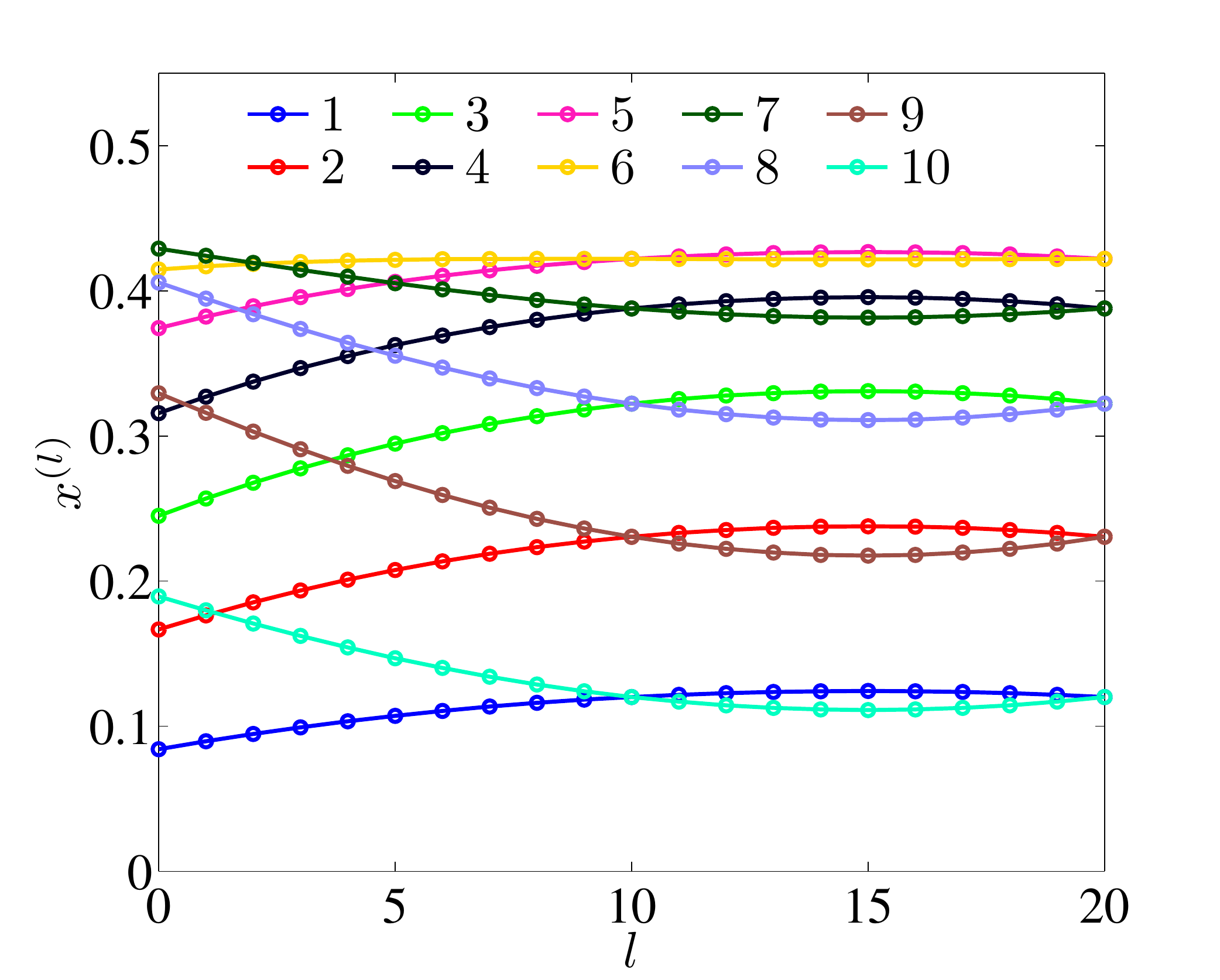}
\end{tabular}
\caption{Top row: radii (left panel) and springs (right panel) in granular chain systems; middle row: masses (left panel) and spring coefficients (right panel) in the linearized systems; bottom: corresponding image points in the space $S^{9}_{>0}$. The meaning of the lines and circles follows that for Fig.~\ref{fig2}. It is shown that we arrive at the same granular chain as in Fig.~\ref{fig2} at $\boldsymbol{y}$, though via a different path.}
\label{fig3}
\end{figure}

\subsection{Frequency addition and removal}
In the previous section, we talked about the process of constructing a family of granular chain systems from an initial one, while keeping identical all eigenfrequencies of its linearized model. In this section, we will discuss the possibility of constructing granular chain systems if we add or remove eigenvalues from the spectrum of its linearized model.

Suppose now we have a granular chain system with radii $\{r_j\}_{1\leq j\leq n}$ and local stiffnesses $\{\gamma_j\}_{2\leq j\leq n}$.
After linearizing the system around the zero solution, we obtain a tridiagonal matrix $H$ whose spectrum is denoted by $\Lambda$. In order to directly remove an eigenvalue $\lambda_k$ of $H$ from $\Lambda$, we would like to make $H(n-1, n)=H(n,n-1)=0$ and $H(n,n)=\lambda_k$ where $\lambda_k\in \Lambda$.
This enables us to decouple one eigenvalue from the system, in the sense that once the $n$-th column and $n$-th row are truncated from $H$ matrix, we can 
remove that eigenvalue without affecting the remaining ones.
Since the matrix $H$ represents the linearization of a granular chain consisting of spherical beads and local potentials, then $H(n-1,n)\to 0$ implies $r_n\to \infty$ or $r_{n-1}\to\infty$. Since $r_{n-1}\to\infty$ also leads to $H(n-2,n-1)=H(n-1,n-2)\to 0$, this choice will not fit the situation of only separating the eigenvalue $\lambda_k$. On the other hand, if we set $r_n\to \infty$ and $\gamma_n \to \lambda_k \frac{4}{3}\pi \rho r_n^3$, then $\frac{ r_n r_{n-1} }{ r_n + r_{n-1} } \to r_{n-1}$ and $H(n,n)\to \lambda_k$. As a result, we will have $H \to \left(
    \begin{array}{cc}
        {H}_{n-1} & 0\\
        0 & \lambda_k
    \end{array}
\right)$ where $H_{n-1}$ is a $(n-1)\times(n-1)$ matrix in $T_{\Lambda \backslash \lambda_k }$.

\begin{proposition}
\label{proposition_3}
If there is a sequence of matrices $\{H^{(l)}\}_{l\in \mathbb{N}}$ converging to $\left(
    \begin{array}{cc}
        {H}_{n-1} & 0\\
        0 & \lambda_k
    \end{array}
\right)$ in $T_{\Lambda}$ and each of the matrix $H^{(l)}$ corresponds to a granular chain model with $\{r_j^{(l)}\}_{1\leq j\leq n}$ and $\{\gamma_j^{(l)}\}_{2\leq j\leq n}$, then $\lim_{l\to \infty} r_n^{(l)} = \infty $ and $\lim_{l\to \infty} \frac{\gamma_n^{(l)}}{(r_n^{(l)})^3} = \lambda_k \frac{4}{3}\pi \rho$.
\end{proposition}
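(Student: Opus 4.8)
The plan is to argue directly from the closed-form expressions for the entries of the associated matrix, avoiding the implicit-function machinery. Since the hypothesis gives convergence of the matrices $H^{(l)}$ entrywise, it asserts $H^{(l)}(j,j)\to H_{n-1}(j,j)$ (finite and positive) and $H^{(l)}(j,j+1)\to H_{n-1}(j,j+1)$ (finite and strictly negative, as matrices in $T_{\Lambda\backslash\lambda_k}$ have negative subdiagonals) for the relevant indices, together with $H^{(l)}(n-1,n)\to 0$ and $H^{(l)}(n,n)\to\lambda_k$. For the chain realizing $H^{(l)}$ (I drop tildes and write $r_j,\gamma_j,m_j,K_j$), the formulas for $m_j$ and $K_j$ give, with $\kappa:=W/(\tfrac43\pi\rho)$,
\[
H(j,j+1)=-\kappa\,(r_j+r_{j+1})^{-1/3}(r_jr_{j+1})^{-7/6},\qquad H(1,1)\ge \frac{K_1}{m_1}=\kappa\,r_1^{-8/3}.
\]
The identity I would isolate at the outset is $K_{j+1}/m_j=|H(j,j+1)|\,(r_{j+1}/r_j)^{3/2}$ (and its mirror $K_{j+1}/m_{j+1}=|H(j,j+1)|\,(r_j/r_{j+1})^{3/2}$), which follows from $|H(j,j+1)|=K_{j+1}/\sqrt{m_jm_{j+1}}$ and $m_{j+1}/m_j=(r_{j+1}/r_j)^3$.

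The crux, and the step I expect to be the main obstacle, is to show that $r_1^{(l)},\dots,r_{n-1}^{(l)}$ remain in a compact subset of $(0,\infty)$. First I would bound $r_1$ from below using $H(1,1)\ge\kappa r_1^{-8/3}$ and $\sup_l H^{(l)}(1,1)<\infty$. Next, for $1\le j\le n-2$ the limits $H_{n-1}(j,j+1)\ne 0$ keep $|H^{(l)}(j,j+1)|$ bounded away from $0$ and $\infty$ for large $l$; combining this with the diagonal inequalities $|H(j,j+1)|(r_{j+1}/r_j)^{3/2}=K_{j+1}/m_j\le H(j,j)$ and $|H(j,j+1)|(r_j/r_{j+1})^{3/2}=K_{j+1}/m_{j+1}\le H(j+1,j+1)$ bounds each consecutive ratio $r_{j+1}/r_j$ above and below. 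Propagating from the lower bound on $r_1$ then gives lower bounds on all $r_j$ with $j\le n-1$; an upper bound on $r_1$ follows because $|H(1,2)|$ bounded below forces $(r_1+r_2)^{1/3}(r_1r_2)^{7/6}$, and hence a multiple of $r_1^{8/3}$, to stay bounded, after which the ratio bounds cap $r_2,\dots,r_{n-1}$ as well.

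With $r_{n-1}^{(l)}$ trapped in a compact set, the remaining claims are immediate. From $H^{(l)}(n-1,n)\to 0$ I read off $(r_{n-1}+r_n)^{1/3}(r_{n-1}r_n)^{7/6}\to\infty$; since $r_{n-1}$ is bounded above and below, a bounded subsequence of $r_n$ would keep this product bounded, a contradiction, so $r_n\to\infty$. Finally I would split $H(n,n)=K_n/m_n+K_{n+1}/m_n+\gamma_n/m_n$: boundedness of $r_{n-1}$ makes $K_n=W(\tfrac{r_{n-1}r_n}{r_{n-1}+r_n})^{1/3}$ bounded, so $K_n/m_n=O(r_n^{-3})\to 0$, while $K_{n+1}/m_n=\kappa r_n^{-8/3}\to 0$. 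Hence $H^{(l)}(n,n)\to\lambda_k$ forces $\gamma_n/m_n\to\lambda_k$, i.e. $\gamma_n/r_n^3\to\lambda_k\tfrac43\pi\rho$, which is exactly the assertion. The only genuinely nontrivial ingredient is the compactness in the second paragraph; everything else is bookkeeping with the entry formulas and the entrywise limits.
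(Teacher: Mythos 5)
You should know at the outset that the paper never actually proves Proposition 3: the surrounding text gives only the heuristic ``wall'' interpretation (the $n$-th bead becoming immovable as $r_n\to\infty$, $\gamma_n\to\infty$), plus the easy forward computation that setting $r_n\to\infty$ and $\gamma_n\to\lambda_k\frac{4}{3}\pi\rho r_n^3$ produces the block-diagonal limit; the appendix proves only the two Remarks and Lemma 2. So your attempt is supplying the converse argument that the paper omits. Its architecture --- trap $r_1^{(l)},\dots,r_{n-1}^{(l)}$ in a compact subset of $(0,\infty)$, read $r_n^{(l)}\to\infty$ off the vanishing entry $H^{(l)}(n-1,n)$, then split $H(n,n)=K_n/m_n+K_{n+1}/m_n+\gamma_n/m_n$ --- is the right one, your entry formulas are correct, and the last two steps are correct as written.

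The gap is in the compactness step. Your ``diagonal inequalities'' $K_{j+1}/m_j\le H(j,j)$ and $K_{j+1}/m_{j+1}\le H(j+1,j+1)$ are equivalent to $K_j+\gamma_j\ge 0$ and $K_{j+2}+\gamma_{j+1}\ge 0$: they silently assume the intermediate wall-spring stiffnesses are not too negative. Nothing in the hypothesis grants this. Membership of $H^{(l)}$ in $T_{\Lambda}$ (positive definiteness) forces only $\gamma_j>-(K_j+K_{j+1})$, which leaves room for $H(j,j)<K_{j+1}/m_j$; and the paper's own use of this proposition --- the removal path of Example 2 --- passes through chains with $\gamma_2^{(l)}<0$, exactly the regime your inequality must survive. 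The repair is to extract compactness from quantities that involve no $\gamma_j$ at all. From $H(1,1)$ you already get the lower bound on $r_1$; you also get an upper bound, because $K_2\le W r_1^{1/3}$ gives $H(1,1)\le (2Wr_1^{1/3}+\gamma_1)/(\frac{4}{3}\pi\rho r_1^3)$ with $\gamma_1\ge 0$ fixed, while $H^{(l)}(1,1)$ stays bounded away from zero. Then propagate along the chain using only the off-diagonals: if $\delta\le r_j\le M$ and $c\le |H(j,j+1)|\le C$, your identity $|H(j,j+1)|=\kappa(r_j+r_{j+1})^{-1/3}(r_jr_{j+1})^{-7/6}$ yields $\delta^{7/6}r_{j+1}^{3/2}\le \kappa/c$ (an upper bound on $r_{j+1}$) and, when $r_{j+1}\le 1$, $(M+1)^{1/3}M^{7/6}r_{j+1}^{7/6}\ge \kappa/C$ (a lower bound). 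This replaces your ratio bounds entirely, after which your argument goes through verbatim --- and it then covers the unphysical negative-$\gamma$ chains that the paper's removal algorithm actually generates, which your current version does not.
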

~\\
In fact, this result makes intuitive sense via the interpretation
that the $n$-th bead will gradually become harder and harder to move as $r_n\to \infty$ and $\gamma_n\to\infty$ and effectively become a ``wall'' in the end. This makes the $n$-th bead decoupled from the chain with its own eigenfrequency $\lambda$ while the other $(n-1)$ beads will actually represent a shorter granular chain that corresponds to the matrix ${H}_{n-1}$. 
In order to find a sequence of matrices described in Proposition~\ref{proposition_3}, we consider the behavior of their image points in $S_{>0}^{n-1}$ and state the following proposition, which we have numerically examined though a proof is not given here.

\begin{proposition}
\label{proposition_4}
Suppose a sequence of matrices $\{H^{(l)}\}_{l\in \mathbb{N}}$ is in $T_{\Lambda}$ where $\Lambda=\{ \lambda_1 > \lambda_2 > ... > \lambda_n >0 \}$ and their image points in $S_{>0}^{n-1}$ is $\boldsymbol{x}^{(l)}=(x_1^{(l)}, x_2^{(l)}, ... , x_n^{(l)})$. If $\lim_{l\to\infty} x_k^{(l)} = 0$ and $\lim_{l\to\infty}x_j^{(l)} = y_j >0$ for $j\neq k$, then $\lim_{l\to\infty}H^{(l)} = \left(
    \begin{array}{cc}
        {H}_{n-1} & 0\\
        0 & \lambda_k
    \end{array}
\right)$ where $H_{n-1} \in T_{\Lambda \backslash \lambda_k }$.
\end{proposition}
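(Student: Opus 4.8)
\emph{Strategy.} The plan is to analyze the limit directly through the explicit reconstruction map $\phi^{-1}$ of Lemma~\ref{lemma_continuous}, and to recognize the limiting matrix as the reconstruction of a lower-dimensional isospectral problem. Write $\boldsymbol{x}^{(l)}=\phi(H^{(l)})$ and let $\boldsymbol{y}=(y_1,\dots,y_n)$ with $y_k=0$ and $y_j>0$ for $j\neq k$; the first row of the orthogonal factor $U^{(l)}$ (with the sign convention of Lemma~\ref{lemma_continuous}) then converges to $\boldsymbol{y}$. Since the reconstruction builds the rows $U_1,\dots,U_n$ of $U$ from this first row, the governing observation is that a vanishing $k$-th entry in the first row \emph{propagates}: because $U_{i+1,k}=\frac{1}{b_i}[(\lambda_k-a_i)U_{ik}-b_{i-1}U_{i-1,k}]$, the conditions $U_{1k}=0$ and $U_{0k}=0$ force $U_{ik}=0$ for every $i$ for which the intervening subdiagonals $b_1,\dots,b_{i-1}$ are nonzero.

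\emph{Top-left block.} First I would show that the reconstruction of the first $n-1$ rows, restricted to the coordinates $j\neq k$, coincides exactly with the reconstruction of Lemma~\ref{lemma_continuous} applied in dimension $n-1$ to the point $\tilde{\boldsymbol{y}}:=(y_j)_{j\neq k}\in S^{n-2}_{>0}$ and the spectrum $\Lambda\setminus\lambda_k$. Indeed, at $\boldsymbol{y}$ the $k$-th summand drops out of the formulas for $a_i$ and $b_i$, so that $a_1^\ast,\dots,a_{n-1}^\ast$ and $b_1^\ast,\dots,b_{n-2}^\ast$ are precisely the diagonal and subdiagonal entries produced by that $(n-1)$-dimensional problem. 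Because $\tilde{\boldsymbol{y}}$ is strictly positive, Lemma~\ref{lemma_continuous} in dimension $n-1$ guarantees the reconstruction is well defined with \emph{nonzero} subdiagonals; hence $b_1^\ast,\dots,b_{n-2}^\ast\neq 0$, the reconstruction formulas are continuous at $\boldsymbol{y}$ through the $(n-1)$-th row, and the rows $U_1^{(l)},\dots,U_{n-1}^{(l)}$ converge. In particular the top-left $(n-1)\times(n-1)$ block of $\lim_l H^{(l)}$ equals $H_{n-1}\in T_{\Lambda\setminus\lambda_k}$, and by the propagation observation $U_{ik}^{(l)}\to 0$ for $1\le i\le n-1$.

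\emph{Bottom corner.} Next I would identify the bottom corner by orthogonality rather than by the singular last recursion. The limiting rows $U_1^\ast,\dots,U_{n-1}^\ast$ are orthonormal and, since $U_{ik}^\ast=0$ for $1\le i\le n-1$, all lie in the coordinate hyperplane $\{x_k=0\}$; being $n-1$ of them, they form an orthonormal basis of it. The remaining row $U_n$ is the unit vector orthogonal to all of them (with the definite sign fixed by the reconstruction), hence $U_n^\ast=\pm e_k$; this also shows the full sequence $U^{(l)}$ converges, so $H^{(l)}=U^{(l)}D(U^{(l)})^T$ converges. Equivalently, the $k$-th column of $U^\ast$, i.e.\ the eigenvector for $\lambda_k$, equals $\pm e_n$, so $H^\ast e_n=\lambda_k e_n$. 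Reading off the last column of the tridiagonal matrix $H^\ast$ then yields at once $b_{n-1}^\ast=0$ and $a_n^\ast=\lambda_k$, whence $\lim_l H^{(l)}=\bigl(\begin{smallmatrix}H_{n-1}&0\\0&\lambda_k\end{smallmatrix}\bigr)$ with $H_{n-1}\in T_{\Lambda\setminus\lambda_k}$, as claimed.

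\emph{Main obstacle.} The step I expect to be the main difficulty is this final corner: the reconstruction formula for $U_n$ divides by $b_{n-1}\to 0$ and is genuinely singular there, so limits cannot be taken naively. The resolution is to bypass the singular formula entirely and pin down $U_n$ by orthonormality and dimension counting, the $n-1$ limiting rows already exhausting $\{x_k=0\}$ and leaving only $\pm e_k$, so that $a_n^\ast=\lambda_k$ and $b_{n-1}^\ast=0$ follow without ever dividing by $b_{n-1}$. The only auxiliary points to verify are that $b_1,\dots,b_{n-2}$ stay bounded away from zero near $\boldsymbol{y}$ (which legitimizes the continuity of the first $n-1$ rows and is exactly the nonvanishing supplied by Lemma~\ref{lemma_continuous} in dimension $n-1$), and that the sign built into the reconstruction makes $U_n^\ast$ a definite one of $\pm e_k$, which in any case affects neither $a_n^\ast$ nor the block structure.
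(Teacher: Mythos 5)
There is nothing in the paper to compare your argument against: the authors state explicitly that Proposition~\ref{proposition_4} ``we have numerically examined though a proof is not given here.'' Your proposal is therefore not an alternative to the paper's proof but a genuine filling of a gap the paper leaves open, and as far as I can check it is correct. The two key moves are exactly right. First, identifying the first $n-1$ rows of the reconstruction at the limit point $\boldsymbol{y}$ with the $(n-1)$-dimensional reconstruction of Lemma~\ref{lemma_continuous} applied to $\tilde{\boldsymbol{y}}\in S^{n-2}_{>0}$ and spectrum $\Lambda\setminus\lambda_k$: this is what supplies the nonvanishing of $b_1,\dots,b_{n-2}$ near $\boldsymbol{y}$, legitimizes passing to the limit through the recursion, and simultaneously yields the convergence of the top-left block to $H_{n-1}\in T_{\Lambda\setminus\lambda_k}$ and the propagation $U^{(l)}_{ik}\to 0$ for $i\le n-1$. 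Second, bypassing the singular division by $b_{n-1}\to 0$ and pinning down the last row by orthonormality and dimension counting; this is precisely where a naive limit of the recursion would fail.

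One overstatement should be repaired, though it does not damage the conclusion. The orthogonality argument by itself does not show that the full sequence $U^{(l)}$ converges: the sign convention of Lemma~\ref{lemma_continuous} constrains only the \emph{first} row of $U^{(l)}$, whose $k$-th entry tends to $0$, so a priori $U_n^{(l)}$ could oscillate between neighborhoods of $+e_k$ and $-e_k$. The clean way to finish is the one you gesture at in your closing remark: every subsequential limit of $U_n^{(l)}$ is $\pm e_k$, hence $(U^{(l)}_{nj})^2\to\delta_{jk}$ along the full sequence, and therefore the two corner entries
\begin{equation*}
a_n^{(l)}=\sum_j \lambda_j \bigl(U^{(l)}_{nj}\bigr)^2\to\lambda_k,
\qquad
b_{n-1}^{(l)}=\sum_j \lambda_j\, U^{(l)}_{n-1,j}U^{(l)}_{nj}\to 0
\end{equation*}
converge irrespective of the sign (for $j=k$ use $U^{(l)}_{n-1,k}\to 0$, for $j\neq k$ use $U^{(l)}_{nj}\to 0$). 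For the same reason, phrase the corner step directly in terms of these entries rather than via ``$H^{\ast}e_n=\lambda_k e_n$,'' which presupposes that the limit matrix $H^{\ast}$ already exists. With that rewording, your argument is a complete proof of the proposition and strengthens the paper, whose only support for this statement is numerical.
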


~\\
With the results stated above, we introduce the following algorithm to find a family of granular chain systems with the eigenvalues of their linearized models gradually removed:
\begin{enumerate}
\item Start from a granular chain with $\{r_j\}_{1\leq j\leq n}$ and $\{\gamma_j\}_{2\leq j\leq n}$, whose linearization around the zero solution corresponds to matrix $H$. Suppose the spectrum $H$ is $\Lambda=\{ \lambda_1 > \lambda_2 > ... > \lambda_n >0 \}$ and its image point in $S_{>0}^{n-1}$ is $\boldsymbol{x}=(x_1,x_2,...,x_n)$.
\item Pick a sequence of points $\boldsymbol{x}^{(l)}=(x_1^{(l)}, x_2^{(l)}, ... , x_n^{(l)})$ in $S_{>0}^{n-1}$ such that $\lim_{l\to\infty} x_k^{(l)} = 0$ and $\lim_{l\to\infty}x_j^{(l)} = y_j >0$ for $j\neq k$. For the numerical computation to run smoothly, we require that $\boldsymbol{x}^{l}$ is chosen close enough to $\boldsymbol{x}^{l-1}$ for each $l$ in the sequence.
\item  For each $\boldsymbol{x}^{(l)}$, by Lemma~\ref{lemma_granular}, we can use numerical methods such as Newton's method to compute the granular chain with $\{r_j^{(l)}\}_{1\leq j\leq n}$ and $\{\gamma_j^{(l)}\}_{2\leq j\leq n}$. To be more specific, we apply Newton's method to the equations (\ref{eqn_fg_1} - \ref{eqn_fg_4}) using $\{r_j^{(l)}\}_{1\leq j\leq n}$ and $\{\gamma_j^{(l)}\}_{2\leq j\leq n}$ as variables. According to Lemma~\ref{lemma_granular}, if we choose $\boldsymbol{x}^{(l)}$ close enough to $\boldsymbol{x}^{(l-1)}$, then by continuity $\{r_{j}^{(l-1)}\}_{1\leq j\leq n}$ and $\{\gamma_{j}^{(l-1)}\}_{2\leq j\leq n}$ should be a good guess for the solution and the Newton's method will have a second-order convergence.

\item When $x_k^{(l)}$ is close enough to zero or $r_n^{(l)}$ and $\gamma_n^{(l)}$ are large enough, we drop the $n$-th bead of the chain so that the length of the chain decreases by $1$.

    To be more specific, we find the matrix $H_{n-1}\in T_{\Lambda\backslash \lambda_k}$ such that it corresponds to $(y_1,y_2,...,y_{k-1},y_{k+1},...,y_{n})$ in $S_{>0}^{n-2}$. Then we solve for a granular chain of length $n-1$ from $H_{n-1}$ while using $\{r_j^{(l)}\}_{1\leq j\leq n-1}$ and $\{\gamma_j^{(l)}\}_{2\leq j\leq n-1}$ as an initial guess.
\end{enumerate}

By repeating the steps 1-4, we can remove eigenvalues from the spectrum $\Lambda$ one by one in any order and decrease the length of the chain accordingly.

Reversing the removal process above, we can also add eigenvalues to the spectrum and increase the length of the granular chain, as shown in the following algorithm:
\begin{enumerate}
\item Start from a granular chain with $\{r_j\}_{1\leq j\leq n}$ and $\{\gamma_j\}_{2\leq j\leq n}$, whose linearization around the zero solution corresponds to matrix $H$. Suppose the spectrum $H$ is $\Lambda=\{ \lambda_1 > \lambda_2 > ... > \lambda_n >0 \}$ and its image point in $S_{>0}^{n-1}$ is $\boldsymbol{x}=(x_1,x_2,...,x_n)$.

\item For $ \lambda_0 \in (\lambda_{k},\lambda_{k-1}) $ (it is also okay if $\lambda_0 >\lambda_1$ or $\lambda_0<\lambda_n$), we add it to the spectrum $\Lambda$ and put it right between $\lambda_{k-1}$ and $\lambda_k$. The newly obtained spectrum is denoted by $\tilde{\Lambda}$. We extend $\boldsymbol{x}$ to $\tilde{\boldsymbol{x}}=(\tilde{x}_1,...,\tilde{x}_{k-1}, x_0, \tilde{x}_{k+1}, ... , \tilde{x}_{n+1})$ where $0<x_0\ll 1$ and find its associated matrix $\tilde{H}$ in $T_{ \tilde{\Lambda} }$.

\item We change $\boldsymbol{r}$ to $\tilde{\boldsymbol{r}}=(r_1,...,r_n, r_0)$ and change $\boldsymbol{\gamma}$ to $\tilde{\boldsymbol{\gamma}}=(\gamma_2,...,\gamma_n, \gamma_0)$ where $r_0\gg 1$ and $\gamma_0=\lambda_0\frac{4}{3}\rho\pi r_0^3$. Using $\tilde{\boldsymbol{r}}$ and $\tilde{\boldsymbol{\gamma}}$ as initial guess, we solve for a granular chain of length $n+1$ from $\tilde{H}$.
\end{enumerate}

\subsubsection*{Example 2: removing an eigenfrequency}
In Fig.~\ref{fig4}, we show an example of decoupling an eigenfrequency from the linear counterpart of a granular chain system. Here we start from the granular chain system with
\begin{itemize}
\item radii $r_j=1$ for $1\leq j\leq 6$
\item springs $\gamma_1=0$ and $\gamma_j=1$ for all $2\leq j\leq 6$
\item precompressions satisfying $k_1 d_1^p=k_2 d_2^p=...=k_{7} d_{7}^p=C=1$,
\end{itemize}
and its corresponding matrix in $T_{\Lambda}$ has
\begin{itemize}
\item spectrum $\Lambda: ( 1.5483, 1.3549, 1.0702, 0.7542, 0.4781, 0.2987)$
\item image in $S^{5}_{>0}$: $\boldsymbol{x}=(0.1649, 0.3221, 0.4542, 0.5327, 0.5141, 0.3386)$.
\end{itemize}

Setting $\gamma_1$ fixed and letting $x_2 \to 0$, we remove $\lambda_2=1.3549$ from the spectrum $\Lambda$ while keeping all the other eigenvalues. In the process of removing eigenvalue $\lambda_2$, at first we pick a path in the space $S_{>0}^{5}$ to make $x_2\to 0$, letting $x_j$ for $j\neq 2$ gradually become the same, as reflected in the first $18$ steps in Fig.~\ref{fig4}. At the same time, the pair $(r_6,\gamma_6)$ grow to $\infty$ and then the last bead of the chain is dropped at the $19$-th step, after which the length of the granular chain becomes $5$ and only $5$ eigenfrequencies remain. Although $\gamma_2^{(19)}<0$ is an unphysical value, the ability of moving $\boldsymbol{x}$ in $S_{>0}^{4}$ allows us to arrive at a system with all $\gamma_j$ being nonnegative in the end. In particular, the granular chain at the destination ($38$-th step) has the following parameters:
\begin{itemize}
\item $\tilde{ \boldsymbol{r} } = ( 0.9839, 0.9295, 0.9672, 1.1627, 1.2108)$
\item $\tilde{ \boldsymbol{\gamma} } = (0, 0.6050, 0.6792, 2.0333, 1.0924)$.
\end{itemize}

\begin{figure}[H]
\centering
\begin{tabular}{cc}
\includegraphics[width=6cm]{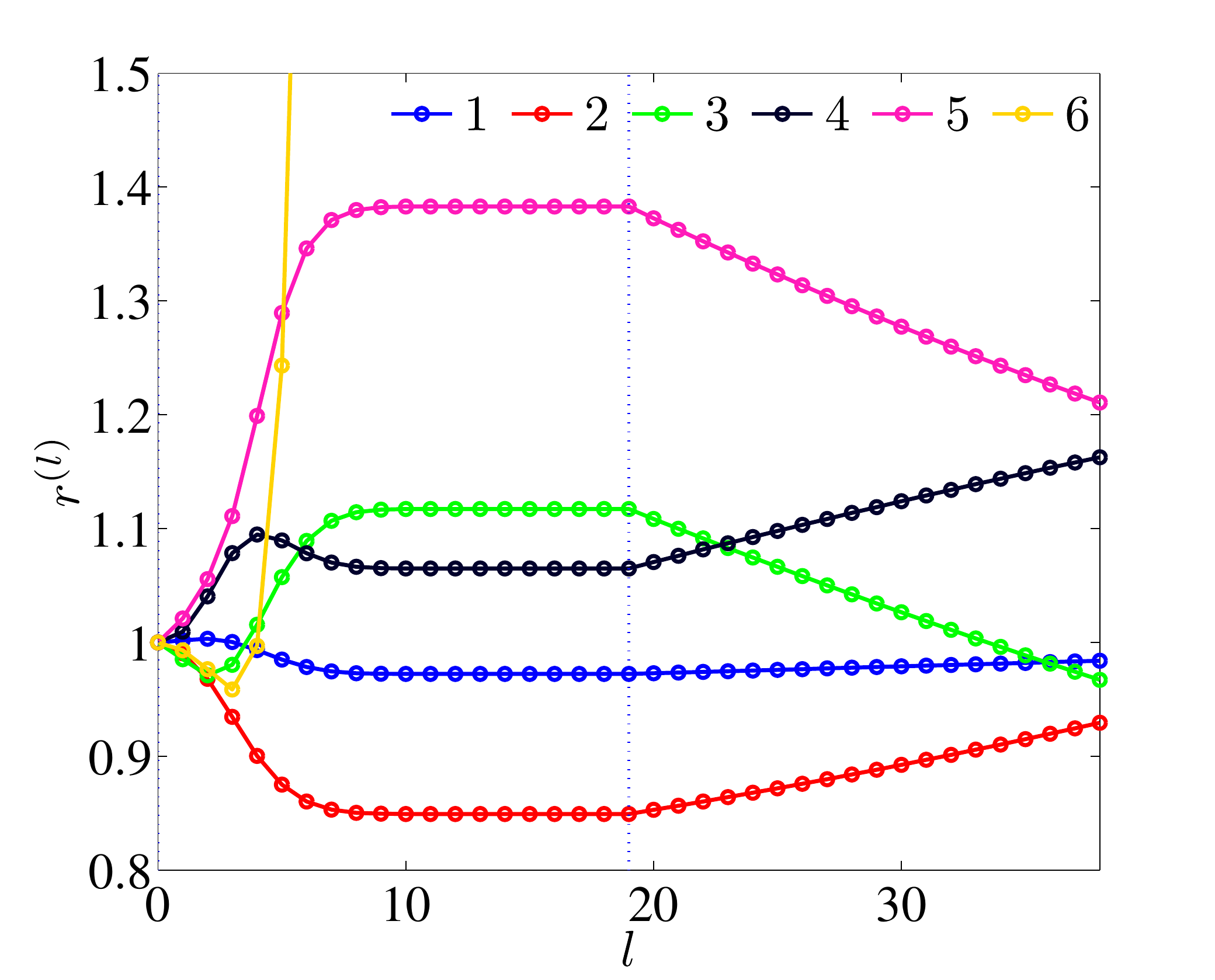}
\includegraphics[width=6cm]{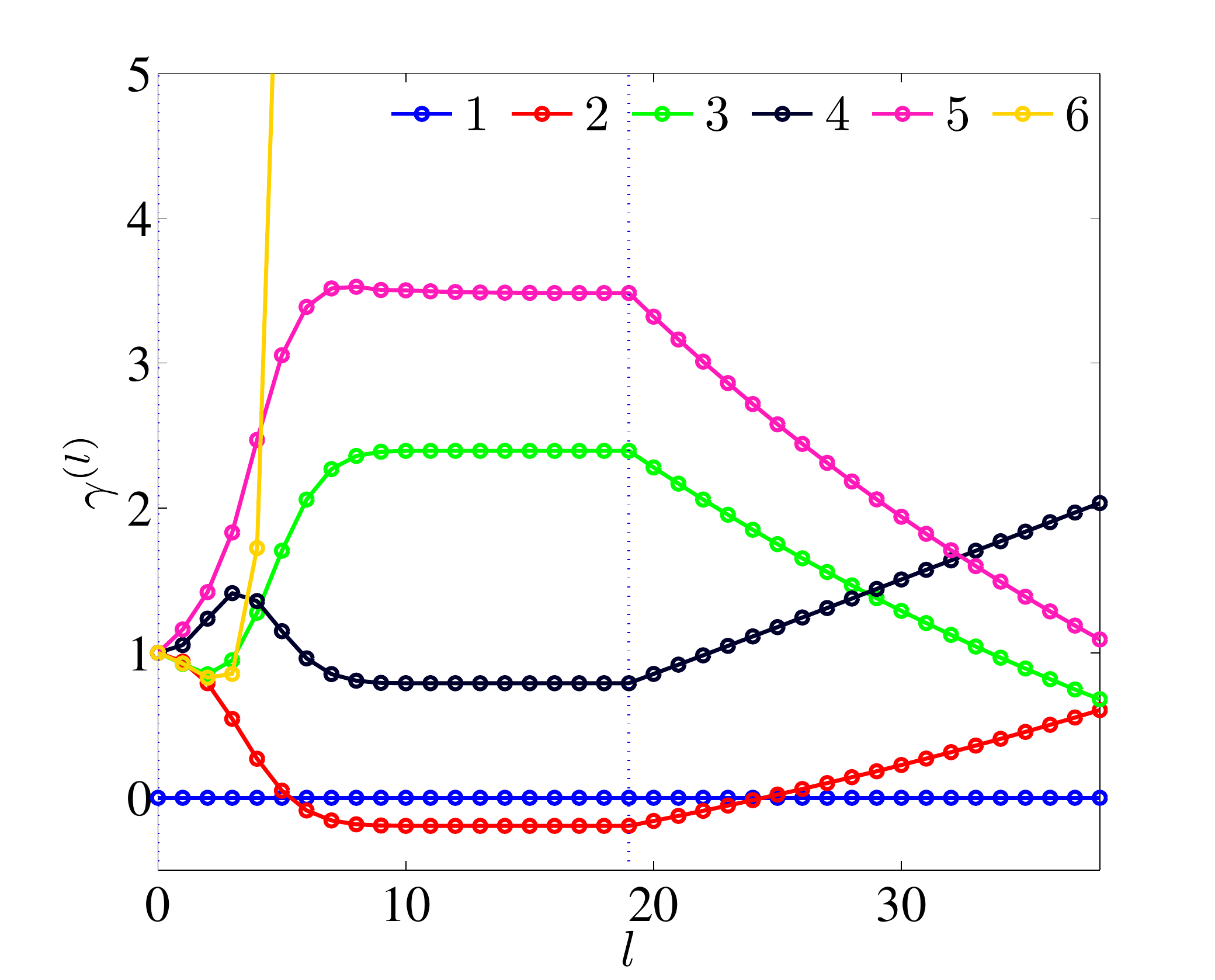}\\
\includegraphics[width=6cm]{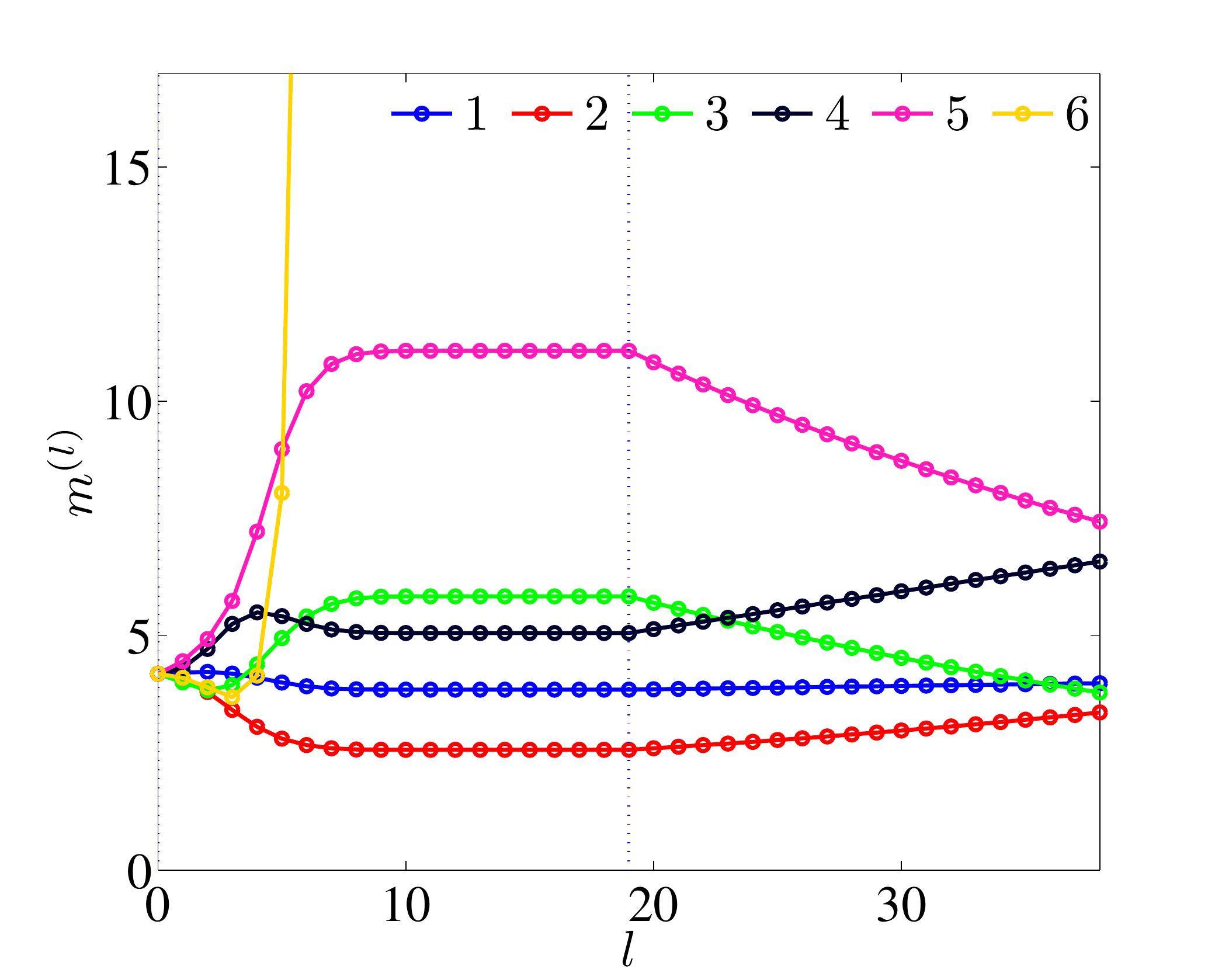}
\includegraphics[width=6cm]{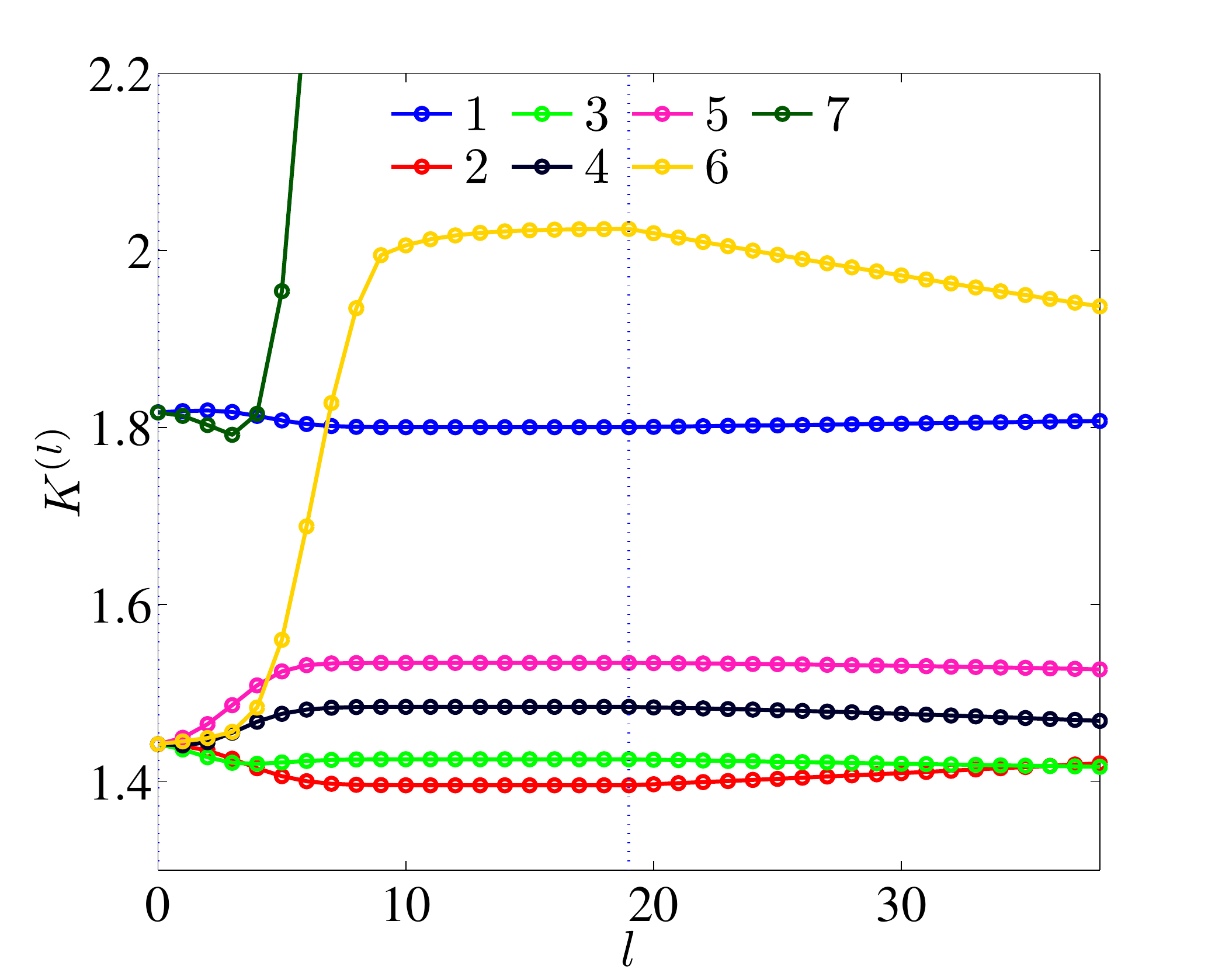}\\
\includegraphics[width=6cm]{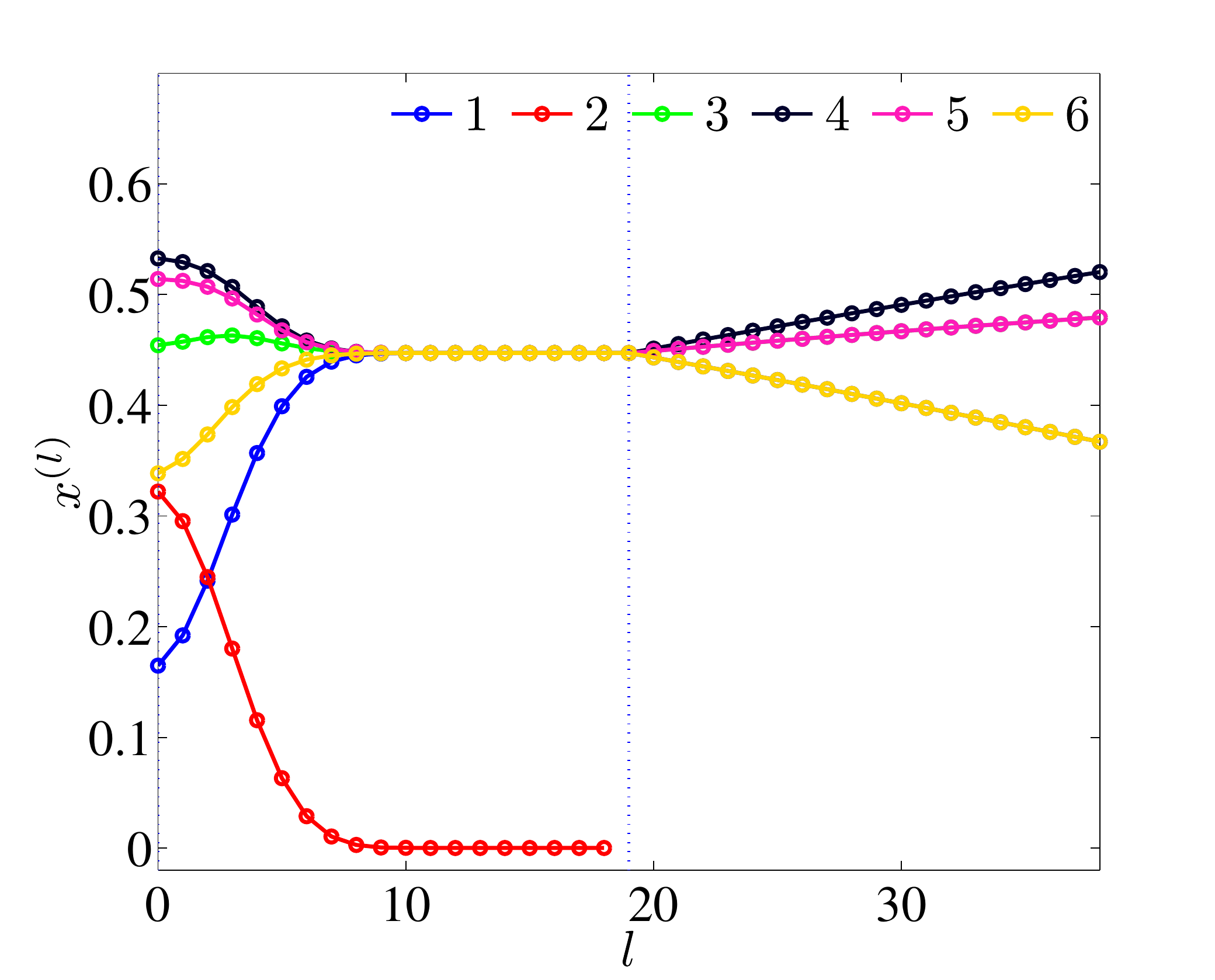}
\includegraphics[width=6cm]{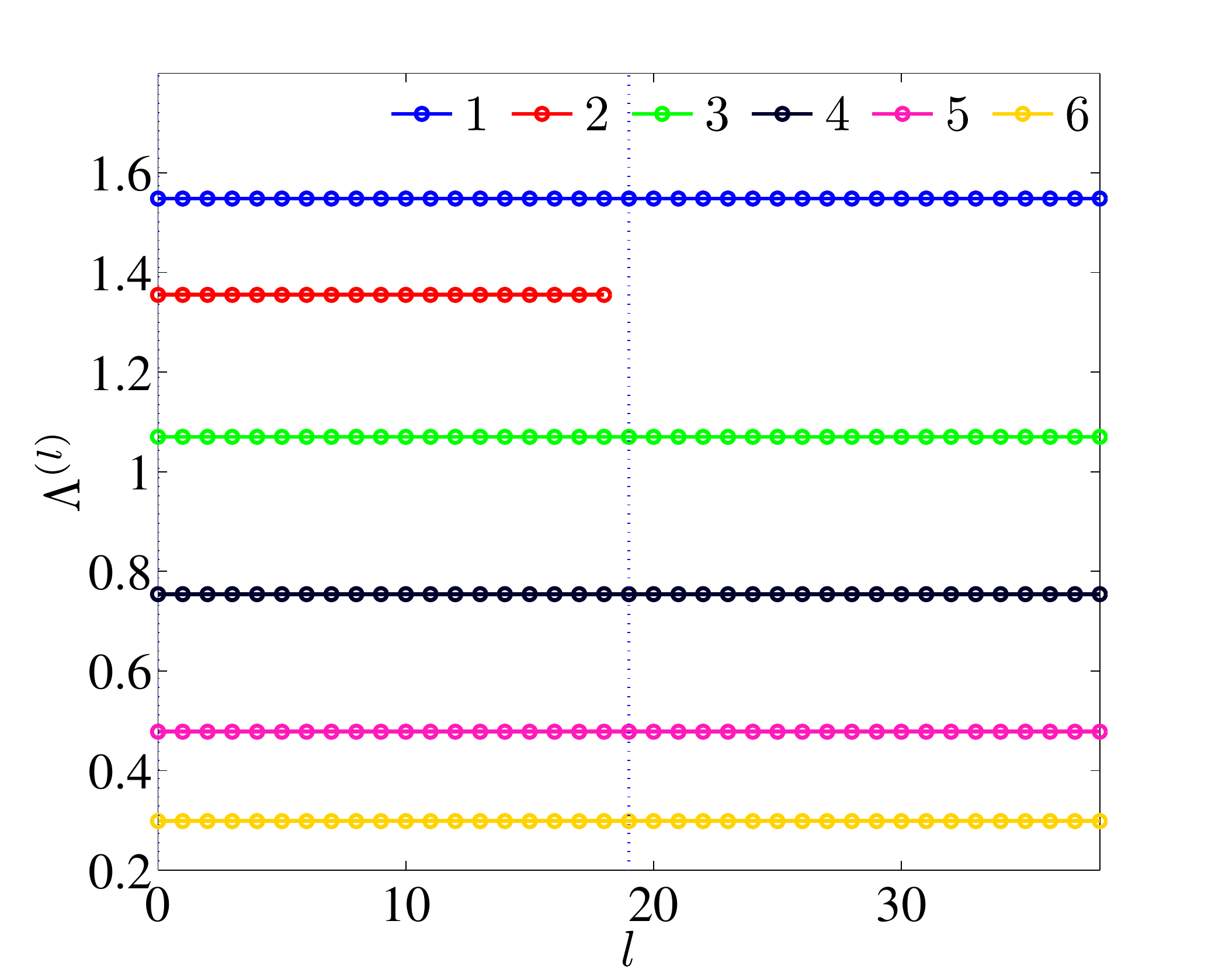}
\end{tabular}
\caption{Top row: radii (left panel) and springs (right panel) in granular chain systems; middle row: masses (left panel) and spring coefficients (right panel) in the linearized systems; bottom: image points (left panel) in the space $S^{n-1}_{>0}$ and the spectrum (right panel). Again, each line with circles stands for a parameter in different isospectral systems. The circles in each column correspond to different parameters in the same system. At step $0$, the granular chain consists of $6$ beads while there are only $5$ after step $19$, which is indicated by the vertical dotted line. It can be clearly seen that we have kept all the eigenvalues except the second largest one. We notice that
  $\gamma_2^{(l)}$ is negative for $6 < l < 24$, yet it becomes
  positive for larger (and smaller) values of $l$, corresponding in the latter
cases to physically relevant systems.}
\label{fig4}
\end{figure}

In pratical, if one wants all $\gamma_j$ to stay nonnegative at every step, then $\boldsymbol{x}^{(18)}$ can be replaced by $(x_1^{(38)}, x_2^{(38)}, x_3^{(38)}, x_4^{(38)}, x_5^{(38)}, x_6^{(38)})$ with $x_2^{(38)}\approx 0$, and a path in $S_{>0}^5$ should be carefully chosen to connect $\boldsymbol{x}^{(0)}$ and the new $\boldsymbol{x}^{(18)}$ such that $\boldsymbol{\gamma}^{(l)}$ never become negative on the path.

\subsubsection*{Example 3: adding an eigenfrequency}
In Fig.~\ref{fig5}, we reverse the process in Example 2 to
showcase an example of the adding algorithm. To be more specific, we start with the granular chain with
\begin{itemize}
\item radii $\tilde{ \boldsymbol{r} } = ( 0.9839, 0.9295, 0.9672, 1.1627, 1.2108)$
\item springs $\tilde{ \boldsymbol{\gamma} } = (0, 0.6050, 0.6792, 2.0333, 1.0924)$
\item spectrum $\Lambda: ( 1.5483, 1.0702, 0.7542, 0.4781, 0.2987)$.
\end{itemize}

Extending the image point in $S^{4}_{>0}$ to $S^{5}_{>0}$ by adding a small number between the second and the third component, we put the eigenvalue $\lambda_2=1.3549$ back to the spectrum and compute the corresponding matrix $H$. At the same time, we add a bead with large $r_6$ and $\gamma_6=\lambda_2\frac{4}{3}\rho\pi r_6^3$ to the end of chain so that it can be a good approximation to the system to be solved from $H$.

As Fig.~\ref{fig5} illustrates, eventually we restore the starting system in Example 2 with
\begin{itemize}
\item radii $r_j=1$ for $1\leq j\leq 6$
\item springs $\gamma_j=1$ for all $2\leq j\leq 6$
\end{itemize}

\begin{figure}[H]
\centering
\begin{tabular}{cc}
\includegraphics[width=6cm]{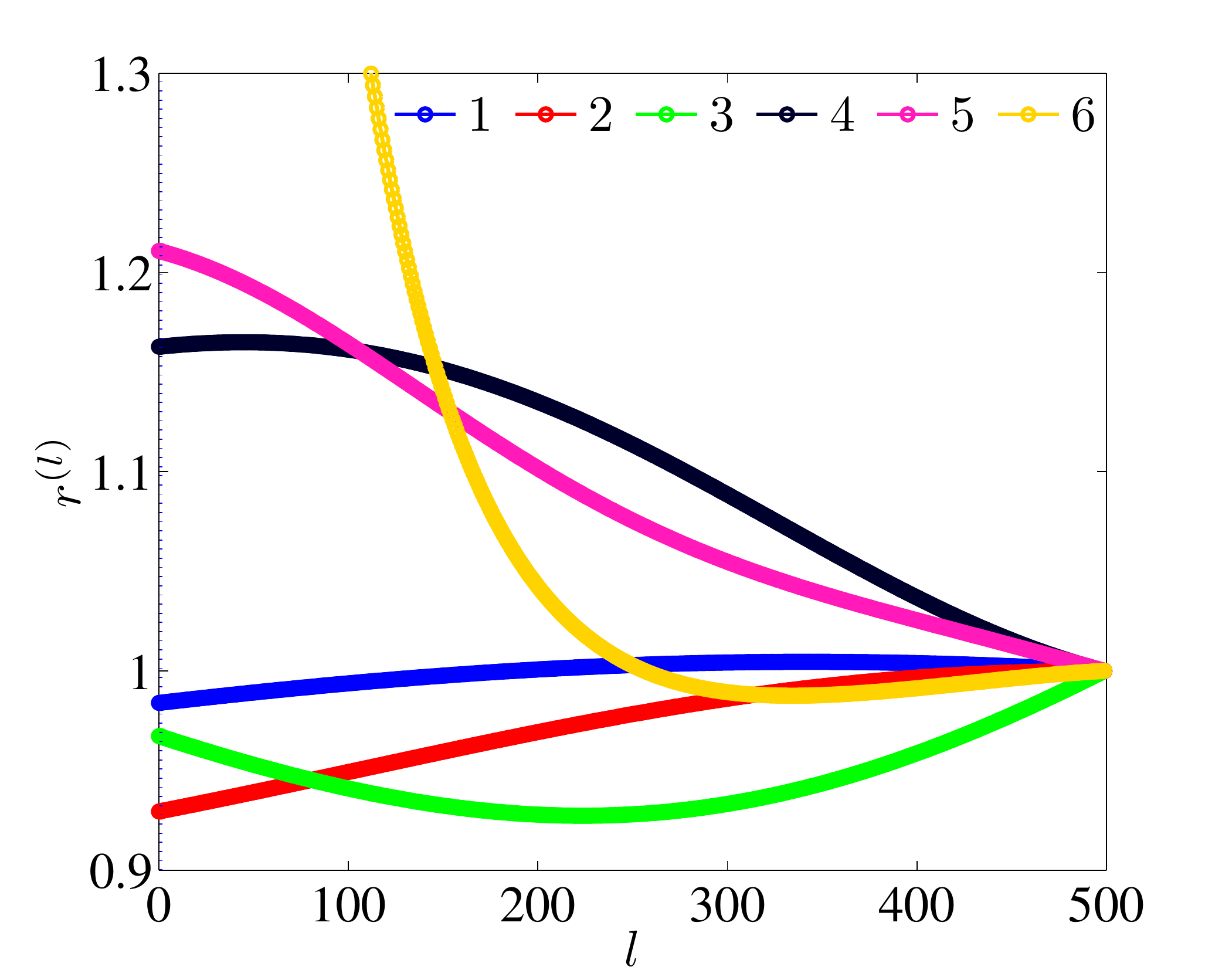}
\includegraphics[width=6cm]{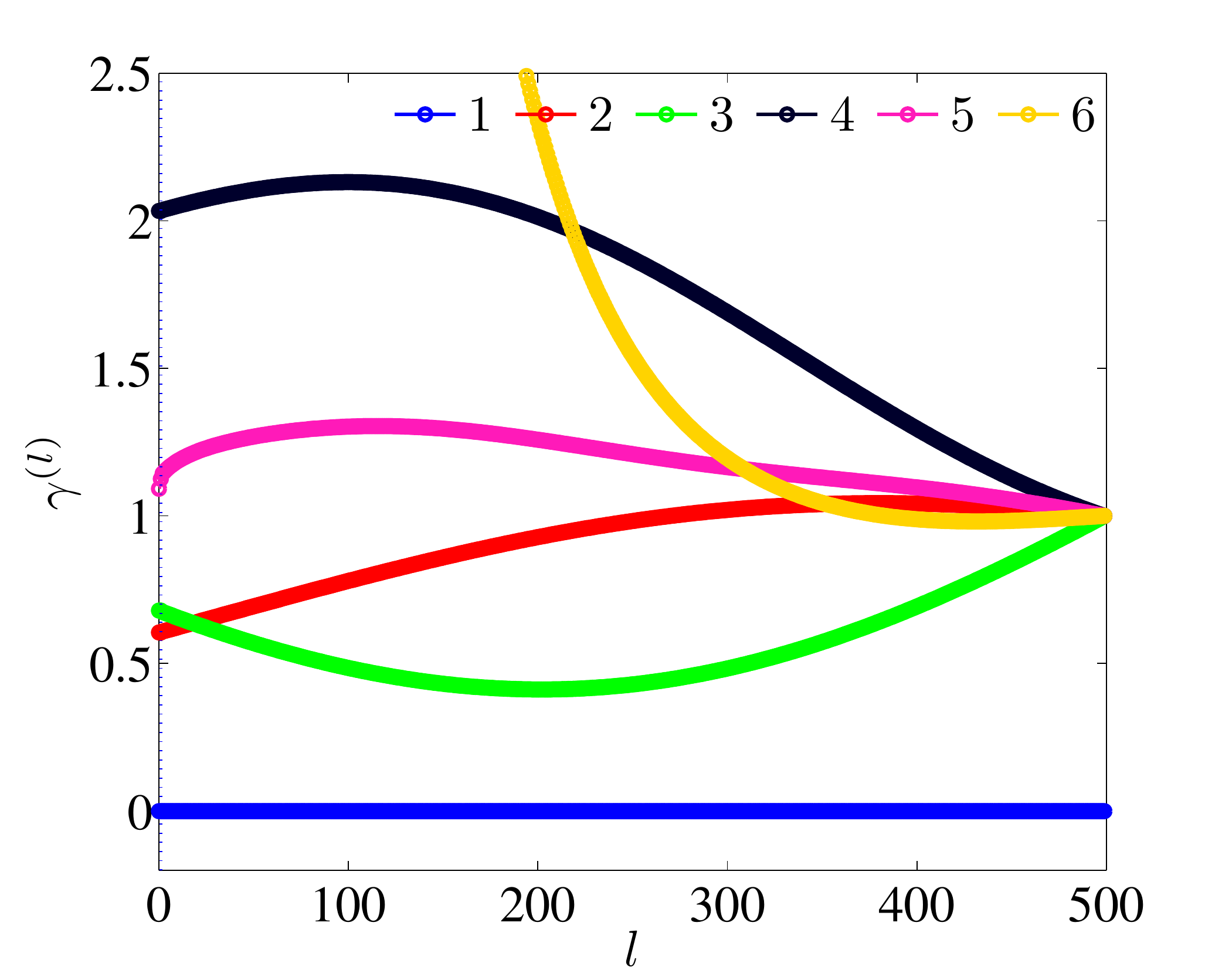}\\
\includegraphics[width=6cm]{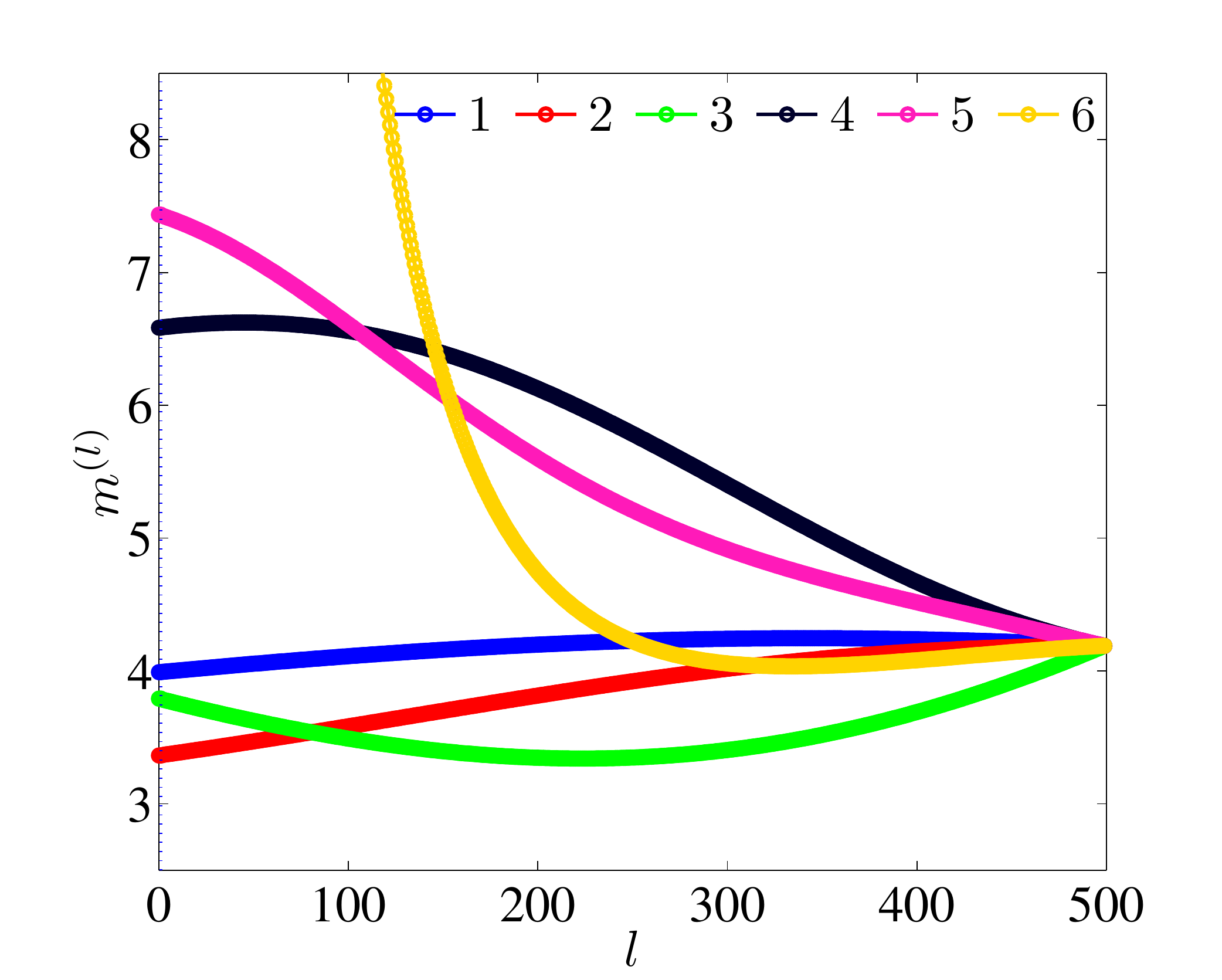}
\includegraphics[width=6cm]{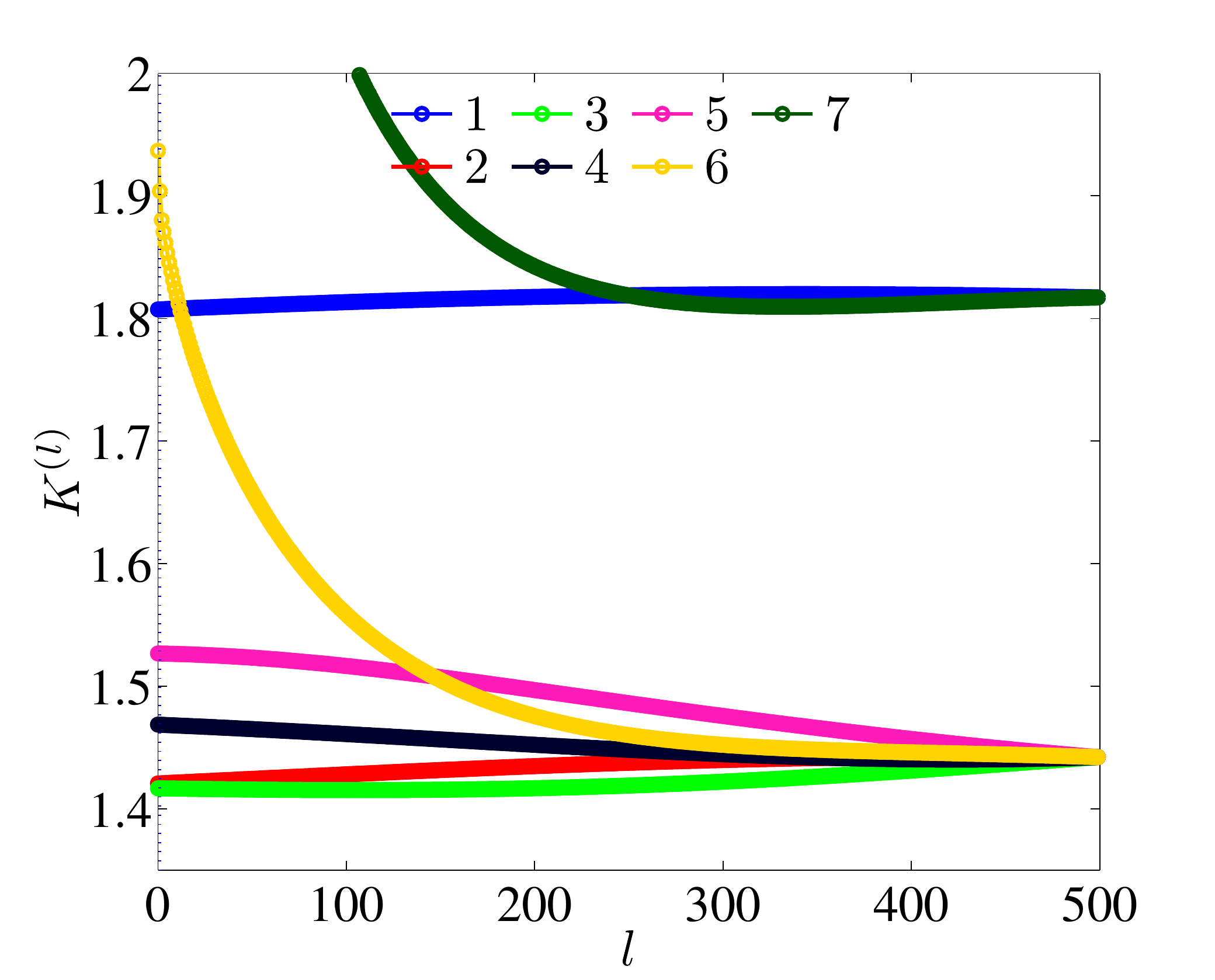}\\
\includegraphics[width=6cm]{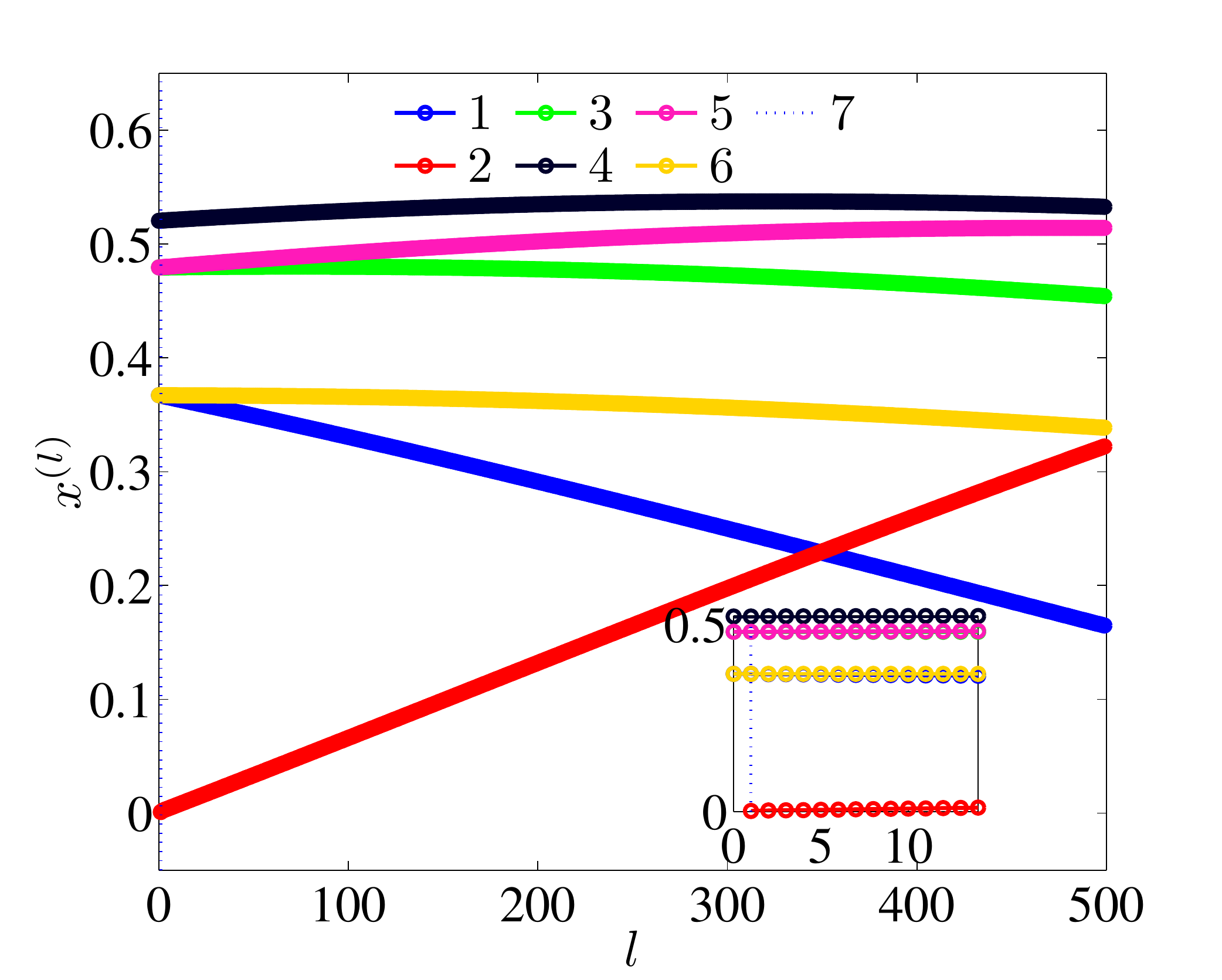}
\includegraphics[width=6cm]{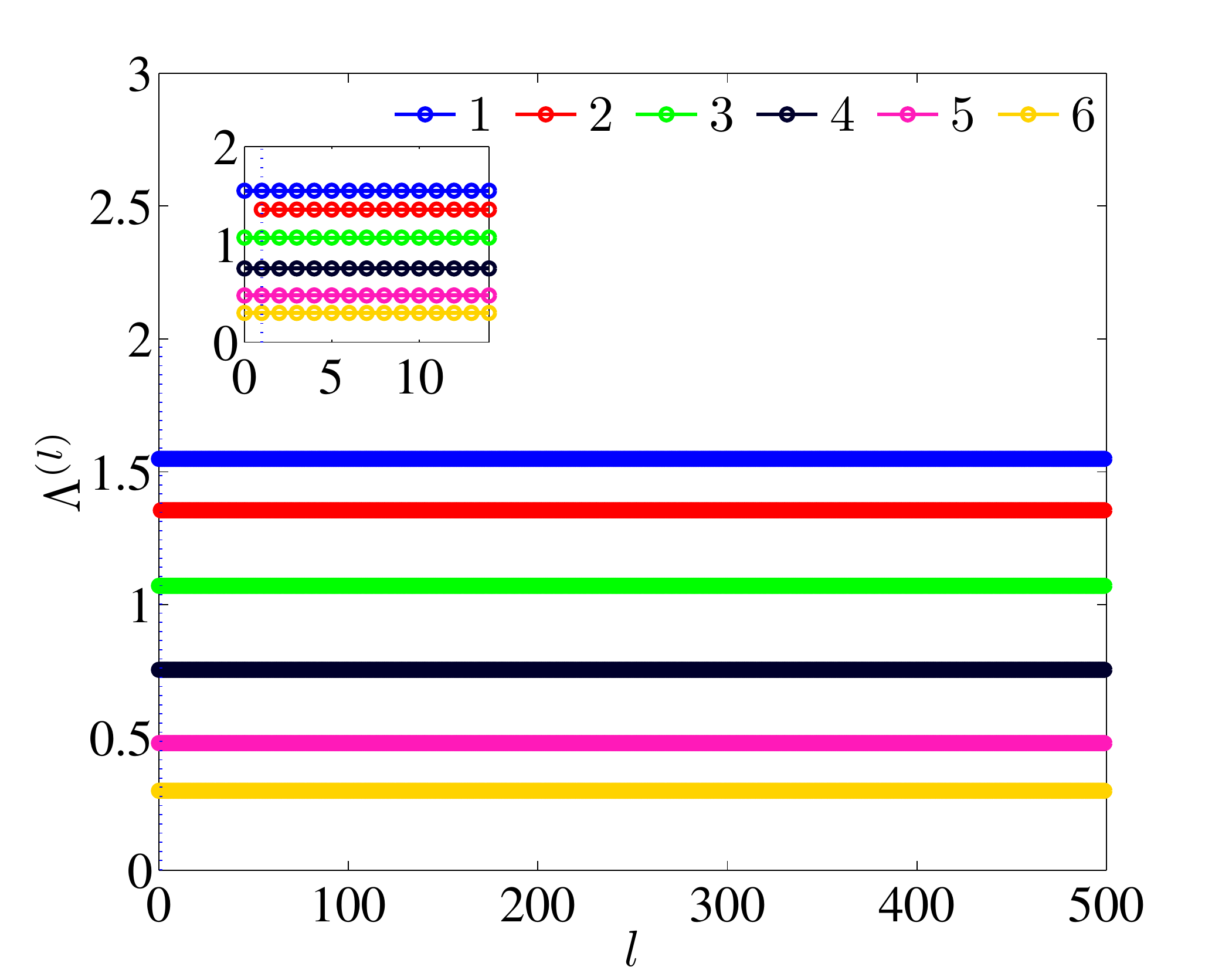}
\end{tabular}
\caption{Top row: radii and springs in granular chain systems; middle row: masses and spring coefficients in the linearized systems; bottom: image point in the space $S^{n-1}_{>0}$ and the spectrum. Similar to the previous
figure, but now for the case of adding an eigenfrequency to the system.}
\label{fig5}
\end{figure}

\section{Wave dynamics in isospectral granular chain}
\subsection{Linear regime}
We now evaluate wave transmission characteristics of isospectral granular chains. Using the state-space approach~\cite{SS}, we calculate the transmission gain as a function of driving frequency $\omega$. Let the dynamic disturbance $F_d$ be applied to the first bead of the system. To measure the force output, $F_N$ at the other end of the system, we write the equations of motion as follows
\begin{eqnarray}
\begin{aligned}
\label{eqn2_12}
\dot{\psi}=A_1\psi + A_2 F_{d}, \
F_N=A_3\psi + A_4F_d
\end{aligned}
\end{eqnarray}

\noindent where,
\[\psi= \begin{pmatrix}
    z_1\\\vdots\\z_n\\\dot{z}_1\\\vdots\\\dot{z}_n
\end{pmatrix};
A_1=\begin{pmatrix}
0 & I\\ M^{-1}B & 0
\end{pmatrix};
A_2=\begin{pmatrix}
0\\\vdots\\0\\1/m_1\\\vdots\\0
\end{pmatrix};
A_3=\begin{pmatrix}
0\\\vdots\\K_{n+1}\\0\\\vdots\\0
\end{pmatrix}^{T};
A_4=0 \]

For the isospectral granular chains, NS-1 and NS-2, discussed in Example 1, we linearize their dynamics. Then we solve the equation (\ref{eqn2_12}) using the Bode function in MATLAB to obtain transmission gain ($F_N/F_d$) as shown in Fig.~\ref{fig6}. We observe that the transmission characteristics of both resulting linearized systems are very similar. This is an interesting observation as it indicates that an \textit{ordered} chain (NS-1) 
can present very similar response characteristics to an apparently \textit{disordered} chain (NS-2) in terms of transmission gain for an elastic wave through the structure. Therefore, this opens up the possibility of constructing a whole  family of \textit{disordered} granular systems with similar wave transmission characteristics, as was theoretically explained through the method of
the previous section.
\begin{figure}[H]
\centering
\includegraphics[width=12cm]{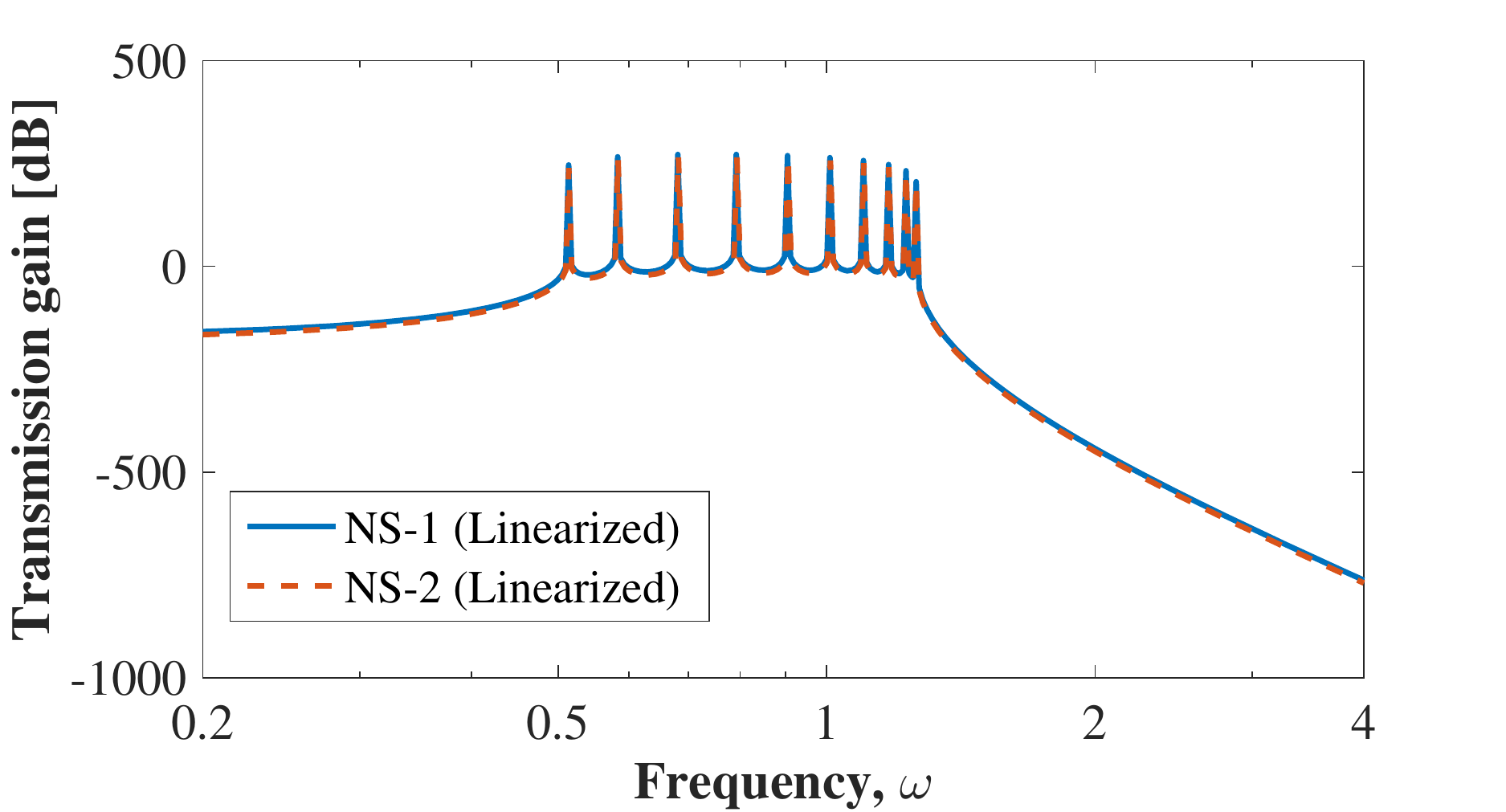}
\caption{Similar transmission gains for two isospectral granular chains, ordered  NS-1 (left) and disordered NS-2 (right), when the dynamics is linearized.}
\label{fig6}
\end{figure}

%

\subsection{Nonlinear regime}

Here we consider the nonlinear dynamics of two granular chain systems (NS-1) and (NS-2) whose linearized models, denoted by (S-1) and (S-2), are isospectral. In order to study the dynamics of a linear system such as (S-1) or (S-2), it suffices to consider its eigenmodes $\{ \boldsymbol{v}_j \cos(\omega_j t) \}_{1\leq j\leq n}$ where $G\boldsymbol{v}_j=M^{1/2}\boldsymbol{v}_j$ is an eigenvector of $H$ with the eigenvalue $\omega_j^2$. Since any initial position of the system can be decomposed as some combination of the eigenmodes and each eigenmode solves the system individually, the exact evolution of the system at any time is available. If the two systems (S-1) and (S-2) are isospectral, their corresponding matrices $H$ will have probably different eigenvectors but the same eigenvalues. Thus, eigenmodes of these two systems may have different profiles but the same frequencies. Suppose a granular chain starts with an initial state that has amplitudes small enough at all bead locations. Then, the evolution up to a finite time can be approximated by that of its linearized model, which is explicitly solved by its eigenmodes and decomposed
in the form of a linear superposition. As we increase the strength of the initial perturbation, the nonlinearity gradually becomes apparent in both systems and the dynamics of the two isospectral granular chains become different.

\begin{figure}[t]
\centering
\begin{tabular}{cc}
\includegraphics[width=7cm]{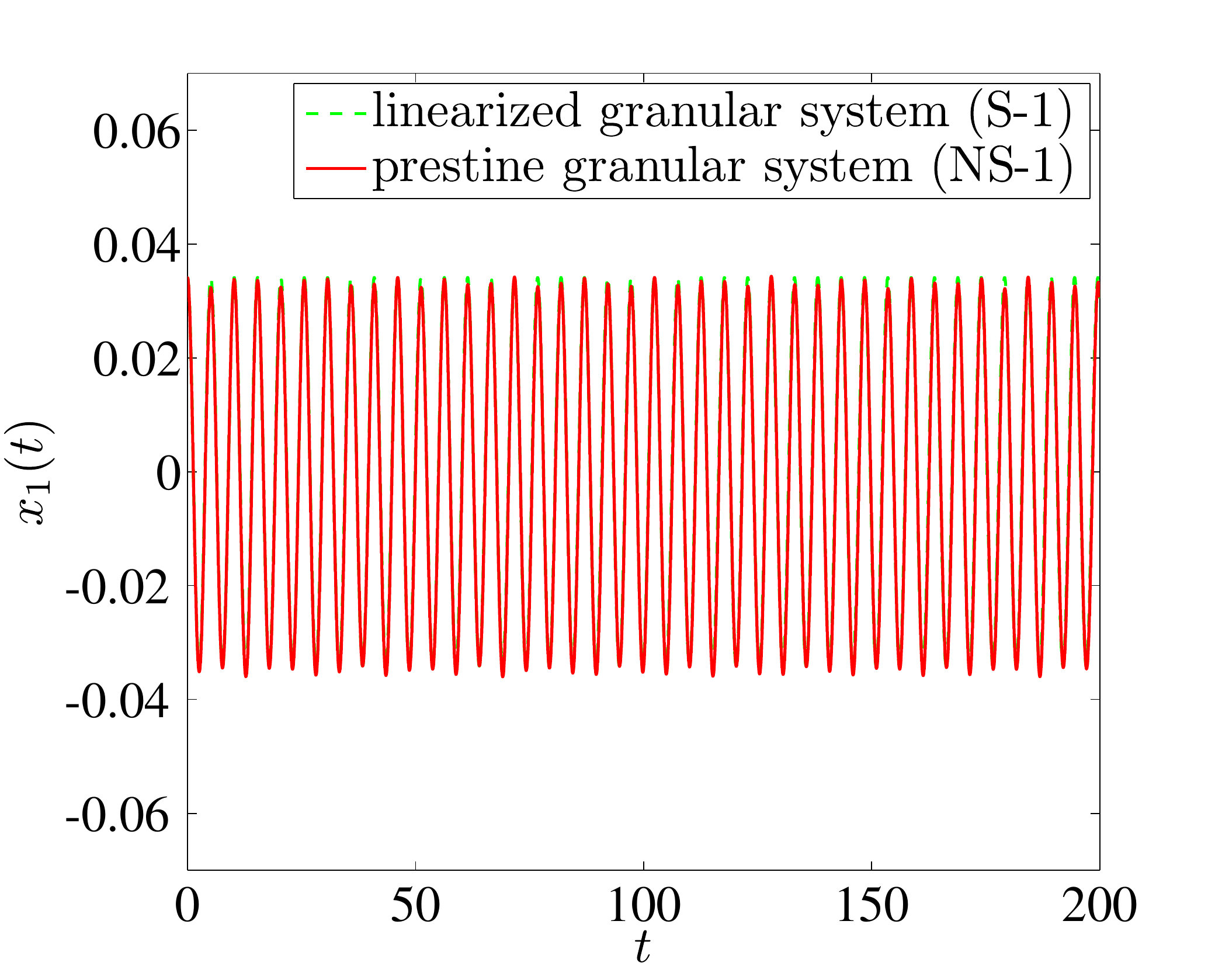}
\includegraphics[width=7cm]{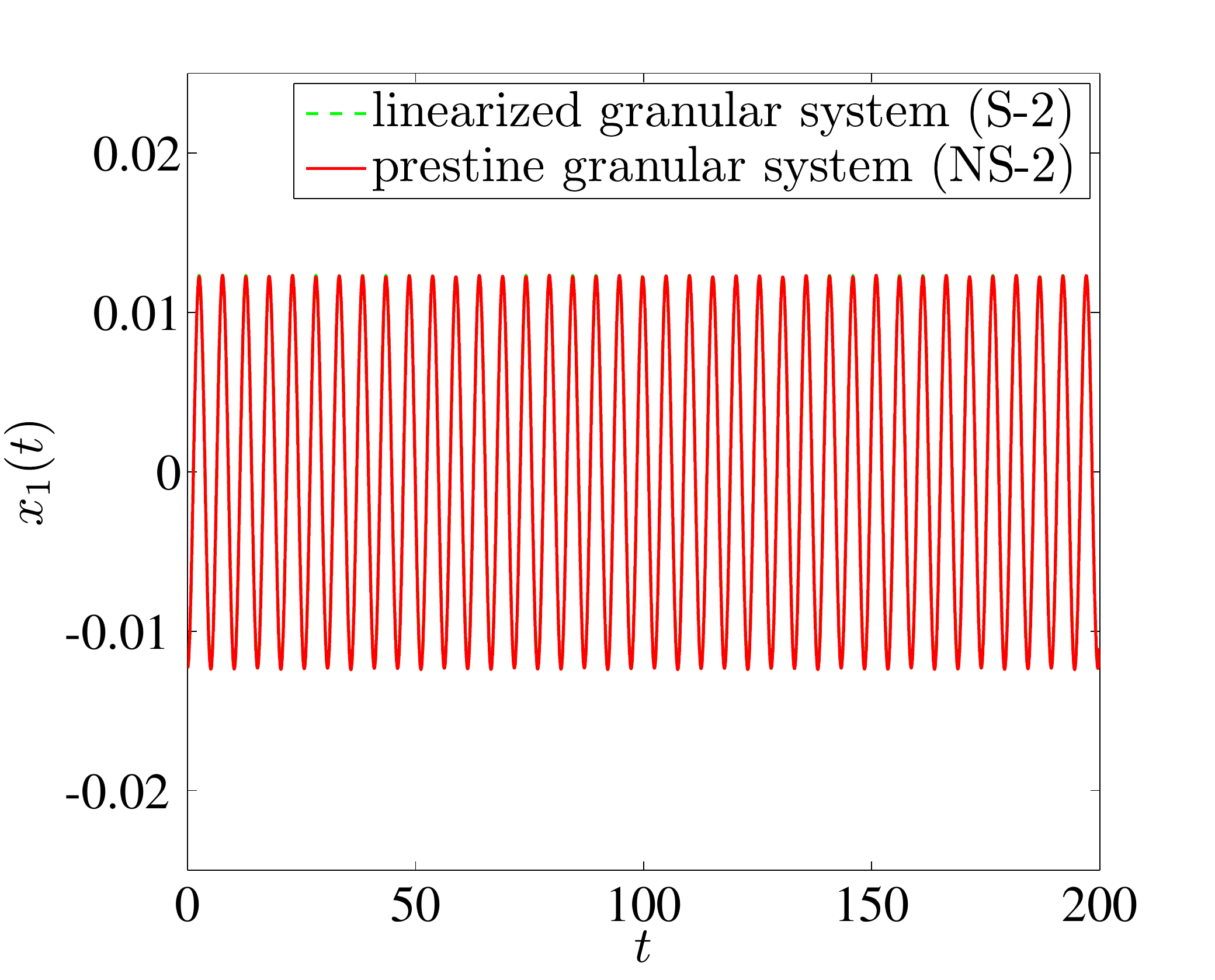}\\
\includegraphics[width=7cm]{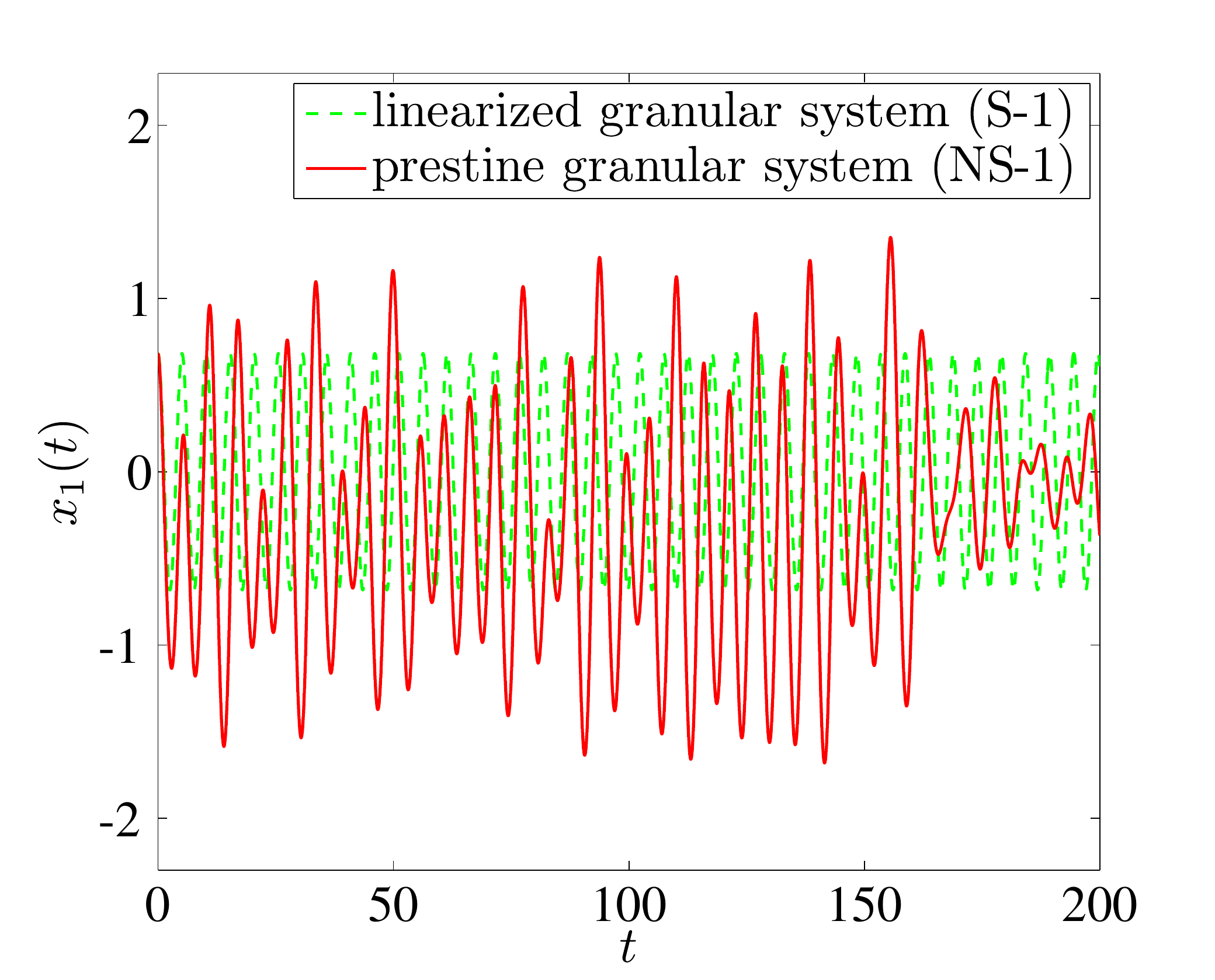}
\includegraphics[width=7cm]{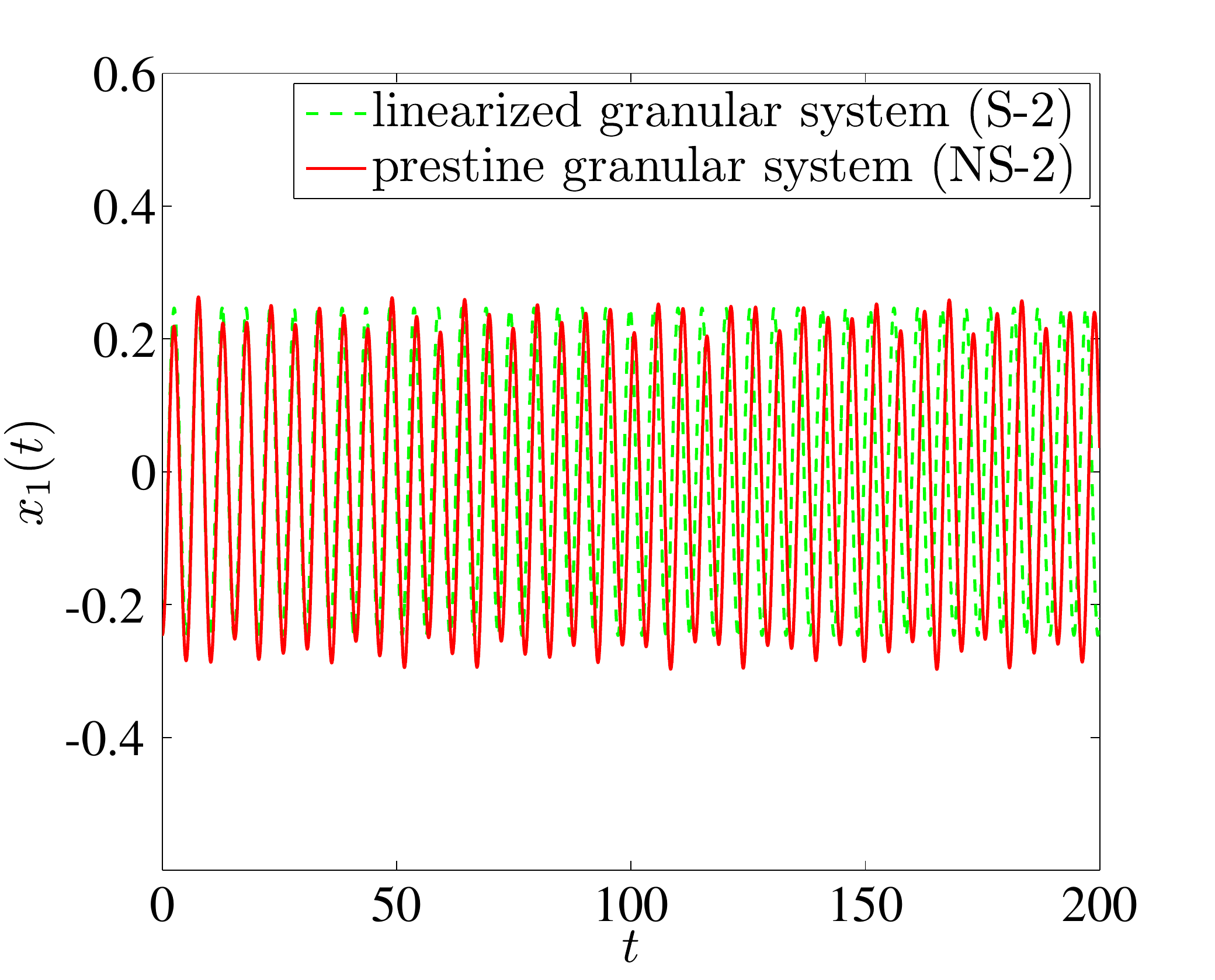}
\end{tabular}
\caption{In the top left (right) panel, we show the time evolution of the first bead in the granular chain NS-1 (NS-2) with a initial perturbation of strength $\delta=0.1$. In the bottom panels, the initial perturbation has been increased such that $\delta=2$, clearly showcasing the differences in the 
anharmonic vibration between the linearly isospectral chains. The dotted line illustrates the evolution of the same initial perturbation in the linearized system. }
\label{fig8}
\end{figure}

In Fig.~\ref{fig8}, we consider two generalized granular chains (NS-1) and (NS-2), which are just the starting and resulting systems in Example 1. By construction, these two systems have the same eigenfrequencies $$\boldsymbol{\omega}=\{ 1.2603, 1.2276, 1.1737, 1.0999, 1.0087, 0.9042, 0.7922, 0.6812, 0.5833, 0.5140\}$$ for their linearized models (S-1) and (S-2).
If we excite each of these two systems with a small perturbation that has the form $\delta \boldsymbol{v}_2$ where $\boldsymbol{v}_2$ is the eigenmode for $\omega_2={1.2276}$ and $\|\boldsymbol{v}_2\|_2=1$, the system will almost follow that eigenmode to evolve periodically. This trend can be seen in the top panels of Fig.~\ref{fig8} when $\delta = 0.1 \ll 1$. As the perturbation becomes stronger, the time evolution of each system starts to deviate from a single
eigenmode (or more generally a superposition of eigenmodes)
in terms of both its period and amplitude. We can observe this phenomenon in the bottom panels of Fig.~\ref{fig8}, where $\delta = 2$.

For a more systematic study of the perturbation in $\delta$, we seek time-periodic solutions to equation~(\ref{eqn1_gamma}) with frequency $\omega$ and use the new time variable $\tau=\omega t$, making equation~(\ref{eqn1_gamma}) look like:
\begin{eqnarray}
\label{eqn_expand_0}
m_j \omega^2 \frac{\partial^2}{\partial \tau^2}z_j=k_j (z_{j-1}-z_{j}+d_j)_+^p - k_{j+1} (z_{j}-z_{j+1}+d_{j+1})_+^p-\gamma_j z_j
\end{eqnarray}
Then we consider the expansions of the frequency $\omega=\omega^{(1)}+\delta\omega^{(2)}+\delta^2\omega^{(3)}+O(\delta^3)$ and the solution $z_j=\delta z_j^{(1)}+\delta^2 z_j^{(2)}+\delta^3 z_j^{(3)}+O(\delta^4)$ in small parameter $\delta$. Substitute these expressions into equation~(\ref{eqn_expand_0}) and combining terms with the same order to obtain:
\begin{eqnarray}
\label{eqn_expand_1}
m_j (\omega^{(1)})^2 \frac{\partial^2}{\partial \tau^2}z_j^{(1)}=K_j (z_{j-1}^{(1)}-z_j^{(1)}) - K_{j+1} (z_{j}^{(1)}-z_{j+1}^{(1)})-\gamma_j z_j^{(1)} ,\\
\label{eqn_expand_2}
\begin{split}
m_j ( (\omega^{(1)})^2 \frac{\partial^2}{\partial \tau^2}z_j^{(2)} +2 \omega^{(1)} \omega^{(2)} \frac{\partial^2}{\partial \tau^2}z_j^{(1)} ) &=& K_j (z_{j-1}^{(2)}-z_j^{(2)}) - K_{j+1} (z_{j}^{(2)}-z_{j+1}^{(2)})-\gamma_j z_j^{(2)} \\
&  &+ Q_j (z_{j-1}^{(1)}-z_j^{(1)})^2- Q_{j+1} (z_{j}^{(1)}-z_{j+1}^{(1)})^2
\end{split}
\end{eqnarray}
where $Q_j=\frac{1}{2}p(p-1)k_j d_j^{p-2}$. The limiting frequency $\omega^{(1)}$ in equation~(\ref{eqn_expand_1}) is what we have computed earlier by solving equation~(\ref{eqn1_ln_matrix}). Here we write $\boldsymbol{z}^{(1)}=\boldsymbol{Z}^{(1)}e^{i\tau}+\overline{\boldsymbol{Z}^{(1)}}e^{-i\tau}$ with $\boldsymbol{Z}^{(1)}$ being the eigenmode for $\omega^{(1)}$ in equation~(\ref{eqn1_gamma}). 

Being interested in the leading-order corrections $\omega^{(2)}$ and $z_j^{(2)}$, we rewrite equation~(\ref{eqn_expand_2}) as follows:
\begin{eqnarray}
\label{eqn_expand_2_2}
\begin{split}
&  &-m_j (\omega^{(1)})^2 \frac{\partial^2}{\partial \tau^2}z_j^{(2)} +K_j (z_{j-1}^{(2)}-z_j^{(2)}) - K_{j+1} (z_{j}^{(2)}-z_{j+1}^{(2)})-\gamma_j z_j^{(2)} \\
&=&2 \omega^{(1)} \omega^{(2)} m_j \frac{\partial^2}{\partial \tau^2}z_j^{(1)} - [ Q_j (z_{j-1}^{(1)}-z_j^{(1)})^2- Q_{j+1} (z_{j}^{(1)}-z_{j+1}^{(1)})^2]
\end{split}
\end{eqnarray}
where the right hand side has $e^{i\tau}$ and $e^{2i\tau}$ terms. By projecting equation~(\ref{eqn_expand_2_2}) to the span of $\boldsymbol{z}^{(1)}$, we notice $2 \omega^{(1)} \omega^{(2)} \langle  \boldsymbol{z}^{(1)}, M \boldsymbol{z}^{(1)} \rangle = 0$ hence $\omega^{(2)}=0$.
Using the ansatz $\boldsymbol{z}^{(2)}=\boldsymbol{Z}^{(2)}e^{2i\tau}+\overline{\boldsymbol{Z}^{(2)}}e^{-2i\tau}+\boldsymbol{Y}^{(2)}$, we obtain
\begin{eqnarray}
\label{eqn_expand_2_3}
m_j \omega_1^2 4 Z_j^{(2)} +K_j (Z_{j-1}^{(2)}-Z_j^{(2)}) - K_{j+1} (Z_{j}^{(2)}-Z_{j+1}^{(2)})-\gamma_j Z_j^{(2)} 
&=&-Q_j (Z_{j-1}^{(1)}-Z_j^{(1)})^2 + Q_{j+1} (Z_{j}^{(1)}-Z_{j+1}^{(1)})^2 \\
\label{eqn_expand_2_4}
K_j (Y_{j-1}^{(2)}-Y_j^{(2)}) - K_{j+1} (Y_{j}^{(2)}-Y_{j+1}^{(2)})-\gamma_j Y_j^{(2)} 
&=&-2Q_j |Z_{j-1}^{(1)}-Z_j^{(1)}|^2 + 2Q_{j+1} |Z_{j}^{(1)}-Z_{j+1}^{(1)}|^2
\end{eqnarray}
where $\boldsymbol{Z}^{(2)}$ and $\boldsymbol{Y}^{(2)}$  can be solved from equation~(\ref{eqn_expand_2_3}) and equation~(\ref{eqn_expand_2_4}), respectively. Moreover, assuming $\boldsymbol{z}^{(3)}=\boldsymbol{Z}^{(3)}e^{3i\tau}+\overline{\boldsymbol{Z}^{(3)}}e^{-3i\tau}+\boldsymbol{Y}^{(1)}e^{i\tau}+\overline{\boldsymbol{Y}^{(1)}}e^{-i\tau}$ and looking at the $e^{i\tau}$ terms of order $O(\delta^3)$, we have the following equation 
\begin{eqnarray}
\label{eqn_expand_2_5}
\begin{split}
2\omega^{(1)} \omega^{(3)} m_j Z_j^{(1)}&=& -m_j (\omega^{(1)})^2 Y_j^{(1)} -K_j (Y_{j-1}^{(1)}-Y_j^{(1)}) + K_{j+1} (Y_{j}^{(1)}-Y_{j+1}^{(1)})-\gamma_j Y_j^{(1)} \\
&  &- 2 Q_j[ (Z_{j-1}^{(1)}-Z_j^{(1)})(Y_{j-1}^{(2)}-Y_{j}^{(2)}) + (\overline{Z_{j-1}^{(1)}}-\overline{Z_j^{(1)}})(Z_{j-1}^{(2)}-Z_{j}^{(2)}) ] \\
&  &+2 Q_{j+1}[ (Z_{j}^{(1)}-Z_{j+1}^{(1)})(Y_{j}^{(2)}-Y_{j+1}^{(2)}) + (\overline{Z_{j}^{(1)}}-\overline{Z_{j+1}^{(1)}})(Z_{j}^{(2)}-Z_{j+1}^{(2)}) ]\\
&  &-3 P_j  (Z_{j-1}^{(1)}-Z_j^{(1)})|Z_{j-1}^{(1)}-Z_{j}^{(1)}|^2 +3\delta P_{j+1} (Z_{j}^{(1)}-Z_{j+1}^{(1)})|Z_{j}^{(1)}-Z_{j+1}^{(1)}|^2
\end{split}
\end{eqnarray}

where $P_j=\frac{1}{6}p(p-1)(p-2)k_j d_j^{p-3}$.  After projecting equation~(\ref{eqn_expand_2_5}) to the span of $\boldsymbol{Z}^{(1)}$, the terms containing $\boldsymbol{Y}^{(1)}$ will vanish and $\omega^{(3)}$ will be obtained. Since $\omega^{(3)}$ depends on the eigenmode $\boldsymbol{z}^{(1)}$ for $\omega^{(1)}$ and isospectral systems usually have different eigenmodes, the leading-order correction to the eigenfrequency varies in isospectral systems, as illustrated in the left panel of Fig.~\ref{fig_omega3}. This feature quantitatively illustrates
the role of the nonlinearity in leading to deviations between the
frequencies of linearly isospectral systems.

Complementing this theoretical analysis, we numerically computed the time-periodic solutions to equation~(\ref{eqn1_gamma}) as continuations of linear eigenmodes in parameter $\delta$ (or the amplitude of the solution). In pratical, we solved solutions with period $2\pi$ to equation~(\ref{eqn_expand_0}) on a fixed domain $[0,2\pi)$. 
To validate the leading-order approximations of the frequencies for small $\delta$, we compared them with the frequencies of the numerical time-periodic solutions and found a good agreement between them (see the middle and right panels of Fig.~\ref{fig_omega3}). 

\begin{figure}[!htbp]
\centering
\begin{tabular}{ccc}
\includegraphics[width=5cm]{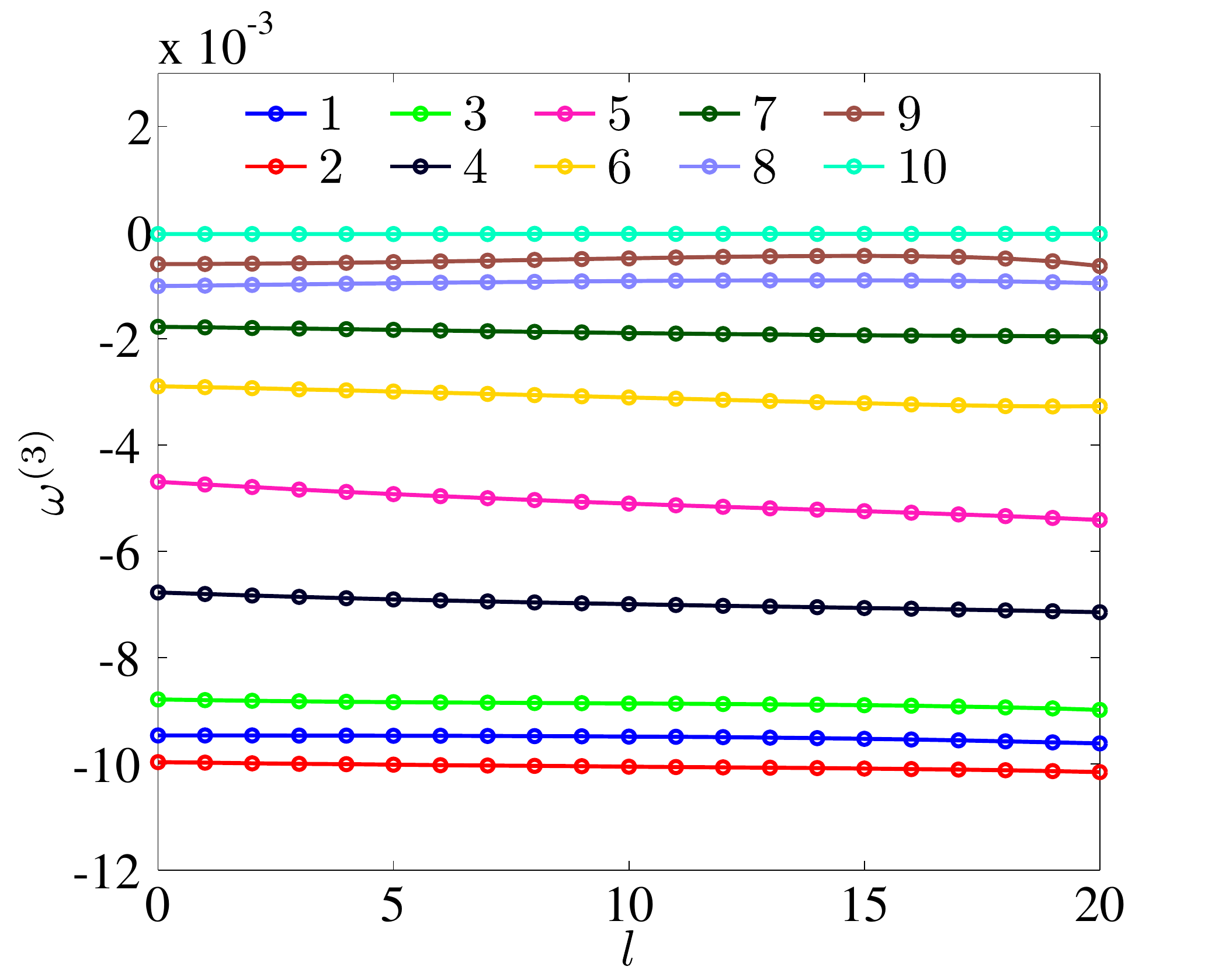}
\includegraphics[width=5cm]{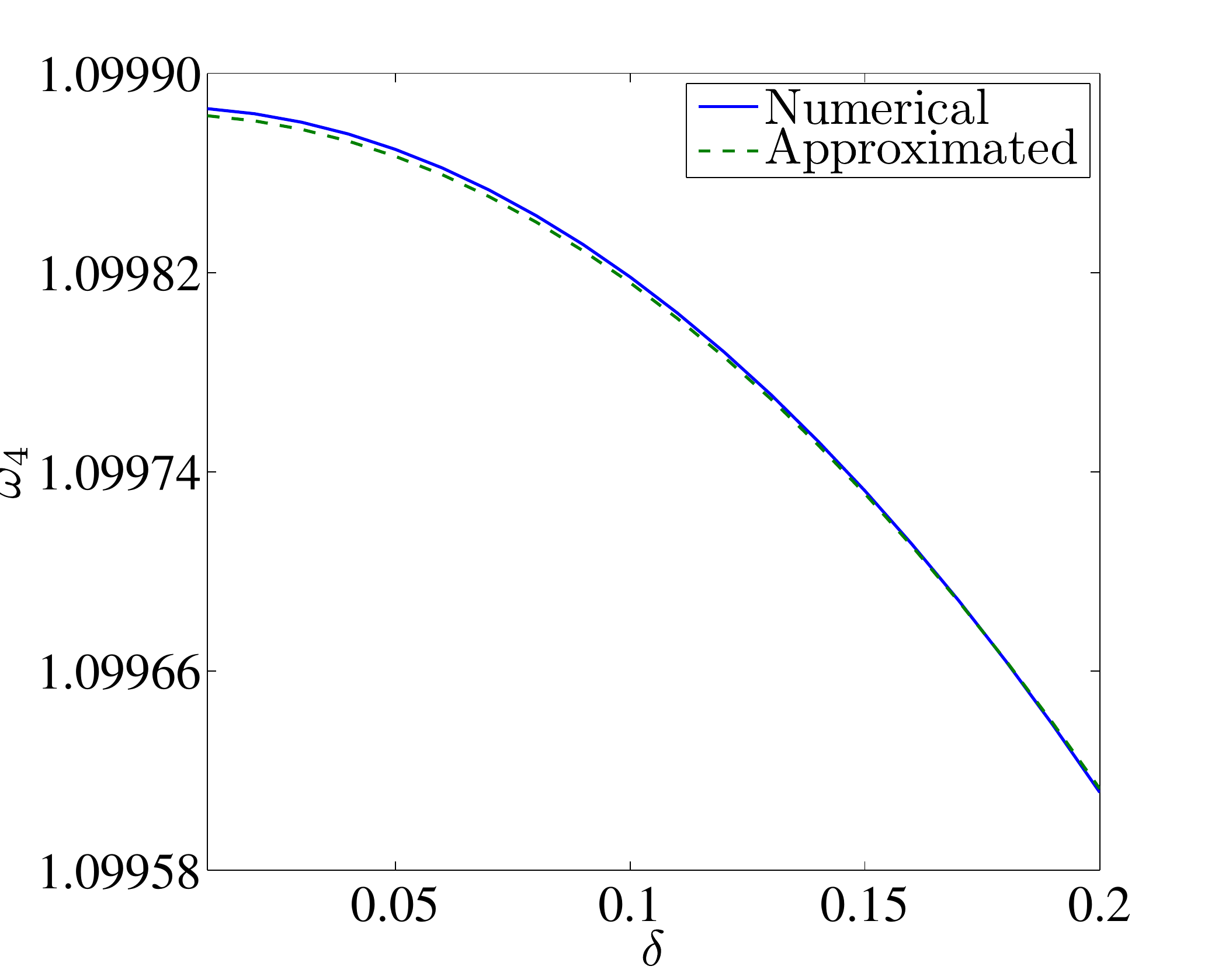}
\includegraphics[width=5cm]{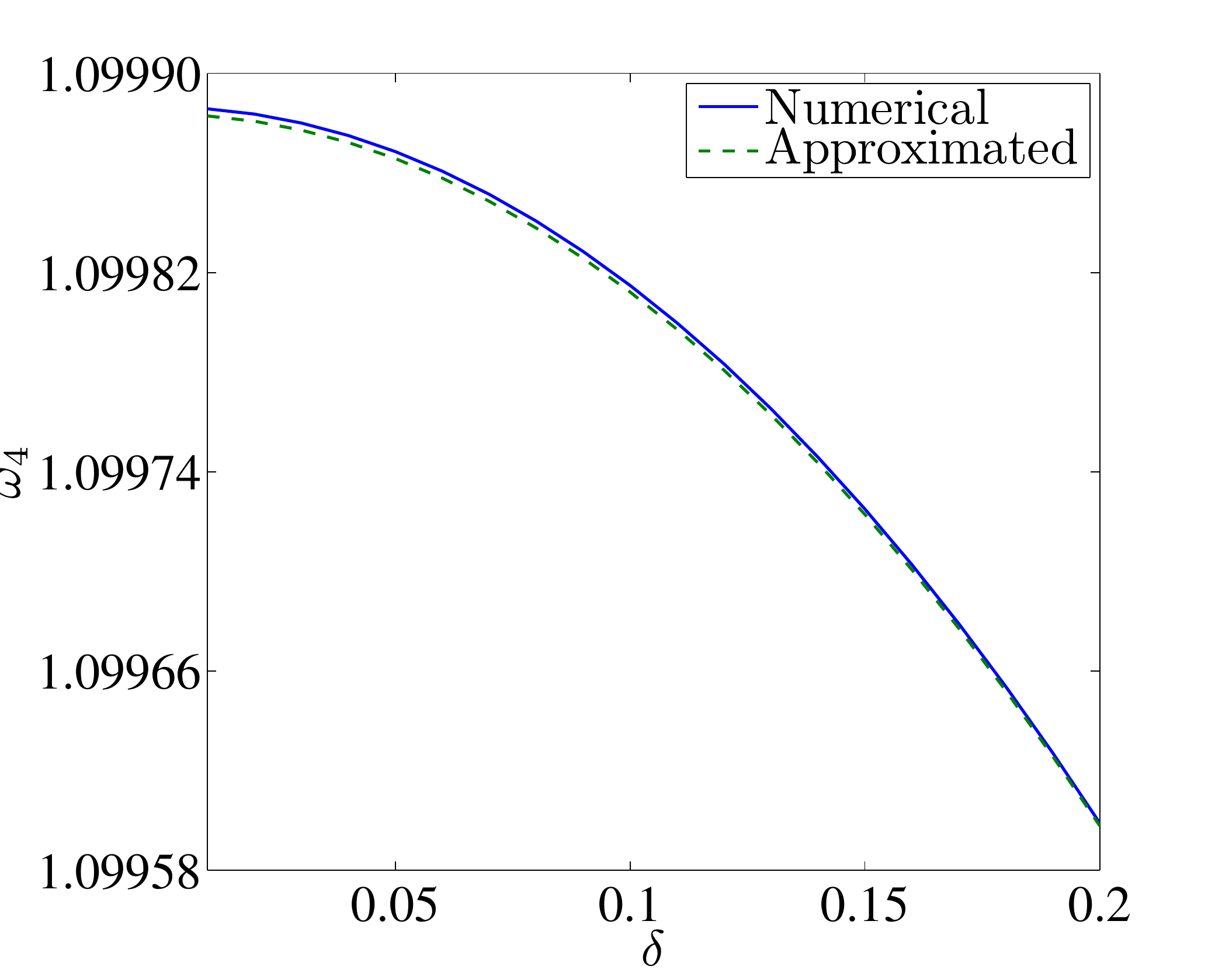}
\end{tabular}
\caption{In the left panel, we illustrate $\omega^{(3)}$, leading-order coefficients, for the frequencies $\omega^{(1)}$ of different isospectral systems in Fig.~\ref{fig2} that are indexed by $l$. In particular, $l=0$ and $l=20$ correspond to granular chains (NS-1) and (NS-2), respectively. The middle and right panels show the changes of the fourth frequency $\omega_4$ (where $\omega_4^{(0)}=1.0999$) versus the growth of $\delta$ for (NS-1) and (NS-2), respectively. The solid line is the frequency of the numerically computed time-periodic solution while the dashed line is the $O(\delta^2)$ approximation of the frequency using $\omega^{(3)}_4$. }
\label{fig_omega3}
\end{figure}

An intriguing observation is that in all the isospectral systems we considered here, the leading-order corrections to the frequencies are nonpositive, i.e., $\omega^{(3)}_j\leq 0$ for $1\leq j\leq n$. Though it is not obvious from the expression of $\omega^{(3)}$, the decreasing of eigenfrequencies has been verified by following the continuation of the eigenmodes in parameter $\delta$, as revealed in Fig.~\ref{fig_omega_numerical}. We note that this is a typical scenario among granular chains since similar features are observed in other granular chains~\cite{Narisetti, Cabaret}.
We attribute this feature to the effectively {\it self-defocusing}
nature of the nonlinearity
(see also, e.g., the relevant asymptotic calculation of~\cite{chong2}
in the simpler case of homogeneous granular crystals).
On the other hand, though the natural frequencies of isospectral granular chains in Fig.~\ref{fig_omega_numerical} tend to decrease differently, breaking down the equivalence inherited from the linear limit, there can also be quantitative similarities in the change of their frequencies. For example, while the isospectral systems in Fig.~\ref{fig2} are quite different in parameters, their changes in eigenfrequencies differ slightly
before $\delta\sim O(1)$ (see Fig.~\ref{fig_omega3} and Fig.~\ref{fig_omega_numerical}). 
Due to the implicit relationship between the parameters of the isospectal map and the eigenmodes of isospectral systems, exploring the possibility of enhancing the connection between isospectral systems in the nonlinear
regime remains a substantial, yet important challenge
worth considering in the future.

%

%
\begin{figure}[!htbp]
\centering
\begin{tabular}{ccc}
\includegraphics[width=5cm]{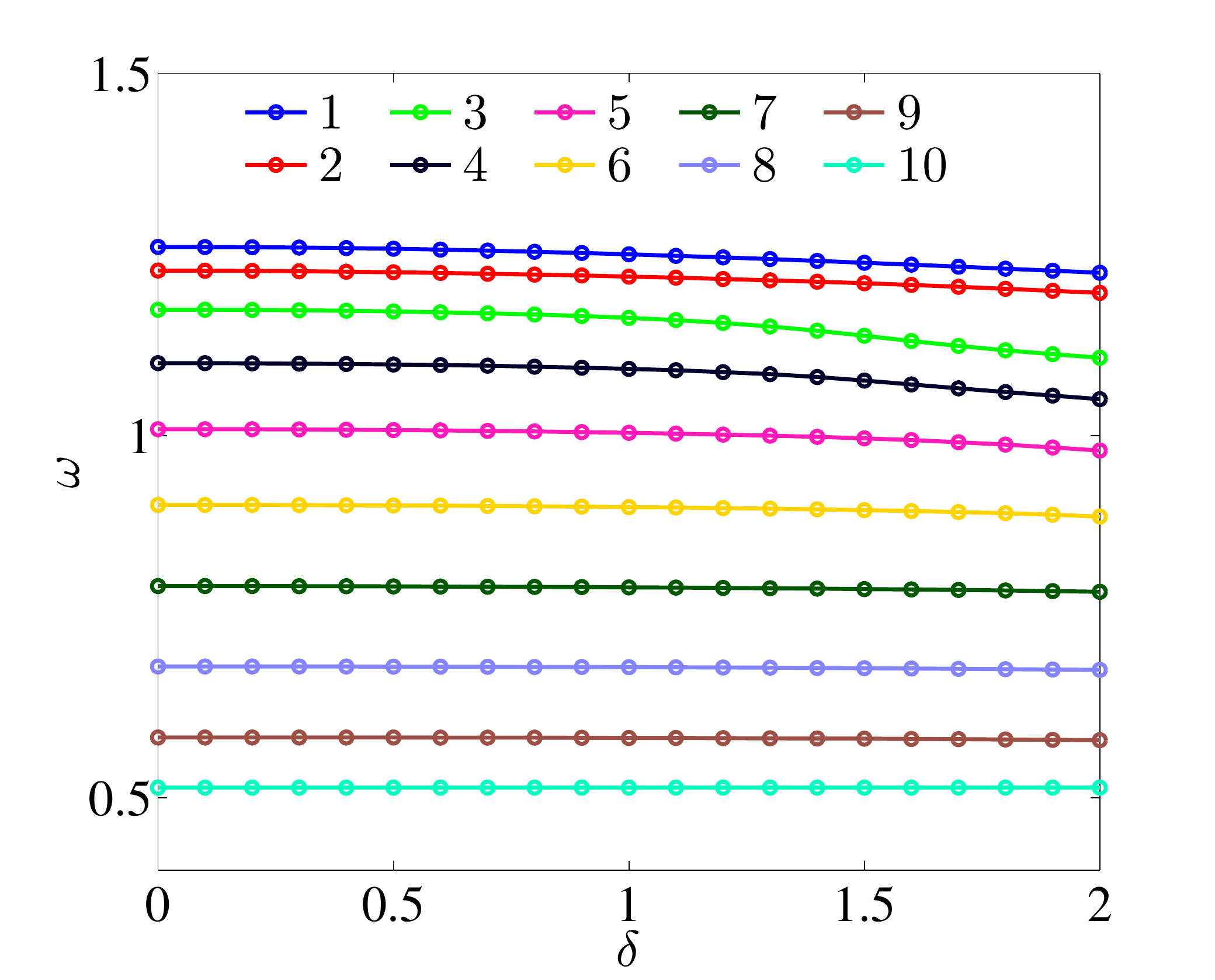}
\includegraphics[width=5cm]{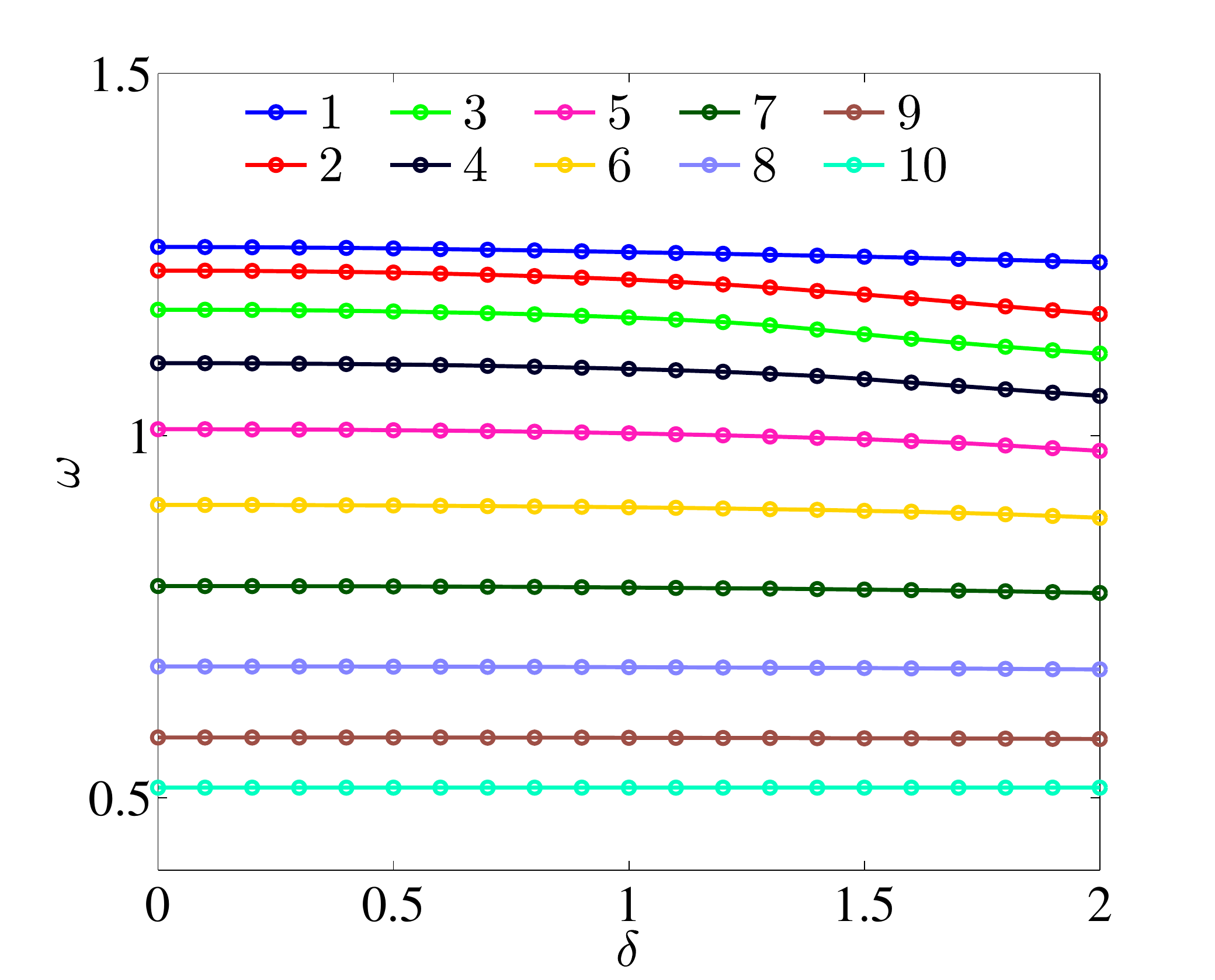}
\includegraphics[width=5cm]{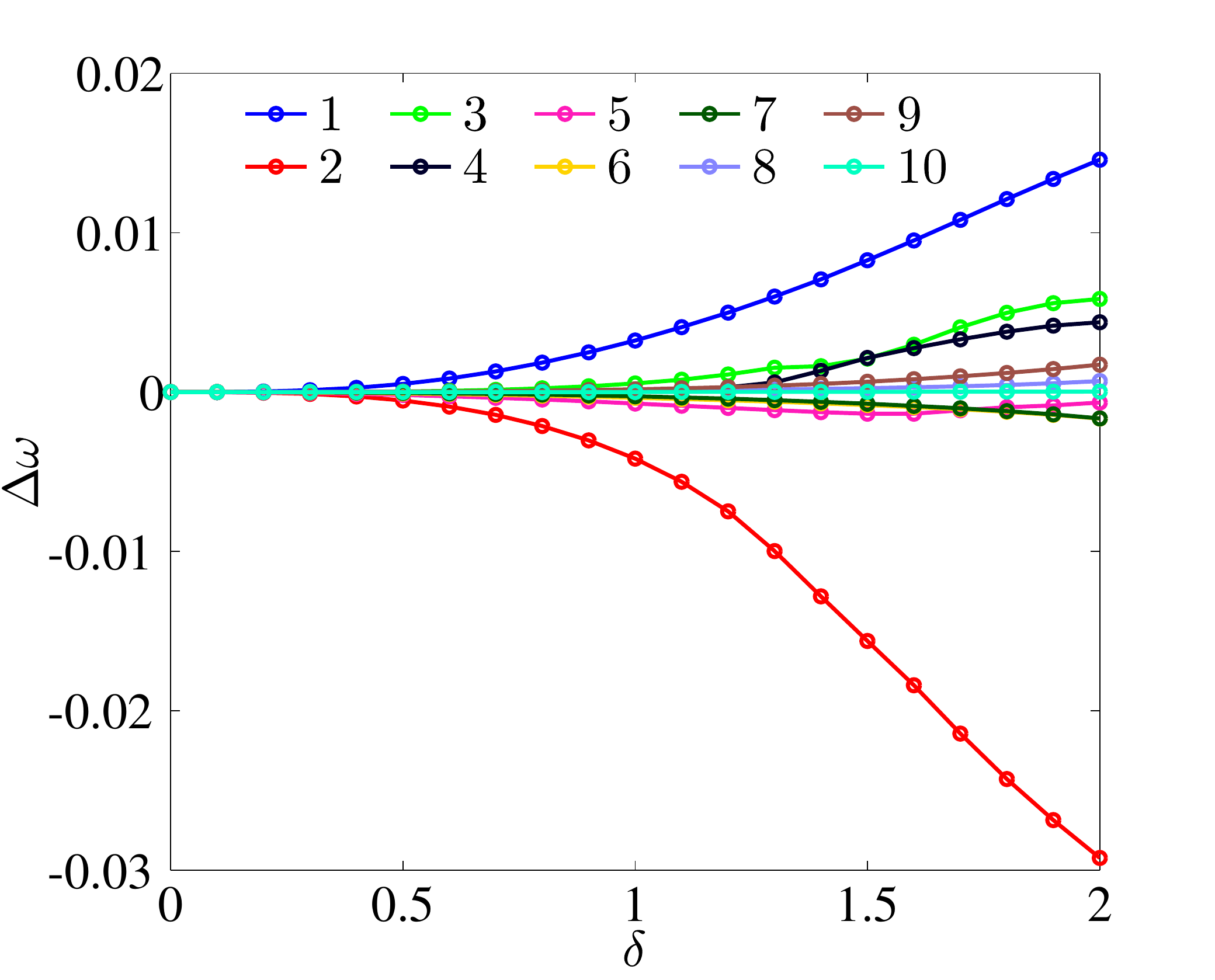}
\end{tabular}
\caption{Here we first numerically find solutions to equation~(\ref{eqn_expand_0}) with different $\omega_j$ and $\delta\in [0,2]$.   Then the left panel shows the change of each frequency $\omega$ in the granular chain (NS-1) (middle panel for (NS-2)) as the amplitude of the solution is increased. We observe that all the frequencies for both systems tend to decrease as $\delta$ increases. In the right panel, we draw the differences between the frequencies of (NS-2) and (NS-1), i.e., $\Delta \omega =\omega|_{(NS-2)}-\omega|_{(NS-1)}$, over the growth of $\delta$. }
\label{fig_omega_numerical}
\end{figure}

\section{Conclusions \& Future Challenges}
In this study, we explored the dynamics of granular chains that are isospectral in their linear limit. Such systems have shown twofold advantages. First, we achieve  considerable freedom in constructing isospectral systems by relying on lumped mass models. Second, we can easily introduce nonlinearity in the system in order to study its effect (such as frequency softening) on the system dynamics. We discussed various spectral transformation schemes in this context, however, we found that the continuous transformation scheme ensures better control over system parameters as opposed to popular factorization techniques such as Cholesky and QR decomposition used in supersymmetric frameworks. We enhanced this spectral transformational scheme by adding the functionality of adding/removing eigenfrequencies of choice. We showed that the entire framework thus allows us to tailor the spectrum details, and construct broad classes of isospectral physical systems, ranging from ordered to seemingly disoredered granular chains.

In terms of the dynamics, we observed that transmission gain is preserved under the spectral transformation scheme we used. For example, an ordered dimer granular chain showed similar transmission characteristics as that of a disordered granular chain isospectral to the former. This indicates new pathways of constructing disordered systems preserving Bragg band-gaps, which are formed due to periodicity in the medium. As the strength of nonlinearity grows, either the amplitude of the solution increases or the precompression decreases, the linear spectrum will be distorted and isospectral systems can lead to quite different dynamics. In particular, a typical scenario is that the natural frequencies of different isospectral granular chains will decrease differently when the amplitude of the solution increases. However, the symmetry of being isospectral may still be of value (and produce quantitatively proximal variations)
for small amplitude excitations.

Given the smooth transitions of the granular system's parameters calculated from the continuous isospectral flow method, we expect to realize the suggested isospectral systems in future experiments. On the theoretical side, 
it is relevant to develop quantitative diagnostics for identifying  how nonlinearity effects lead to a deviation from the identical spectra of linearly isospectral systems. Furthermore, a deeper (and more difficult) question would be whether there exist some analogies or generalizations of these notions for genuinely nonlinear systems, perhaps borrowing ideas from the theory of inverse scattering transforms and associated Lax pair problems. The outcome of these future studies will be reported in the authors' upcoming publications. \\

{\it Acknowledgements.}\\
J.Y. and P.G.K. acknowledge support from US-ARO under grant (W911NF-15-1-0604). \\

\vspace{1cm}
\appendix

\section*{Appendix: Proofs of Remarks and Lemmas}

Proof of Remark \ref{remark_mass_spring_1}:

\begin{proof}
The first two conditions are straightforward to show and we will only prove (A3).

In order to prove the positive definiteness of $H$, it suffices to show $B=G H G$ is positive definite. Thanks to the explicit expression of $B$, we observe that $B$ is a symmetric diagonally dominant real matrix with nonnegative diagonal entries, which implies that $B$ is positive semidefinite. On the other hand, it can be shown by induction that $\det(B)=(\prod_{j=1}^{n+1}K_j)(\sum_{i=1}^{n+1}\frac{1}{K_i})>0$. As a result, $B$ is positive semidefinite and it does not have a
zero eigenvalue thus it must be positive definite.
\end{proof}


Proof of Remark \ref{remark_mass_spring_2}:

\begin{proof}
For a spring-mass system with masses $(\tilde{m}_1,\tilde{m}_2,...,\tilde{m}_n)$ and spring constants $(\tilde{K}_1,\tilde{K}_2,...,\tilde{K}_{n+1})$, its stiffness matrix $\tilde{B}$ satisfies
\begin{eqnarray*}
\tilde{B}\left(
    \begin{array}{c}
        1\\
        1\\
        ...\\
        1\\
        1
    \end{array}
\right)=\left(
    \begin{array}{c}
        \tilde{K}_1\\
        0\\
        ...\\
        0\\
        \tilde{K}_{n+1}
    \end{array}
\right).
\end{eqnarray*}
Let $\tilde{G}=\tilde{M}^{1/2}$ and $\tilde{\boldsymbol{g}}=(\tilde{g}_1, \tilde{g}_2,...,\tilde{g}_n)^T=\tilde{G}(1,1,...,1)^T$ where $\tilde{M}$ is the mass matrix. If the spring-mass system has $\tilde{H}$ as its associated matrix, then $\tilde{B}=\tilde{G}\tilde{H}\tilde{G}$ and
\begin{eqnarray}
\label{linear_system}
\tilde{H}\left(
    \begin{array}{c}
        \tilde{g}_1\\
        \tilde{g}_2\\
        ...\\
        \tilde{g}_{n-1}\\
        \tilde{g}_n
    \end{array}
\right)=\tilde{H}\tilde{G}\left(
    \begin{array}{c}
        1\\
        1\\
        ...\\
        1\\
        1
    \end{array}
\right)=\tilde{G}^{-1}\left(
    \begin{array}{c}
        \tilde{K}_1\\
        0\\
        ...\\
        0\\
        \tilde{K}_{n+1}
    \end{array}
\right)=\left(
    \begin{array}{c}
        \frac{\tilde{K}_1}{\tilde{g}_1}\\
        0\\
        ...\\
        0\\
        \frac{\tilde{K}_{n+1}}{\tilde{g}_n}
    \end{array}
\right).
\end{eqnarray}
Since $\tilde{H}$ is positive definite, there exists a unique $\tilde{\boldsymbol{g}}=\tilde{H}^{-1}(\alpha,0,...,0,\beta)^T$ for any $\alpha>0$ and $\beta>0$. It can be shown that $\tilde{g}$ will be a positive vector as long as $\alpha$ and $\beta$ are positive. Then the masses and the spring constants of the system can be computed from $\tilde{M}=\tilde{G}^2$ and $\tilde{B}=\tilde{G}\tilde{H}\tilde{G}$.

On the positivity of $\tilde{g}$: Since all off-diagonal entries of $\tilde{H}$ are negative and every eigenvalue of $\tilde{H}$ is positive, $\tilde{H}$ is an invertible M-matrix. By the properties of M-matrix, the entries of $\tilde{H}^{-1}$ are all nonnegative. Thus $\tilde{g}=\tilde{H}^{-1}(\alpha,0,...,0,\beta)^T$ will always be a nonnegative vector. In fact, if both $\alpha$ and $\beta$ are positive, we can show $\tilde{g}$ can not have any zero entry due to the fact that all off-diagonal entries of $\tilde{H}$ are negative. By contradiction, if $\tilde{g}_j=0$ for $1<j<n$, then $\tilde{H}(j,j-1)\tilde{g}_{j-1}+\tilde{H}(j,j)\tilde{g}_j+\tilde{H}(j,j+1)\tilde{g}_{j+1}=\tilde{H}(j,j-1)\tilde{g}_{j-1}+\tilde{H}(j,j+1)\tilde{g}_{j+1}=0$ yields $\tilde{g}_{j-1}=\tilde{g}_{j+1}=0$ since $\tilde{g}$ is nonnegative and subdiagonals of $\tilde{H}$ are negative. If $\tilde{g}_1=0$ or $\tilde{g}_n=0$, we have $\alpha=\tilde{H}(1,2)\tilde{g}_2\leq 0 <\alpha$ or $\beta=\tilde{H}(n,n-1)\tilde{g}_{n-1}\leq 0<\beta$, which is impossible. Therefore, we can always obtain a positive solution for $\tilde{g}$ given $\alpha>0$ and $\beta>0$.
\end{proof}


Proof of Lemma \ref{lemma_det}:

\begin{proof}
By direct calculation, we can show $\frac{\partial (\boldsymbol{f}, \boldsymbol{g})}{\partial (\tilde{\boldsymbol{r}}, \tilde{\boldsymbol{\gamma}})}=\left(
    \begin{array}{cc}
        J_{11} & J_{12} \\
        J_{21} & J_{22}
    \end{array}
\right)$ where
\begin{itemize}
\item $J_{11}=\frac{\partial \boldsymbol{f}}{\partial \tilde{\boldsymbol{r}}}$ is an $n\times n$ tridiagonal matrix.
\item $J_{12}=\frac{\partial \boldsymbol{f}}{\partial \tilde{\boldsymbol{\gamma}}}$ is an $n\times (n-1)$ matrix. Its first row is zero while the rest $(n-1)$ rows form a diagonal matrix.
\item $J_{21}=\frac{\partial \boldsymbol{g}}{\partial \tilde{\boldsymbol{r}}}$ is an $(n-1)\times n$ bidiagonal matrix.
\item $J_{22}=\frac{\partial \boldsymbol{g}}{\partial \tilde{\boldsymbol{\gamma}}}=\boldsymbol{0}_{(n-1)\times(n-1)}$.
\end{itemize}
Thus $|\frac{\partial (\boldsymbol{f}, \boldsymbol{g})}{\partial (\tilde{\boldsymbol{r}}, \tilde{\boldsymbol{\gamma}})}|=\det\left(
    \begin{array}{cc}
        L_{11} & J_{12} \\
        J_{21} & J_{22}
    \end{array}
\right)$ where $L_{11}=\left(
    \begin{array}{ccccc}
        \frac{\partial f_1}{\partial \tilde{r}_1} & \frac{\partial f_1}{\partial \tilde{r}_2} & 0 & ... & 0 \\
        0 & 0 & 0 & ... & 0\\
        ... & ... & ... & ... & ...\\
        0 & 0 & 0 & ... & 0
    \end{array}
\right)$.
%
\begin{eqnarray}
\frac{\partial f_1}{\partial \tilde{r}_1}&:=& -\frac{3}{4\pi\rho} [ \frac{1}{3}W \frac{ \tilde{r}_1( \tilde{r}_{2}^{\frac{1}{3}}+8(\tilde{r}_1+\tilde{r}_2)^{\frac{4}{3}} + 8 \tilde{r}_{2}^{\frac{1}{3}} (\tilde{r}_1+\tilde{r}_2)  }{\tilde{r}_1^{\frac{11}{3}}(\tilde{r}_1+\tilde{r}_2)^{\frac{4}{3}}} +\frac{3{\gamma}_1}{\tilde{r}_1^4}] \\
%
\frac{\partial f_j}{\partial \tilde{r}_j}&:=& -\frac{3}{4\pi\rho} [ \frac{1}{3}W \frac{ \tilde{r}_j( \frac{\tilde{r}_{j-1}^{\frac{1}{3}}}{(\tilde{r}_j+\tilde{r}_{j-1})^{\frac{4}{3}}} + \frac{\tilde{r}_{j+1}^{\frac{1}{3}}}{(\tilde{r}_j+\tilde{r}_{j+1})^{\frac{4}{3}}} ) + 8 ( \frac{\tilde{r}_{j-1}^{\frac{1}{3}}}{(\tilde{r}_j+\tilde{r}_{j-1})^{\frac{1}{3}}} + \frac{\tilde{r}_{j+1}^{\frac{1}{3}}}{(\tilde{r}_j+\tilde{r}_{j+1})^{\frac{1}{3}}} ) }{\tilde{r}_j^{\frac{11}{3}}} +\frac{3\tilde{\gamma}_j}{\tilde{r}_j^4}], \quad 2\leq j\leq n-1 \\
\frac{\partial f_n}{\partial \tilde{r}_n}&:=& -\frac{3}{4\pi\rho} [ \frac{1}{3}W \frac{ \tilde{r}_n( \tilde{r}_{n-1}^{\frac{1}{3}}+8(\tilde{r}_{n}+\tilde{r}_{n-1})^{\frac{4}{3}} + 8 \tilde{r}_{n-1}^{\frac{1}{3}} (\tilde{r}_n+\tilde{r}_{n-1}) }{\tilde{r}_{n}^{\frac{11}{3}}(\tilde{r}_n+\tilde{r}_{n-1})^{\frac{4}{3}}} +\frac{3\tilde{\gamma}_n}{\tilde{r}_n^4}] \\
\frac{\partial f_j}{\partial \tilde{r}_{j-1}}&:=& \frac{3}{4\pi\rho}  \frac{1}{3}W \frac{1}{\tilde{r}_{j}^{\frac{5}{3}}{\tilde{r}_{j-1}^{\frac{2}{3}}(\tilde{r}_j+\tilde{r}_{j-1})^{\frac{4}{3}}} }, \quad 2\leq j\leq n\\
\frac{\partial f_j}{\partial \tilde{r}_{j+1}}&:=& \frac{3}{4\pi\rho}  \frac{1}{3}W \frac{1}{\tilde{r}_{j}^{\frac{5}{3}}{\tilde{r}_{j+1}^{\frac{2}{3}}(\tilde{r}_j+\tilde{r}_{j+1})^{\frac{4}{3}}} }, \quad 1\leq j\leq n-1\\
\frac{\partial f_j}{\partial \tilde{\gamma}_{j}}&:=& -\frac{3}{4\pi\rho \tilde{r}_j^3 }, \quad 2\leq j\leq n\\
\frac{\partial g_j}{\partial \tilde{r}_{j}}&:=& \frac{3}{4\pi\rho}  \frac{1}{3}W \frac{9\tilde{r}_j+7\tilde{r}_{j+1}}{6\tilde{r}_{j}^{\frac{13}{6}}{\tilde{r}_{j+1}^{\frac{7}{6}}(\tilde{r}_j+\tilde{r}_{j+1})^{\frac{4}{3}}} }, \quad 1\leq j\leq n-1\\
\frac{\partial g_j}{\partial \tilde{r}_{j+1}}&:=& \frac{3}{4\pi\rho}  \frac{1}{3}W \frac{7\tilde{r}_j+9\tilde{r}_{j+1}}{6\tilde{r}_{j+1}^{\frac{13}{6}}{\tilde{r}_{j}^{\frac{7}{6}}(\tilde{r}_j+\tilde{r}_{j+1})^{\frac{4}{3}}} }, \quad 1\leq j\leq n-1\\
\end{eqnarray}
If $(\tilde{r}_1,\tilde{r}_2,...,\tilde{r}_n)$ are positive and $\gamma_1\geq 0$, then $\frac{\partial f_1}{\partial \tilde{r}_1}<0$, $\frac{\partial f_1}{\partial \tilde{r}_{2}}>0$, $\frac{\partial f_j}{\partial \tilde{\gamma}_j}<0$ for $2\leq j\leq n$, $\frac{\partial g_j}{\partial \tilde{r}_j}>0$ and $\frac{\partial g_j}{\partial \tilde{r}_{j+1}}>0$ for $1\leq j\leq n-1$. Since $\det\left(
    \begin{array}{cc}
        \frac{\partial f_1}{\partial \tilde{r}_1} & \frac{\partial f_1}{\partial \tilde{r}_2}\\
        \frac{\partial g_1}{\partial \tilde{r}_1} & \frac{\partial g_1}{\partial \tilde{r}_2}
    \end{array}
\right)<0$, it can be checked that $|\frac{\partial (\boldsymbol{f}, \boldsymbol{g})}{\partial (\tilde{\boldsymbol{r}}, \tilde{\boldsymbol{\gamma}})}|=\det\left(
    \begin{array}{cc}
        L_{11} & J_{12} \\
        J_{21} & J_{22}
    \end{array}
\right)\neq 0$.
\end{proof}

\vspace{1cm}

\end{document}
%